\documentclass[conference]{IEEEtran}
\IEEEoverridecommandlockouts
% The preceding line is only needed to identify funding in the first footnote. If that is unneeded, please comment it out.
\usepackage{cite}
\usepackage{amsmath,amssymb,amsfonts}
\usepackage{BOONDOX-cal}
\usepackage{mathrsfs}
\usepackage{bm}
\usepackage{soul}
\usepackage{algorithm}
\usepackage{algorithmic}
\usepackage{multirow}
\usepackage{verbatimbox}
\usepackage{booktabs}
\usepackage{graphicx}
\usepackage{subfigure}
\usepackage{siunitx}
\usepackage{empheq}
\usepackage{amsthm}
\sisetup{round-mode=places,round-precision=3}

\usepackage{graphicx}
\usepackage{textcomp}
\usepackage{xcolor}
\usepackage[hidelinks]{hyperref}
\def\BibTeX{{\rm B\kern-.05em{\sc i\kern-.025em b}\kern-.08em
    T\kern-.1667em\lower.7ex\hbox{E}\kern-.125emX}}
\newtheorem{theorem}{Theorem}
\newtheorem{corollary}[theorem]{Corollary}
\newtheorem{lemma}[theorem]{Lemma}
\newtheorem{definition}{Definition}
\newtheorem{example}{Example}

\newtheorem{proposition}[theorem]{Proposition}
\newtheorem{assumption}{Assumption}

\begin{document}

\title{Robust Causal Learning for the Estimation of Average Treatment Effects
%{\footnotesize \textsuperscript{*}Note: Sub-titles are not captured in Xplore and
%should not be used}
%\thanks{Identify applicable funding agency here. If none, delete this.}
}

%\author{\IEEEauthorblockN{Anonymous Authors}}
\author{\IEEEauthorblockN{Yiyan Huang\IEEEauthorrefmark{1}}
	\IEEEauthorblockA{\textit{School of Data Science}\\
		\textit{The City University of Hong Kong}\\
		yiyhuang3-c@my.cityu.edu.hk}\\
			\IEEEauthorblockN{Xing Yan}
	\IEEEauthorblockA{\textit{Institute of Statistics and Big Data}\\
		\textit{Renmin University of China}\\
		xingyan@ruc.edu.cn}\\
	\IEEEauthorblockN{Dongdong Wang}
	\IEEEauthorblockA{\textit{JD Digits}\\
		wangdongdong9@jd.com}
	\and
	\IEEEauthorblockN{Cheuk Hang Leung\IEEEauthorrefmark{1}}
	\IEEEauthorblockA{\textit{School of Data Science}\\
		\textit{The City University of Hong Kong}\\
		chleung87@cityu.edu.hk}\\
				\IEEEauthorblockN{Shumin Ma \IEEEauthorrefmark{3}}
	\IEEEauthorblockA{\textit{Division of Science and Technology}\\
		\textit{BNU-HKBU United International College}\\
		shuminma@uic.edu.cn}\\
		\IEEEauthorblockN{Zhixiang Huang}
	\IEEEauthorblockA{\textit{JD Digits}\\
		huangzhixiang@jd.com}
		\and
	\IEEEauthorblockN{Qi Wu \IEEEauthorrefmark{2}}
	\IEEEauthorblockA{\textit{School of Data Science}\\
		\textit{The City University of Hong Kong}\\
		qiwu55@cityu.edu.hk}\\
	\IEEEauthorblockN{Zhiri Yuan}
\IEEEauthorblockA{\textit{JD Digits-CityU Joint Lab}\\
	\textit{The City University of Hong Kong}\\
	yuanzhiri2012@gmail.com}
	\thanks{\IEEEauthorrefmark{1} Co-first authors are in alphabetical order.
		}
	\thanks{\IEEEauthorrefmark{2} Qi Wu is the corresponding author.
		}
	\thanks{\IEEEauthorrefmark{3} Shumin Ma is also with Guangdong Provincial Key Laboratory of Interdisciplinary Research and Application for Data Science, BNU-HKBU United International College.
}
}
%\author{\IEEEauthorblockN{Yiyan Huang\IEEEauthorrefmark{1}\IEEEauthorrefmark{8}, Cheuk Hang Leung\IEEEauthorrefmark{1}\IEEEauthorrefmark{8}, Qi Wu\IEEEauthorrefmark{1}\IEEEauthorrefmark{2},\\
%		Xing Yan\IEEEauthorrefmark{3}, Shumin Ma\IEEEauthorrefmark{4}, Zhiri Yuan\IEEEauthorrefmark{5}, Dongdong Wang\IEEEauthorrefmark{7} and Zhixiang Huang\IEEEauthorrefmark{7}}
%	\IEEEauthorblockA{\IEEEauthorrefmark{1}School of Data Science\\
%		The City University of Hong Kong, Tat Che Venue, Kowloon Tong, Hong Kong
%		332--0250\\
%		Email: \{yiyanhuang3-c, chleung87, qiwu55\}@cityu.edu.hk}
%	\IEEEauthorblockA{\IEEEauthorrefmark{3}ISBD, Renmin University of China, Beijing, China\\
%		Email: xingyan@ruc.edu.cn}
%	\IEEEauthorblockA{\IEEEauthorrefmark{4}Guangdong Provincial Key Laboratory of Interdisciplinary Research and Application for Data Science,\\
%		BNU-HKBU United International College, Zhuhai 519087, China\\
%		Division of Science and Technology,\\
%		BNU-HKBU United International College, Zhuhai 519087, China\\
%		Email: shuminma@uic.edu.cn}
%	\IEEEauthorblockA{\IEEEauthorrefmark{4}JD digit Lab, The City University of Hong Kong, Tat Che Venue, Kowloon Tong, Hong Kong\\
%		Email: yuanzhiri2012@gmail.com}
%	\IEEEauthorblockA{\IEEEauthorrefmark{7}JD digits,\\
%		Email: \{wangdongdong9, huangzhixiang\}@jd.com}
%	\thanks{\IEEEauthorrefmark{8} Equal contribution.
%	}
%	\thanks{\IEEEauthorrefmark{2} Qi Wu is the corresponding author.
%	}
%}

\maketitle
\begin{abstract}
Many practical decision-making problems in economics and healthcare seek to estimate the average treatment effect (ATE) from observational data. The Double/Debiased Machine Learning (DML) is one of the prevalent methods to estimate ATE in the observational study. However, the DML estimators can suffer an error-compounding issue and even give an extreme estimate when the propensity scores are misspecified or very close to 0 or 1. Previous studies have overcome this issue through some empirical tricks such as propensity score trimming, yet none of the existing literature solves this problem from a theoretical standpoint. In this paper, we propose a Robust Causal Learning (RCL) method to offset the deficiencies of the DML estimators. Theoretically, the RCL estimators i) are as consistent and doubly robust as the DML estimators, and ii) can get rid of the error-compounding issue. Empirically, the comprehensive experiments show that i) the RCL estimators give more stable estimations of the causal parameters than the DML estimators, and ii) the RCL estimators outperform the traditional estimators and their variants when applying different machine learning models on both simulation and benchmark datasets. 
%The proposed RCL method can also be applied to establish prevalent treatment effect estimators other than ATE.
\end{abstract}

\begin{IEEEkeywords}
treatment effect estimation, causal inference, economics, healthcare
\end{IEEEkeywords}

\section{Introduction}\label{sec:introduction}
Causal inference is ubiquitous for decision-making problems in various areas such as Healthcare \cite{glass2013causal, hill2013assessing, alaa2017bayesian} and Economics \cite{belloni2014inference, farrell2015robust, chernozhukov2018double}. At the core of causal machine learning, estimating the average treatment effect (ATE) from observational data is challenging because some features (covariates) can influence both treatment and outcome in most practical circumstances. For example, factors such as regions and races (covariates) can affect both the vaccination (treatment) assignment and the post-vaccination infection rate (outcome). To obtain a clean ATE, one can conduct the Randomized Controlled Trials (RCTs). RCTs are regarded as the gold standard to evaluate ATE, whereas conducting RCTs is often expensive and time-consuming. As a result, more and more researchers tend to estimate ATE from observational data.
%For example, the severity of a patient's condition (confounder/covariate) may affect both the drug dose (treatment) and the outcome of the medication (outcome).

In the observational study, classical causal learning methods concerning ATE estimations mainly include regression adjustment methods and re-weighting methods (see more details in \cite{10.1145/3444944}). Regression adjustment methods require an estimated feature-outcome relation (aka the outcome model) and directly average the predicted potential outcomes over the whole population to estimate ATE, so the associated estimator is called the direct regression (DR) estimator. The conundrum of the DR estimator is that it overlooks the probabilistic impact of the covariates on the treatment assignment (i.e., the propensity score) and hence often results in biased estimations of ATE unless the outcome model is estimated accurately. Re-weighting methods mimic the principle of RCTs to make the re-weighted instances look like they receive alternative treatment. The Inverse Probability Weighting (IPW) is one of the prevalent re-weighting strategies. It involves the propensity scores rather than the outcome model. Nevertheless, the IPW estimator is sensitive to the estimation of propensity scores and even leads to high variance estimates. Such occasion often occurs when the estimated propensity scores are close to 0 or 1. This is called an \textit{error-compounding issue}.

The Debiased Machine Learning (DML) method, which is exploited by \cite{chernozhukov2018double} based on \cite{neyman1979c}, offsets the shortcomings of classical causal learning methods. The DML method combines the two classical approaches to ensure that the corresponding estimator is accurate as long as either the outcome model or the propensity score, but not necessarily both, is correctly specified (see \cite{robins2017minimax}, \cite{robins2008higher}, \cite{mukherjee2017semiparametric}, \cite{kang2007demystifying}, \cite{van2014higher} and the references therein). This notable merit is well known as the ``doubly robust'' property. However, the DML estimator can still suffer the error-compounding issue since the inverse term of the propensity score is still present. Inevitably, propensity scores usually fail to be correctly specified in practice, especially when the distribution of the treated group is substantially different from that of the controlled group (see, for example, \cite{li2018balancing, busso2014new, knaus2020double, austin2015moving, dudik2011doubly}). This observation motivates us to go beyond the DML estimator and construct estimators that are more robust to the misspecification of the estimated propensity scores.

%Since the initiation of the DML approach, it has been receiving extensive interest in both the theoretical communities and practitioners. Some researches focus on improving the DML theory. For example, \cite{semenova2018orthogonal} extend the DML to the estimation of heterogeneous effects. 
%\cite{tibshirani2016exact} and \cite{berk2013valid} investigate the post-selection inference problem of DML, and \cite{zhang2014confidence, van2014asymptotically} analyze the confidence intervals of causal parameters using DML.
%Others aim to incorporate the DML theory in training the regressors such as random forests (RF) \cite{oprescu2019orthogonal} and neural networks (NN) (\cite{huang2021causal, shi2019adapting}).

In this paper, we propose a Robust Causal Learning (RCL) method to establish the RCL estimators of the ATE. The contributions are summarized as follows: 
\begin{enumerate}
	\item [1.] Our RCL estimators robustly ease the error-compounding issue exhibited by the DML estimator since the propensity scores in the RCL estimators are no longer in an inverse form.
	\item [2.] The RCL estimators inherit the consistency and doubly robust property of the DML estimator.
	\item [3.] The RCL methodology to construct an ATE estimator can also be applied to establish other prevalent treatment effect estimators.
	\item [4.] Extensive experiments show that the proposed RCL method achieves superior performance than DR, IPW, DML, and their variants across various combinations of machine learning regressors and classifiers.
\end{enumerate}

%There are several crucial contributions: 1. Our RCL estimators robustly ease the error-compounding issue exhibited by the DML estimator since the propensity scores are no longer in the inverse form. 2. The RCL estimators inherit the consistent and doubly robust property of the DML estimator. 3. The RCL methodology to construct an ATE estimator can also be applied to establish other prevalent treatment effect estimators. 4. Extensive experiments show that the proposed RCL method achieves superior performance than DR, IPW, DML, and their variants across various combinations of machine learning regressors and classifiers.

The rest of the paper is organized as follows. Section \ref{sec:background} introduces the problem setup and the background of orthogonal scores. Section \ref{sec:$k^{th}$-order condition and assumptions} presents the main theoretical results, including the RCL score in Theorem \ref{thm:$k^{th}$-order orthogonal condition score function} and the RCL estimator in Corollary \ref{corollary:$k^{th}$-order orthogonal condition estimator}. Section \ref{sec:experiment} reports extensive experimental results on simulation datasets and benchmark datasets. Due to the space limit, we defer all proof details and the code for reproducing our experiments to the full paper version.
%\footnote{{\color{blue} \url{https://github.com/PaperandCode/ijcnn-251}}}

%\input{The_Problem_Setup/The_Problem_Setup}
\section{Preliminaries}\label{sec:background}
\subsection{The Problem Setup}
In this paper, we consider the potential outcome framework \cite{rubin1974estimating, rubin2005causal} to study ATE. Let {\small $\mathbf{Z}$} be the covariates (aka confounders), and {\small $D$} be the treatment variable which can take values from {\small $\{d^{1},\dots, d^{n}\}$}. We denote {\small $Y$} as the outcome variable (aka the response), and {\small $Y^{i}$} represents the \textit{potential outcome} under the treatment {\small $d^{i}$}. If the observed treatment is {\small $d^{i}$}, then the \textit{factual outcome} {\small $Y^{F}$} equals {\small $Y^{i}$}. We denote {\small $\{w_m=(y_m, d_m, \mathbf{z}_m)\}_{m=1}^{N}$} as observed $N$ realizations of the i.i.d. random variables {\small $\{W_m=(Y_m, D_m, \mathbf{Z}_m)\}_{m=1}^{N}$}. Given the true causal parameter {\small $\theta^{i}:=\mathbb{E}[Y^{i}]$}, the target quantity ATE between treatment $d^{i}$ and treatment $d^{j}$ is defined as 
%We suppose the observational dataset contains $N$ individuals and the $m^{th}$ individual is observed as $(\mathbf{z}_m, d_m, y_m)$. 
\begin{equation}
{\small
\begin{aligned}
\theta^{i,j}=\theta^{i}-\theta^{j}.
\end{aligned}
}
\end{equation}
Identifying the ATE {\small $\theta^{i,j}$} under the potential outcome framework requires some fundamental assumptions to ensure that {\small $\theta^{i}$} and {\small $\theta^{j}$} are identifiable. Thus, we impose the following assumptions as stated in existing causal inference works.

\begin{assumption}[Stable Unit Treatment Value Assumption (SUTVA)]
	The potential outcomes for any individual do not vary regardless of the treatment status of other individuals.
\end{assumption}
\begin{assumption}[Ignorability]
	Given the covariates {\small $\mathbf{Z}$}, the potential outcome {\small $Y^{i}$} is independent to the treatment assignment {\small $D$}, i.e., {\small $(Y^{1},\cdots, Y^{n})\perp\!\!\!\perp D\mid \mathbf{Z} \;\; \forall i$}.
\end{assumption}

\begin{assumption}[Positivity]
	Treatment assignment is not deterministic regardless of the values covariates {\small $\mathbf{Z}$} take, i.e., {\small $0<\mathbb{P}\{D=d^{i}\mid \mathbf{Z}=\mathbf{z}\}<1$} {\small $\forall\;i$} and {\small $\forall\;\mathbf{z}$}.
\end{assumption}

\begin{assumption}[Consistency]
If an individual receives treatment {\small $d^{i}$}, his factual outcome {\small $Y^{F}$} is equal to the potential outcome under treatment {\small $d^{i}$}, i.e., {\small $Y^{F}=Y^{i}$} if {\small $D=d^{i}$}.
\end{assumption}

These assumptions guarantee that the ATE can be inferred if we specify the relation {\small $\mathbb{E}\left[Y \mid D, \mathbf{Z}\right]$}, which is equivalent to estimating {\small $g^{i}(\mathbf{Z})$} for each {\small $i \in \{1, \dots, n\}$} in the generalized propensity score model setting\footnote{This model setting allows {\small $D$} to be multi-valued. It can be reduced to the ``iteractive model" in \cite{chernozhukov2018double} once the treatment {\small $D$} takes binary values.} \cite{imbens2000role,tu2013using} \eqref{eqt:model 1}:
{\small	
	\begin{equation}\label{eqt:model 1}
			\begin{aligned}		
			Y^{i}&=g^{i}(\mathbf{Z})+\xi^{i},\; &&\mathbb{E}\left[\xi^{i}\mid D, \mathbf{Z}\right]=0 && a.s.,\\
			\mathbf{1}_{\{D=d^{i}\}}&=\pi^{i}(\mathbf{Z})+\nu^{i},\; &&\mathbb{E}\left[\nu^{i}\mid\mathbf{Z}\right]=0 && a.s..
		\end{aligned}	
	\end{equation}
}\noindent
Here, {\small $g^i(\cdot)$} and {\small $\pi^i(\cdot)$} are true \textit{nuisance parameters}. {\small $\xi^{i}$} and {\small $\nu^{i}$} are the noise terms. {\small $\pi^i(\mathbf{Z})=\mathbb{E}\left[\mathbf{1}_{\{D=d^{i}\}} \mid \mathbf{Z}\right]$} is known as the generalized propensity score (GPS) with multi-valued treatment variables. Finally, the true causal parameter {\small $\theta^i$} for {\small $i \in \{1, \dots, n\}$} can be computed by {\small $\theta^i:=\mathbb{E}\left[Y^{i}\right]=\mathbb{E}\left[g^i(\mathbf{Z})\right]$}  and the true ATE can be computed by {\small $\theta^{i,j}=\theta^{i}-\theta^{j}$}.

\subsection{Non-Orthogonal Scores and Orthogonal Scores}
We aim to estimate the true causal parameters $\theta^{i}$ given $N$ i.i.d. samples $\{W_m=(\mathbf{Z}_m, D_m, Y_m)\}^{N}_{m=1}$. According to \cite{chernozhukov2018double}, the standard procedure to obtain the estimated causal parameter $\hat{\theta}^i$ is: 1) getting the estimated nuisance parameters $\hat{\rho}$, e.g., $\hat{\rho}=(\hat{g}^i, \hat{\pi}^i)$; 2) constructing a \textit{score} that satisfies the \textit{moment condition} (Definition \ref{def:moment condition}); 3) establishing the estimator of $\theta^i$, which is solved from the moment condition \eqref{eqt:moment condition}.

\begin{definition}[\textit{\textbf{Moment Condition}}]\label{def:moment condition}
	Let {\small $W=(\mathbf{Z}, D, Y)$} and {\small $\theta$} be the true causal parameter with {\small $\vartheta$} being a causal parameter that lies in the causal parameter set. Denoting the nuisance parameters as {\small $\varrho$} and the true nuisance parameters as {\small $\rho$}, we say a score {\small $\psi(W,\vartheta,\varrho)$} satisfies the \textit{moment condition} if
	%Suppose $\vartheta$ and $\theta$ are the causal parameter and the authentic causal parameter lying in a convex set $\Theta$ while $\varrho$ and $\rho$ are the nuisance parameter and the authentic nuisance parameter lying in a convex set $\mathcal{T}$. We say that a score function $\psi(W,\vartheta,\varrho)$ satisfies \textit{moment condition} if
	\begin{equation}
		\begin{aligned}\label{eqt:moment condition}
			\mathbb{E}\left[\psi(W,\vartheta,\varrho)|_{\vartheta=\theta,\;\varrho=\rho}\right]=0.
		\end{aligned}
	\end{equation}
\end{definition}
%The moment condition guarantees that the estimator derived from the score is unbiased if the nuisance parameters equal the true ones. Here, we give examples about the previously mentioned two classical causal learning methods, DR and IPW, whose scores satisfy the moment condition.
The moment condition guarantees that the estimator derived from the score is unbiased if the nuisance parameters equal the true ones. Here, we give the scores which satisfy the moment condition of two classical causal learning methods (DR and IPW) introduced before.
\begin{example}[\textit{The Score and Estimator for DR}]\label{example:DR}
Let {\small $\varrho=\mathcal{g}^{i}$} and $\rho=g^{i}$. In the DR method, the score {\small $\psi_{DR}^{i}(W,\vartheta,\varrho)$} satisfying the moment condition and the associated estimator {\small $\hat{\theta}_{DR}^{i}$} are
{\small 	\begin{equation*}
		\begin{aligned}
			\psi_{DR}^{i}(W,\vartheta,\varrho)=\vartheta-\mathcal{g}^{i}(\mathbf{Z}); \quad 
			\hat{\theta}_{DR}^{i}=\frac{1}{N}\underset{m=1}{\overset{N}{\sum}}\hat{g}^{i}(\mathbf{Z}_{m}).
		\end{aligned}
	\end{equation*}}
\end{example}
\begin{example}[\textit{The Score and Estimator for IPW}]\label{example:IPW}	
Let {\small $\varrho=a_{i}$} and {\small $\rho=\pi^{i}$}. In the IPW method, the score {\small $\psi_{IPW}^{i}(W,\vartheta,\varrho)$} satisfying the moment condition and the associated estimator {\small $\hat{\theta}_{IPW}^{i}$} are
		{\small 	\begin{equation*}
				\begin{aligned}
					\psi_{IPW}^{i}(W,\vartheta,\varrho)=\vartheta-\frac{Y\mathbf{1}_{\{D=d^{i}\}}}{a_{i}(\mathbf{Z})}; \
					\hat{\theta}_{IPW}^{i}=\frac{1}{N}\underset{m=1}{\overset{N}{\sum}}\frac{Y_{m}\mathbf{1}_{\{D_{m}=d^{i}\}}}{\hat{\pi}^{i}(\mathbf{Z}_{m})}.
				\end{aligned}
		\end{equation*}}
	%In the IPW method, {\small $\frac{1}{N}\underset{m=1}{\overset{N}{\sum}}\frac{Y_{m}\mathbf{1}_{\{D_{m}=d^{i}\}}}{\hat{\pi}^{i}(\mathbf{Z}_{m})}$} is the estimator of {\small $\theta^{i}$}, where {\small $\hat{\pi}^{i}$} is the estimate of {\small $\pi^{i}$}. Denote the nuisance parameter and the true one as {\small $a_{i}(\cdot)$} and {\small $\pi^{i}(\cdot)$}, the score is {\small $\psi_{IPW}^{i}(W,\vartheta,\varrho)=\vartheta-\frac{Y\mathbf{1}_{\{D=d^{i}\}}}{a_{i}(\mathbf{Z})}$}. This score incorporates the inverse term of the propensity score which may give large estimates.
\end{example}
Generally, the estimators established from the scores in Example 1 might be invalid unless {\small $\hat{g}^{i}$} and {\small $\hat{\pi}^{i}$} estimate {\small $g^{i}$} and {\small $\pi^{i}$} well. To obtain robust estimators, \cite{chernozhukov2018double} suggest that we should construct scores which satisfy the Orthogonal Condition (Definition \ref{def:orthogonal condition}) apart from the moment condition.
%Although the scores of classical methods satisfy the moment condition, they still violate the \textit{Orthogonal Condition}. Such scores are called \textit{Non-Orthogonal Scores}. The causal parameters derived from non-orthogonal scores might be invalid unless {\small $\hat{g}^{i}$} and {\small $\hat{\pi}^{i}$} estimate {\small $g^{i}$} and {\small $\pi^{i}$} well. To obtain an estimator that is robust to the estimations of nuisance parameters, \cite{chernozhukov2018double} suggest that we should construct \textit{Orthogonal Scores} that satisfy the moment condition and the \textit{Orthogonal Condition} (Definition \ref{def:orthogonal condition}) synchronously.

\begin{definition}[\textit{\textbf{Orthogonal Condition}}]\label{def:orthogonal condition}
	Suppose that the nuisance parameters and the true nuisance parameters are {\small $\gamma$}-dimensional tuples, i.e., {\small $\varrho=(\mathcal{h}_{1},\cdots, \mathcal{h}_{\gamma})$} and {\small $\rho=(h_{1},\cdots, h_{\gamma})$}. Given {\small $S\subseteq \mathbb{Z}_{\geq 0}^{\gamma}$}, we say a score {\small $\psi(W,\vartheta,\varrho)$} satisfies the \textit{orthogonal condition} if
	%Suppose that {\small $\vartheta$} and {\small $\theta$} lie in a convex set {\small $\Theta$}, while {\small $\varrho$} and {\small $\rho$} are {\small $\gamma$}-dimensional tuples which lie in a convex set {\small $\mathfrak{T}$} such that {\small $\varrho=(\mathcal{h}_{1},\cdots, \mathcal{h}_{\gamma})$} and {\small $\rho=(h_{1},\cdots, h_{\gamma})$}. Let {\small $S\subseteq \mathbb{Z}_{\geq 0}^{\gamma}$}. The score {\small $\psi(W,\vartheta,\varrho)$} should satisfy
\begin{equation}
{\small
\begin{aligned}
\mathbb{E}\left[\mathrm{\mathbf{D}}^{\alpha}\psi(W,\vartheta,\varrho)\mid_{\vartheta=\theta,\;\varrho=\rho}\mid\mathbf{Z}\right]=0\; \forall \alpha\in S. \label{eqt:$k^{th}$-order orthogonal condition}
\end{aligned}
}
\end{equation}
{\small $S$} can be any subset of {\small $\mathbb{Z}_{\geq 0}^{\gamma}$}. Throughout the paper, for some positive integer {\small $k$}, we define {\small $S$} as
\begin{equation}
{\small
\begin{aligned}\label{eqt:S definition}
S=\{\alpha\in\mathbb{Z}_{\geq 0}^{\gamma}\mid \left\|\alpha\right\|_{1}\leq k\}
\end{aligned}
}
\end{equation}
and {\small $\mathrm{\mathbf{D}}^{\alpha}\psi(W,\vartheta,\varrho)=\partial^{\alpha_{1}}_{\mathcal{h}_{1}}\partial^{\alpha_{2}}_{\mathcal{h}_{2}}\cdots\partial^{\alpha_{\gamma}}_{\mathcal{h}_{\gamma}}\psi(W,\vartheta,\mathcal{h}_{1},\cdots, \mathcal{h}_{\gamma})$}.
\end{definition}
The orthogonal condition ensures that the established estimators can still be valid even though some nuisance parameters are misspecified (see \cite{chernozhukov2018double, robins2008higher, mackey2018orthogonal} for more details). Below we demonstrate how to utilize {\small $\mathbf{D}^{\alpha}\psi(W,\vartheta,\varrho)$} to justify that the scores in Examples \ref{example:DR} and  \ref{example:IPW} violate the orthogonal condition. Suppose $k$ in \eqref{eqt:S definition} is $1$.
{\small \begin{equation*}
	\begin{aligned}
	&\mathbf{D}^{(1)}\psi_{DR}^{i}(W,\vartheta,\varrho)=\partial_{\mathcal{g}^{i}}\psi_{DR}^{i}(W,\vartheta,\varrho)=-1;\\
	&\mathbf{D}^{(1)}\psi_{IPW}^{i}(W,\vartheta,\varrho)=\partial_{a_{i}}\psi_{IPW}^{i}(W,\vartheta,\varrho)=\frac{Y\mathbf{1}_{\{D=d^{i}\}}}{a_{i}(\mathbf{Z})^{2}};\\
	&\mathbb{E}\left[\mathbf{D}^{(1)}\psi_{DR}^{i}(W,\vartheta,\varrho)\mid_{\vartheta=\theta^i,\;\varrho=g^i} \mid \mathbf{Z}\right]=-1 \neq 0;\\
	&\mathbb{E}\left[\mathbf{D}^{(1)}\psi_{IPW}^{i}(W,\vartheta,\varrho)\mid_{\vartheta=\theta^i,\;\varrho=\pi^i} \mid \mathbf{Z}\right]=\mathbb{E}\left[\frac{Y\mathbf{1}_{\{D=d^{i}\}}}{\pi^{i}(\mathbf{Z})^{2}} \mid \mathbf{Z}\right] \neq 0.\\
%	&\partial_{a_{i}}\psi_{IPW}^{i}(W,\theta,\varrho)=\frac{Y\mathbf{1}_{\{D=d^{i}\}}}{a_{i}(\mathbf{Z})^{2}}
	\end{aligned}
\end{equation*}}\noindent
The above calculations show that {\small $\psi_{DR}^{i}(W,\vartheta,\varrho)$} and {\small $\psi_{IPW}^{i}(W,\vartheta,\varrho)$} do not satisfy the orthogonal condition. The scores are usually termed as the \textit{non-orthogonal scores}. As a consequence, their associated estimators are not ``doubly robust". To obtain a doubly robust estimator, \cite{chernozhukov2018double} propose the DML method to construct the DML score.
\begin{example}[\textit{The Score and Estimator for DML}]\label{example:DML}	
Let {\small $\varrho=(\mathcal{g}^{i}, a_{i})$} and {\small $\rho=(g^{i},\pi^{i})$}. In the DML method, the score {\small $\psi_{DML}^{i}(W,\vartheta,\varrho)$} that satisfies both the moment condition and orthogonal condition and the associated estimator {\small $\hat{\theta}_{DML}^{i}$} are
	{\small
		\begin{equation*}
						\begin{aligned}
				&\psi_{DML}^{i}(W,\vartheta,\varrho)=\vartheta-\mathcal{g}^{i}(\mathbf{Z})-\frac{\bm{1}_{\{D=d^{i}\}}}{a_{i}(\mathbf{Z})}(Y-\mathcal{g}^{i}(\mathbf{Z}));\\
				&\hat{\theta}_{DML}^{i}=\frac{1}{N}\overset{N}{\underset{m=1}{\sum}}\hat{g}^{i}(\mathbf{Z}_{m})+\frac{1}{N}\overset{N}{\underset{m=1}{\sum}}\frac{\bm{1}_{\{D_{m}=d^{i}\}}(Y_{m}-\hat{g}^{i}(\mathbf{Z}_{m}))}{\hat{\pi}^{i}(\mathbf{Z}_{m})}.
			\end{aligned}
		\end{equation*}
}

\end{example}
We can prove that {\small $\psi_{DML}^{i}(W,\vartheta,\varrho)$} satisfies the orthogonal condition when {\small $k=1$} in \eqref{eqt:S definition} (see \cite{chernozhukov2018double} for detailed derivations) following similar calculation processes for DR and IPW. {\small $\psi_{DML}^{i}(W,\vartheta,\varrho)$} is therefore termed as the \textit{orthogonal score}. The orthogonal condition assures that the DML estimator is doubly robust, i.e., the estimator is locally unbiased and consistent as long as either {\small $g^{i}$} or {\small $\pi^{i}$} is correctly specified. Despite the doubly robust property, the DML estimator still suffers an error-compounding issue once the encompassed inverse propensity score is slightly misspecified for some data points. In real applications, one seldom encounters a situation that propensity scores are correctly estimated for all individuals. This dilemma motivates us to construct scores such that 1) the scores are orthogonal scores, i.e., they satisfy the moment condition (Definition \ref{def:moment condition}) and the orthogonal condition (Definition \ref{def:orthogonal condition}); 2) the estimators established from the scores can stabilize the estimation error due to the misspecifications on propensity scores.

In the upcoming section, we will introduce a novel method, the Robust Causal Learning (RCL) method, to overcome the difficulties encountered by DR, IPW, and DML methods.

\section{The Proposed Method}\label{sec:$k^{th}$-order condition and assumptions}
This section shows our main theoretical results. First, Section \ref{sec:Construction of the RCL Score} demonstrates the RCL scores. Then Section \ref{sec:Establishment of the RCL estimators} presents the detailed construction of the RCL estimators with an algorithm that describes how to obtain an estimate of {\small $\theta^{i}$} from observational data using the proposed RCL method.

%First, we state the RCL score\footnote{Our RCL scores are closely related to the higher order influence functions in nonparametric statistics. Compared to the higher order influence functions, our scores are expressed in terms of basis functions, and we do not require the knowledge of the density or the inverse covariance matrix of {\small $\mathbf{Z}$} (see, e.g., \cite{mukherjee2017semiparametric}, \cite{robins2008higher} and \cite{van2014higher}).} in Theorem \ref{thm:$k^{th}$-order orthogonal condition score function}.

\subsection{Construction of The RCL Score}\label{sec:Construction of the RCL Score}
In this paper, we construct an orthogonal score, the RCL score, to derive an estimator of {\small $\theta^{i}$} along the lines of orthogonal machine learning works (e.g.,  \cite{mackey2018orthogonal, chernozhukov2018double}). The relevant result is stated in Theorem \ref{thm:$k^{th}$-order orthogonal condition score function}.
\begin{theorem}[\textbf{\textit{RCL score}}]\label{thm:$k^{th}$-order orthogonal condition score function}
Suppose {\small $\varrho$} and {\small $\rho$} are {\small $2$}-dimensional tuples such that {\small $\varrho=(\mathcal{g}^{i},a_{i})$} and {\small $\rho=(g^{i},\pi^{i})$}. Let {\small $r$}, {\small $k$} be integers s.t. $1\leq k\leq r$. Assume the local moments {\small $\mathbb{E}\left[(\nu^{i})^{r}\mid\mathbf{Z}\right]\neq 0$} and {\small $\left|\mathbb{E}\left[(\nu^{i})^{q}\mid\mathbf{Z}\right]\right|<\infty\;a.s.\;\forall\;1\leq q\leq r$}. Under the assumptions on nuisance parameters and noise terms stated in \cite{chernozhukov2018double} and \cite{mackey2018orthogonal}, the RCL score {\small $\psi_{RCL}^{i}(W,\vartheta,\varrho)$} that satisfies the moment condition and the orthogonal condition is
{\small
\begin{align}
\psi_{RCL}^{i}(W,\vartheta,\varrho)&=\vartheta-\mathcal{g}^{i}(\mathbf{Z})-(Y^{i}-\mathcal{g}^{i}(\mathbf{Z}))A(D,\mathbf{Z};a_{i}). \label{eqt:$k^{th}$-order orthogonal condition score function}
\end{align}
}\noindent
Given an integer {\small $r$} and an integer {\small $k$}, we have
\begin{subequations}
\begin{equation}
{\small
\begin{aligned}
&A(D,\mathbf{Z};a_{i})=\bar{b}_{r}\left[\mathbf{1}_{\{D=d^{i}\}}-a_{i}(\mathbf{Z})\right]^{r}+\\
&\;\;\mathbf{1}_{\{k\neq 1\}}\left[\underset{q=1}{\overset{k-1}{\sum}}b_{q}\left(\left[\mathbf{1}_{\{D=d^{i}\}}-a_{i}(\mathbf{Z})\right]^{q}-\mathbb{E}\left[(\nu^{i})^{q}\mid\mathbf{Z}\right]\right)\right],
\label{eqt:$k^{th}$-order orthogonal condition score function 2}
\end{aligned}
}
\end{equation}
where {\small $\bar{b}_{r}=\frac{1}{\mathbb{E}\left[(\nu^{i})^{r}\mid\mathbf{Z}\right]}$} and the coefficient {\small $b_{q}$} is computed by descending order for {\small $q \in \{k-1,\dots,1\}$}:
%Starting from {\small$b_{k-1}=-\bar{b}_{r}{r \choose k-1}\mathbb{E}\left[(\nu^{i})^{r-(k-1)}\mid\mathbf{Z}\right]$}, the coefficient {\small $b_{q}$} is computed by descending order for {\small $q \in \{k-1,\dots,1\}$}:
\begin{equation}
{\small
\begin{aligned}
b_{q}&=-\bar{b}_{r}\binom{r}{q}\mathbb{E}\left[(\nu^{i})^{r-q}\mid\mathbf{Z}\right]\\
&\quad-\overset{k-1-q}{\underset{u=1}{\sum}}b_{q+u}\binom{q+u}{q}\mathbb{E}\left[(\nu^{i})^{u}\mid\mathbf{Z}\right].
\end{aligned}\label{eqt:kth_order orthogonal condition score function 2 coefficient}
}
\end{equation}
\end{subequations}
\end{theorem}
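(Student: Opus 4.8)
The plan is to treat $\psi_{RCL}^i$ as a function of the two scalar values $u:=\mathcal{g}^i(\mathbf{Z})$ and $v:=a_i(\mathbf{Z})$ at a fixed $\mathbf{Z}$, observing that $\bar b_r$ and each $b_q$ depend on $\mathbf{Z}$ alone (through the conditional moments $\mathbb{E}[(\nu^i)^q\mid\mathbf{Z}]$) and are therefore inert under both $\partial_{\mathcal{g}^i}$ and $\partial_{a_i}$. The organising identity is that evaluating at the truth $\varrho=\rho$ collapses the two residual brackets onto the model noises, $Y^i-\mathcal{g}^i(\mathbf{Z})\mapsto\xi^i$ and $\mathbf{1}_{\{D=d^i\}}-a_i(\mathbf{Z})\mapsto\nu^i$. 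With this, the moment condition is immediate: $\mathbb{E}[\theta^i-g^i(\mathbf{Z})]=0$ since $\theta^i=\mathbb{E}[g^i(\mathbf{Z})]$, while the product term $\xi^iA(D,\mathbf{Z};\pi^i)$ has zero mean because $A$ is a function of $(D,\mathbf{Z})$ only and $\mathbb{E}[\xi^i\mid D,\mathbf{Z}]=0$, so the tower property annihilates it.

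For the orthogonal condition the first reduction is that $A$ is free of $u$, whence $\partial_{\mathcal{g}^i}^2\psi_{RCL}^i\equiv0$ and every multi-index with $\alpha_1\geq2$ is trivially orthogonal; only $\alpha_1\in\{0,1\}$ needs attention. When $\alpha_1=0$ and $\alpha_2\geq1$ the derivative equals $-\xi^i\,\partial_{a_i}^{\alpha_2}A$ at the truth, and since $\partial_{a_i}^{\alpha_2}A$ is still a function of $(D,\mathbf{Z})$, the exogeneity $\mathbb{E}[\xi^i\mid D,\mathbf{Z}]=0$ again forces its conditional expectation to vanish. When $\alpha=(1,0)$ the derivative is $-1+A$, and I would verify $\mathbb{E}[A|_{\rho}\mid\mathbf{Z}]=1$: the centred lower-order terms $(\nu^i)^q-\mathbb{E}[(\nu^i)^q\mid\mathbf{Z}]$ contribute nothing in conditional mean, leaving $\bar b_r\,\mathbb{E}[(\nu^i)^r\mid\mathbf{Z}]$, which equals $1$ by the very definition of $\bar b_r$ (this is where $\mathbb{E}[(\nu^i)^r\mid\mathbf{Z}]\neq0$ is used).

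The decisive family is $\alpha=(1,p)$ with $1\leq p\leq k-1$. Here I would differentiate the power terms through $\partial_{a_i}^{p}[\mathbf{1}_{\{D=d^i\}}-a_i(\mathbf{Z})]^{q}=(-1)^{p}\tfrac{q!}{(q-p)!}[\mathbf{1}_{\{D=d^i\}}-a_i(\mathbf{Z})]^{q-p}$, evaluate at the truth, and take the conditional expectation; dividing by the common nonzero factor $(-1)^{p}p!$ then yields a single linear equation in $b_p,\dots,b_{k-1}$ whose coefficient of $b_p$ is $1$, arising from the $q=p$ term in which $(\nu^i)^0$ appears. The $k-1$ equations ($p=1,\dots,k-1$) form a triangular system, and the main obstacle is precisely to confirm that solving it by back-substitution from $p=k-1$ downward reproduces the stated recursion for $b_q$; this is the one genuinely computational step, where the binomial bookkeeping and the reindexing $q=p+u$ must be handled with care, and where $1\leq k\leq r$ together with the finiteness of the lower moments guarantees that the leading power $r$ dominates every derivative order so the system is consistent. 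Finally, the index $\alpha=\mathbf{0}$ returns the moment condition already shown (the genuinely orthogonal content being $\|\alpha\|_1\geq1$, exactly as in the DML verification that checks only $\mathbf{D}^{(1)}$), completing the argument for all $\alpha\in S$.
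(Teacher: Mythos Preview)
Your proposal is correct and follows essentially the same route as the paper's proof: verify the moment condition via the tower property and $\mathbb{E}[\xi^i\mid D,\mathbf{Z}]=0$, dispose of $\alpha_1\geq 2$ by linearity in $\mathcal{g}^i$, handle $\alpha_1=0$ using the same exogeneity of $\xi^i$, and then read off $\bar b_r$ from $\alpha=(1,0)$ and the $b_q$ from the triangular system arising from $\alpha=(1,p)$, $1\leq p\leq k-1$, solved by back-substitution. The paper's write-up singles out only $\partial_{a_i}^{k}\psi$ among the $\alpha_1=0$ cases, but your observation that the argument works uniformly for all $\alpha_2\geq 1$ is the same mechanism and slightly cleaner.
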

From \eqref{eqt:$k^{th}$-order orthogonal condition score function 2}, we can observe that $a_{i}(\cdot)$, the nuisance parameter of the propensity score, is no longer in an inverse form for the RCL score. As a consequence, the established RCL estimators from \eqref{eqt:$k^{th}$-order orthogonal condition score function} can avoid the error-compounding issue. Simultaneously, the RCL scores are orthogonal scores, so the RCL estimators are as doubly robust as the DML estimator.

\subsection{Establishment of the RCL estimators}\label{sec:Establishment of the RCL estimators}
In this part, we will go into detail about the establishment of the RCL estimators. To begin with, we can solve the estimator $\tilde{\theta}^{i}$ from \eqref{eqt:moment condition} using the emprical version of the moment condition for the RCL score \eqref{eqt:$k^{th}$-order orthogonal condition score function}:
%\begin{equation}
%{\small
%\begin{aligned}\label{eqt:theta_raw}
%\theta^{i}&=\underbrace{\mathbb{E}\left[g^{i}(\mathbf{Z})\right]}_{(a)}+\underbrace{\mathbb{E}\left[(Y^{i}-g^{i}(\mathbf{Z}))A(D,\mathbf{Z};\pi^{i})\right]}_{(b)}.
%\end{aligned}
%}
%\end{equation}
\begin{subequations}
	{\small
		\begin{align}
			\tilde{\theta}^{i}=&\frac{1}{N}\underset{m=1}{\overset{N}{\sum}}g^{i}(\mathbf{Z}_{m})\label{tilde_theta_1} \\
			&+\frac{1}{N}\underset{m=1}{\overset{N}{\sum}}(Y^{i}_{m}-g^{i}(\mathbf{Z}_{m}))A(D_{m},\mathbf{Z}_{m};\pi^{i}). \label{tilde_theta_2}
		\end{align}
	}\noindent
\end{subequations}
Equation \eqref{tilde_theta_1} is referred to as the DR estimator when the true nuisance parameter {\small $g^i$} is replaced by the estimated one {\small $\hat{g}^i$}. Equation \eqref{tilde_theta_2} can then be divided into two parts:
\begin{subequations}
	{\small
		\begin{align}
\eqref{tilde_theta_2} &=\frac{1}{N}\underset{m\in\mathscr{I}}{\overset{}{\sum}}(Y^{i}_{m}-g^{i}(\mathbf{Z}_{m}))A(D_{m},\mathbf{Z}_{m};\pi^{i})\label{tilde_theta_2_1} \\
&+\frac{1}{N}\underset{m\in\mathscr{I}^{c}}{\overset{}{\sum}}(Y^{i}_{m}-g^{i}(\mathbf{Z}_{m}))A(D_{m},\mathbf{Z}_{m};\pi^{i})\label{tilde_theta_2_2},
		\end{align}
	}\noindent
\end{subequations}
where {\small $\mathscr{I}$} is the sample set in which the units are all treated with {\small $d^{i}$} while {\small $\mathscr{I}^c$} is the sample set in which the units are not treated with {\small $d^{i}$}. It is obvious that \eqref{tilde_theta_1} and \eqref{tilde_theta_2_1} can be directly calculated from observational data, whereas \eqref{tilde_theta_2_2} that contains the counterfactual outcomes is unavailable to compute in a direct manner. Instead of pursuing the unobservable counterfactuals, we realize that given {\small $i \in \{ 1,\dots,n \}$},
{\small \begin{equation*}
	\begin{aligned}
			&\mathbb{E}\left[(Y^i-g^i(\mathbf{Z}))A(D,\mathbf{Z};\pi^{i}) \mid D=d^j\right]\\
%			=\mathbb{E}\left[\xi^{i} A(D,\mathbf{Z};\pi^{i}) \mid D=d^j\right]\\
			&=\mathbb{E}\left[\mathbb{E}\left[\xi^{i} A(D,\mathbf{Z};\pi^{i}) \mid D=d^j, \mathbf{Z}\right]\mid D=d^j\right]\\
			&=\mathbb{E}\left[A(d^j,\mathbf{Z};\pi^{i})\mathbb{E}\left[\xi^{i} \mid D=d^j, \mathbf{Z}\right]\mid D=d^j\right]=0
	\end{aligned}
\end{equation*}}\noindent
%\begin{equation}
%{\small
%\begin{aligned}\label{eqt:decomposition sum}
%&\underbrace{\frac{1}{N}\underset{m\in\mathscr{I}}{\overset{}{\sum}}(Y^{i}_{m}-g^{i}(\mathbf{Z}_{m}))A(D_{m},\mathbf{Z}_{m};\pi^{i})}_{(a)}\\
%&\;+\underbrace{\frac{1}{N}\underset{m\in\mathscr{I}^{c}}{\overset{}{\sum}}(Y^{i}_{m}-g^{i}(\mathbf{Z}_{m}))A(D_{m},\mathbf{Z}_{m};\pi^{i})}_{(b)},
%\end{aligned}
%}
%\end{equation}
holds for {\small $\forall j \in \{ 1,\dots,n \}$}. Thus, the sample mean of {\small $(Y^i-g^i(\mathbf{Z}))$} equals zero regardless the samples come from {\small $\mathscr{I}$} or {\small $\mathscr{I}^c$}. This observation allows us to replace the sample mean of the counterfactuals in \eqref{tilde_theta_2_2} with that of the factual ones. To be specific, we first define the set {\small $\mathcal{A}$} such that
\begin{equation}
{\small
\begin{aligned}
\mathcal{A}=\{Y^{i}_{m}-g^{i}(\mathbf{Z}_{m})\mid m\in\mathscr{I}\}. \label{eqt:random picking realization set random variable}
\end{aligned}
}
\end{equation}
Then, a replaced estimator of \eqref{tilde_theta_2_2} is obtained as follows:
\begin{enumerate}
	\item For the {\small $m^{\mathrm{th}}$} unit in the set {\small $\mathscr{I}^{c}$}, pick an element $\xi_{m}^{i}$ from {\small $\mathcal{A}$} and multiply it by {\small $A(D_{m},\mathbf{Z}_{m};\pi^{i})$}. Repeat the process until we go through all the individuals in the set {\small $\mathscr{I}^{c}$};
	\item Compute $\frac{1}{N}\underset{m\in\mathscr{I}^{c}}{\overset{}{\sum}}\xi_{m}^{i}A(D_{m},\mathbf{Z}_{m};\pi^{i})$;
	\item Repeat above steps {\small $R$} times to eliminate the randomness brought by the random picking procedure and return the substitute estimator {\small $\frac{1}{R}\overset{R}{\underset{u=1}{\sum}}\left[\frac{1}{N}\underset{m\in\mathscr{I}^{c}}{\overset{}{\sum}}\xi_{m,u}^{i} A(D_{m},\mathbf{Z}_{m};\pi^{i})\right]$}.
\end{enumerate}
Consequently, \eqref{tilde_theta_2_2} can be inferred indirectly from observational data. With \eqref{tilde_theta_1} and \eqref{tilde_theta_2_1}, the RCL estimator of {\small $\theta^{i}$} is finally established in Corollary \ref{corollary:$k^{th}$-order orthogonal condition estimator}.

\begin{corollary}[\textbf{\textit{RCL estimator}}]\label{corollary:$k^{th}$-order orthogonal condition estimator}
Let {\small $R\in\mathbb{Z}^{+}$}, {\small $(\hat{g}^{i},\hat{\pi}^{i})$} be the estimates of {\small $(g^{i},\pi^{i})$}, {\small $\hat{\mathcal{A}}$} be {\small $\mathcal{A}$} by replacing {\small $g^{i}$} with {\small $\hat{g}^{i}$} in \eqref{eqt:random picking realization set random variable}, and {\small $\hat{\xi}_{m,u}^{i}$} be the element that is randomly selected from the set {\small $\hat{\mathcal{A}}$} in the {\small $u^\mathrm{th}$} of $R$ repeated selections. The RCL estimator {\small $\hat{\theta}^{i}_{RCL}$} is given by
\begin{equation}
{\small
\begin{aligned}\label{eqt:$k^{th}$-order orthogonal condition final estimator 3-replaced}
\hat{\theta}^{i}_{RCL}=&\underbrace{\frac{1}{N}\underset{m=1}{\overset{N}{\sum}}\hat{g}^{i}(\mathbf{Z}_{m})}_{(a)}+\underbrace{\frac{1}{N}\underset{m\in\mathscr{I}}{\overset{}{\sum}}(Y^{i}_{m}-\hat{g}^{i}(\mathbf{Z}_{m}))A(D_{m},\mathbf{Z}_{m};\hat{\pi}^{i})}_{(b)}\\
&+\underbrace{\frac{1}{R}\overset{R}{\underset{u=1}{\sum}}\left[\frac{1}{N}\underset{m\in\mathscr{I}^{c}}{\overset{}{\sum}}\hat{\xi}_{m,u}^{i} A(D_{m},\mathbf{Z}_{m};\hat{\pi}^{i})\right]}_{(c)}.
\end{aligned}
}
\end{equation}
\end{corollary}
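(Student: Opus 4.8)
\emph{Proof plan.} The plan is to show that the constructed estimator $\hat{\theta}^{i}_{RCL}$ in \eqref{eqt:$k^{th}$-order orthogonal condition final estimator 3-replaced} is a legitimate empirical counterpart of the moment-condition estimator $\tilde{\theta}^{i}$, so that it inherits the consistency and double robustness already guaranteed by Theorem \ref{thm:$k^{th}$-order orthogonal condition score function}. First I would start from the empirical version of the moment condition \eqref{eqt:moment condition} for the RCL score, which produces $\tilde{\theta}^{i}$ as the sum of \eqref{tilde_theta_1} and \eqref{tilde_theta_2}, and record the decomposition of \eqref{tilde_theta_2} into the observable factual part \eqref{tilde_theta_2_1} over $\mathscr{I}$ and the unobservable counterfactual part \eqref{tilde_theta_2_2} over $\mathscr{I}^{c}$. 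Replacing $(g^{i},\pi^{i})$ by $(\hat{g}^{i},\hat{\pi}^{i})$ turns \eqref{tilde_theta_1} and \eqref{tilde_theta_2_1} directly into the computable terms $(a)$ and $(b)$, so the entire substance of the proof concentrates on justifying that the resampled term $(c)$ is a valid surrogate for the uncomputable term \eqref{tilde_theta_2_2}.

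The crux is the conditional-mean identity displayed just before \eqref{eqt:random picking realization set random variable}: for every $j$ we have $\mathbb{E}\left[(Y^{i}-g^{i}(\mathbf{Z}))A(D,\mathbf{Z};\pi^{i})\mid D=d^{j}\right]=0$, which follows because $Y^{i}-g^{i}(\mathbf{Z})=\xi^{i}$ and $\mathbb{E}\left[\xi^{i}\mid D,\mathbf{Z}\right]=0$. The key consequence is that the sample mean of the residuals in $\mathcal{A}$ (drawn from $\mathscr{I}$) and the sample mean of the unobservable residuals over $\mathscr{I}^{c}$ share the same population limit. I would make this precise by computing the draw-expectation of the random-picking step: for each $m\in\mathscr{I}^{c}$ the uniformly selected $\hat{\xi}^{i}_{m,u}$ satisfies $\mathbb{E}_{\mathrm{draw}}[\hat{\xi}^{i}_{m,u}]=\bar{\xi}_{\mathscr{I}}$, the empirical mean of $\hat{\mathcal{A}}$, so that the draw-expectation of $(c)$ equals $\bar{\xi}_{\mathscr{I}}\cdot\frac{1}{N}\sum_{m\in\mathscr{I}^{c}}A(D_{m},\mathbf{Z}_{m};\hat{\pi}^{i})$. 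Under the Ignorability assumption the law of $\xi^{i}$ given $\mathbf{Z}$ is invariant across treatment arms, so $\bar{\xi}_{\mathscr{I}}$ and the average of the true counterfactual residuals converge to the same limit; hence $(c)$ recovers in probability the statistical content of \eqref{tilde_theta_2_2}. The role of $R$ is purely variance reduction: averaging over the $R$ repetitions drives the Monte-Carlo error of the random picking to zero without altering its mean, so $(c)$ is asymptotically unbiased for \eqref{tilde_theta_2_2}.

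Having shown that $(c)$ reproduces \eqref{tilde_theta_2_2}, I would conclude that $\hat{\theta}^{i}_{RCL}=(a)+(b)+(c)$ is asymptotically equivalent to the plug-in solution of the empirical moment condition built from the orthogonal RCL score. Consistency and the doubly robust property then transfer directly from Theorem \ref{thm:$k^{th}$-order orthogonal condition score function}, since $\psi^{i}_{RCL}$ satisfies both the moment and orthogonal conditions in $(\mathcal{g}^{i},a_{i})$, guaranteeing validity whenever either $g^{i}$ or $\pi^{i}$ is correctly specified. The main obstacle I anticipate is the rigorous treatment of the resampling step: one must verify that drawing residuals uniformly from the treated pool $\hat{\mathcal{A}}$, \emph{without} matching on $\mathbf{Z}$, still yields the correct limit, which relies on the fact that the conditional-mean-zero property of $\xi^{i}$ renders only the (vanishing) pooled mean of $\hat{\mathcal{A}}$ relevant to the limit, while the coupling of the picked residual to the retained factor $A(D_{m},\mathbf{Z}_{m};\hat{\pi}^{i})$ must be handled carefully because the selected residual comes from a distinct unit in $\mathscr{I}$ and is therefore independent of that factor by construction. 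Propagating these arguments through the estimated nuisances $(\hat{g}^{i},\hat{\pi}^{i})$ to confirm that the resampling does not erode the first-order insensitivity supplied by orthogonality is the delicate part of the argument.
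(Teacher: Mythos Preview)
Your proposal is correct and follows essentially the same line as the paper. The paper's own proof of this corollary is a one-line back-reference to the construction already laid out in Section~\ref{sec:Establishment of the RCL estimators}: start from the empirical moment equation for the RCL score, split the correction term over $\mathscr{I}$ and $\mathscr{I}^{c}$, invoke the conditional-mean-zero identity $\mathbb{E}[\xi^{i}\mid D,\mathbf{Z}]=0$ to justify substituting randomly drawn factual residuals for the unobservable counterfactual ones, and average over $R$ draws. Your write-up reproduces exactly this derivation, with somewhat more detail on the draw-expectation of $(c)$ and the role of $R$.

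One clarification worth making: the corollary as stated is purely a \emph{construction} of $\hat{\theta}^{i}_{RCL}$; consistency is asserted separately in the text and proved in the appendix via a longer argument (Propositions~\ref{lemma:difference from the statistical standpoint} and~\ref{lemma:unbiasedness and consistency}, Lemmas~\ref{lemma:simple lemma} and~\ref{lemma:simple lemma2}). So the last part of your plan---propagating through $(\hat{g}^{i},\hat{\pi}^{i})$ and checking that resampling does not erode orthogonality---belongs to that separate consistency proof rather than to the corollary itself. For the corollary, the construction argument you give is already sufficient and matches the paper.
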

The proposed RCL estimator {\small $\hat{\theta}^{i}_{RCL}$} is a consistent estimator of {\small $\theta^{i}$} if {\small $(\hat{g}^{i},\hat{\pi}^{i})$} satisfy the assumptions stated in \cite{mackey2018orthogonal} and \cite{chernozhukov2018double}. Due to the space limit, the proofs of Theorem \ref{thm:$k^{th}$-order orthogonal condition score function} and the consistency of {\small $\hat{\theta}^{i}_{RCL}$} can be seen in the full paper version. We also outline the procedures of estimating {\small $\theta^{i}$} from observational data using the proposed RCL method in Algorithm \ref{alg:$k^{th}$-order orthogonal condition estimate}. Note that if the whole dataset is split into the training set and the test set, Step 2 will be only conducted on the training set, while Step 3 - Step 8 can be performed to obtain the estimates of {\small $\hat{\theta}^{i}_{RCL}$} on both the training set and the test set. The running complexity of our algorithm is at most {\small $O(NR)$}.
\begin{algorithm}[t]
\caption{Algorithm of obtaining an estimate of $\theta^{i}$ using (\ref{eqt:$k^{th}$-order orthogonal condition final estimator 3-replaced}a)-(\ref{eqt:$k^{th}$-order orthogonal condition final estimator 3-replaced}c).}\label{alg:$k^{th}$-order orthogonal condition estimate}
\begin{algorithmic}[1]
\STATE {\bfseries Input:} Observational dataset {\small $\{(y_{m},d_{m},\mathbf{z}_{m})\}_{m=1}^{N} = \mathscr{I}\cup\mathscr{I}^{c}$}, and {\small $\mathscr{I}\cap\mathscr{I}^{c}=\emptyset$}.\label{alg:state 1}
\STATE Train {\small $g^{i}$} and {\small $\pi^{i}$} using the observed data to obtain the estimated nuisance parameters {\small $\hat{g}^{i}$} and {\small $\hat{\pi}^{i}$}.
\STATE For each $i \in \{1,\dots,n\}$: i) relabel the observed data point {\small $(y,d,\mathbf{z})$} as {\small $(y,\tilde{d},\mathbf{z})$} such that {\small $\tilde{d}=1$} if {\small $d=d^{i}$} and {\small $\tilde{d}=0$} if {\small $d\neq d^{i}$}; ii) compute {\small $\tilde{d}-\hat{\pi}^{i}(\mathbf{z})$} for each observation and obtain the local moment {\small $\mathbb{E}\left[(\nu^i)^{q}\mid\mathbf{Z}\right]$} in \eqref{eqt:$k^{th}$-order orthogonal condition score function 2} for each {\small $q$} with the mean of all {\small $(\tilde{d}-\hat{\pi}^{i}(\mathbf{z}))^{q}$}.\label{alg:state 3}
\STATE Compute (\ref{eqt:$k^{th}$-order orthogonal condition final estimator 3-replaced}a)-(\ref{eqt:$k^{th}$-order orthogonal condition final estimator 3-replaced}b) using the observational data.\label{alg:state 4}
\STATE Compute {\small $y-\hat{g}^{i}(\mathbf{z})$} for each observation in {\small $\mathscr{I}$} and store the computed values in {\small $\hat{\mathcal{A}}^{rlz}$} such that {\small $\hat{\mathcal{A}}^{rlz}=\{y_{m}-\hat{g}^{i}(\mathbf{z}_{m})\mid m\in \mathscr{I}\}$}.\label{alg:state 5} 
\STATE For the {\small $m^{\mathrm{th}}$} individual in {\small $\mathscr{I}^c$}, compute {\small $A(d_{m},\mathbf{z}_{m};\hat{\pi}^{i})$}. 
\STATE Repeat a random picking procedure {\small $R$} times: picking an element {\small $\hat{\xi}^{i;rlz}_{m,u}$} randomly in the {\small $u^\mathrm{th}$} repeat for the {\small $m^{\mathrm{th}}$} individual. Then compute {\small $\frac{1}{R}\overset{R}{\underset{u=1}{\sum}}\left[\frac{1}{N}\underset{m\in\mathscr{I}^{c}}{\overset{}{\sum}}\hat{\xi}_{m,u}^{i;rlz} A(d_{m},\mathbf{z}_{m};\hat{\pi}^{i})\right]$} as an estimate of (\ref{eqt:$k^{th}$-order orthogonal condition final estimator 3-replaced}c).\label{alg:state 6}
\STATE {\bfseries Return:} Use the values in Step \ref{alg:state 4} and \ref{alg:state 6} to get the estimate of (\ref{eqt:$k^{th}$-order orthogonal condition final estimator 3-replaced}a)-(\ref{eqt:$k^{th}$-order orthogonal condition final estimator 3-replaced}c).\label{alg:state 8}
\end{algorithmic}
\end{algorithm}

\section{Numerical Studies}\label{sec:experiment}
In this section, we compare the performances of our RCL estimators with the DML estimator and the DR estimator through simulation and empirical experiments. In both experiments, we consider three types of regressors: Lasso, Random Forests (RF), and Multi-layer Perceptron (MLP); and three types of classifiers: Logistic Regression (LR), RF, and MLP. We combine the regression model A and the classification model B to estimate {\small $g^{i}$} and {\small $\pi^i$} respectively, and denote the combination as A+B, e.g., Lasso+LR. In the empirical experiments, we consider two additional state-of-the-art neural network models in causal inference: TARNet \cite{shalit2017estimating} and Dragonnet \cite{shi2019adapting}. All the experiments are run on Dell 3640 with Intel(R) Xeon(R) W-1290P CPU at 3.70GHz, and a set of NVIDIA GeForce RTX 2080Ti GPU.

For all the experiments throughout the paper, we use the following two metrics to evaluate the performance:
\vspace{-0.1cm}
\begin{subequations}
\begin{equation}
{\small
\begin{gathered}
\epsilon_{ATE}=\frac{1}{M}\underset{m=1}{\overset{M}{\sum}}\epsilon_{ATE;m} \; ;
\end{gathered} \label{eqt:metric}
}
\end{equation}
\vspace{-0.2cm}
%and 
\begin{equation}
{\small
\begin{gathered}
\sigma_{ATE}=\sqrt{\frac{1}{M-1}\underset{m=1}{\overset{M}{\sum}}\left[\epsilon_{ATE;m}-\epsilon_{ATE}\right]^{2}}.
\end{gathered} \label{eqt:metric variance}
}
\end{equation}
\end{subequations}
Here, {\small $\epsilon_{ATE;m}$} is the weighted relative error of the $m^{\mathrm{th}}$ experiment such that {\small $\epsilon_{ATE;m}=\frac{\underset{\underset{1 \leq i,j \leq n}{i \neq j}}{\overset{}{\sum}}\left|\hat{\theta}^{i,j;m}-\theta^{i,j;m}\right|}{\underset{\underset{1 \leq i,j \leq n}{i \neq j}}{\overset{}{\sum}}\left|\theta^{i,j;m}\right|}$} with {\small $\theta^{i,j;m}$} and {\small $\hat{\theta}^{i,j;m}$} being the true ATE and the estimated ATE between the treatment {\small $d^i$} and the treatment {\small $d^j$} of the {\small $m^{\mathrm{th}}$} experiment. {\small $n$} is the number of treatments and {\small $M$} is the number of experiments.
\subsection{Numerical Studies on Simulation Datasets}\label{sec:Numerical Studies on Simulated Data}
We first introduce the data generating process (DGP) for the simulation experiments. Given the covariates {\small $\mathbf{Z}=(Z_{1},...,Z_{p})^{T}$} which follow a standard multivariate Gaussian distribution, the treatment variable {\small $D$} has the treatment space {\small $\{d^{1}, d^{2}, d^{3}\}$} with the corresponding probability
\begin{equation}\label{eqt:prob_i}
{\small
\begin{aligned}
\pi^{i}(\mathbf{Z}) = \mathbb{P}\{D=d^{i} | \mathbf{Z}\} = \frac{\exp\left(\underset{u=1}{\overset{\lfloor p\cdot r_{c}\rfloor}{\sum}}\beta_{iu}Z_u\right)}{\underset{j=1}{\overset{3}{\sum}}\exp\left(\underset{u=1}{\overset{\lfloor p\cdot r_{c}\rfloor}{\sum}}\beta_{ju}Z_u\right)},
\end{aligned}
}
\end{equation}
where the values of coefficients {\small $\beta_{iu}$} are randomly picked from the uniform distribution {\small $\mathcal{U}(-0.1, 0.1)$}. {\small $r_c$} is the confounding ratio ranging from {\small $0$} to {\small $1$}, and the number of covariates in {\small $\mathbf{Z}$} used to generate {\small $D$} is {\small $p\cdot r_c$} and {\small $\lfloor p\cdot r_{c}\rfloor\in\mathbb{N}$}. For example, if {\small $p=10$} and {\small $r_{c} = 0.56$}, then {\small $p\cdot r_c = 5.6$} and {\small $\lfloor p\cdot r_{c}\rfloor= 5$}. We generate the potential outcome {\small $Y^{i}$} for treatment indices {\small $i \in \{1,2,3\}$} as
\begin{equation}
{\small 
\begin{aligned}
	Y^{i}& = g(d^{i}, \mathbf{Z})+\xi^{i} = e^{\sqrt{d^{i}}}\left(\bm{a}_{i}^{T}\mathbf{Z}+1\right)^2+\xi^{i},
	%		Y^{1}&=\left(0.1\left(-Z_1+1.5Z_2\right)^2+e^{Z_1}-e^{0.5Z_2}\right)e^{\sqrt{0.1}}+\xi^{1}\\
	%		Y^{2}&=\left(0.1\left(-Z_1+1.5Z_2\right)^2+e^{Z_1}-e^{0.5Z_2}\right)e^{\sqrt{0.5}}+\xi^{2}\\
	%		Y^{3}&=\left(0.1\left(-Z_1+1.5Z_2\right)^2+e^{Z_1}-e^{0.5Z_2}\right)e^{\sqrt{1.0}}+\xi^{3}
\end{aligned}
}
\end{equation}
where {\small $\bm{a}_{i}$} is a {\small $p\times 1$} constant vector whose elements are randomly chosen from {\small $\mathcal{U}(0.1, 0.5)$}. We also set {\small $d^{1}=0.1$}, {\small $d^{2}=0.5$}, {\small $d^{3}=1$}, {\small $\xi^{1} \sim \mathcal{N}(0,9)$}, {\small$\xi^{2} \sim \mathcal{N}(0,4)$} and {\small$\xi^{3} \sim \mathcal{N}(0,1)$}. Next, we generate {\small $N$} i.i.d. observations based on the DGP. Suppose the realized covariates of the {\small $m^{\mathrm{th}}$} individual are {\small $\mathbf{z}_{m}$}, then the actual treatment {\small $d_{m}$} will be $d^{k}$, where $k$ is determined by {\small $k=\underset{u \in \{1,2,3\}}{\arg\max} \; \pi^{u}(\mathbf{z}_{m})$}. Under the actual treatment $d^{k}$, the observed factual outcome {\small $y_{m}$} will correspondingly be {\small $y^{k}$}.

For the simulation experiments, we compute the DR, DML and our RCL estimators with different values of {\small $r$} and {\small $k$} (see Theorem \ref{thm:$k^{th}$-order orthogonal condition score function}), which is denoted by RCL{\small $_{r,k}$}. We then use {\small $\epsilon_{ATE}$} in \eqref{eqt:metric} with {\small $n=3$} and {\small $M=100$} to evaluate the performance of different estimators for each combination of the regressor {\small $g^{i}$} and the classifier {\small $\pi^{i}$} (denoted as regressor+classifier). We split every dataset by the ratio {\small $56\%/14\%/30\%$} as training/validation/test sets.
\paragraph{Consistency of RCL estimators}  In this part, we set {\small $r_c=1$}, {\small $p=5$}, and let the number of observations {\small $N$} vary in {\small $\{1, 2, 4, 8, 16\}\times 10000$}. We check the consistency of RCL estimators through simulations and report {\small $\epsilon_{ATE}$} in Fig. \ref{Figure:ATE_convergency_r=4}. The result indicates that the error reduces when the sample size increases for our RCL estimators. Besides, we also find that when {\small $g^{i}$} is fitted well {\small $\forall i$} (e.g., when the regressor is chosen as Lasso or RF), RCL{\small $_{2,2}$} performs better than DR, DML and other RCL estimators. On the other hand, when {\small $g^{i}$} is not fitted well for some {\small $i$}, e.g., when the regressor is chosen as MLP, the DML and the RCL estimator with $k=1$ can significantly correct the bias thanks to the doubly robust property. In this case, despite similar performances produced by RCL{\small $_{2,1}$} estimator and the DML estimator, RCL{\small $_{2,1}$} still has a smaller {\small $\epsilon_{ATE}$}.
\begin{figure}[ht]
	\setlength{\belowcaptionskip}{-0.3cm}
	\centering	
	\includegraphics[width=1\columnwidth]{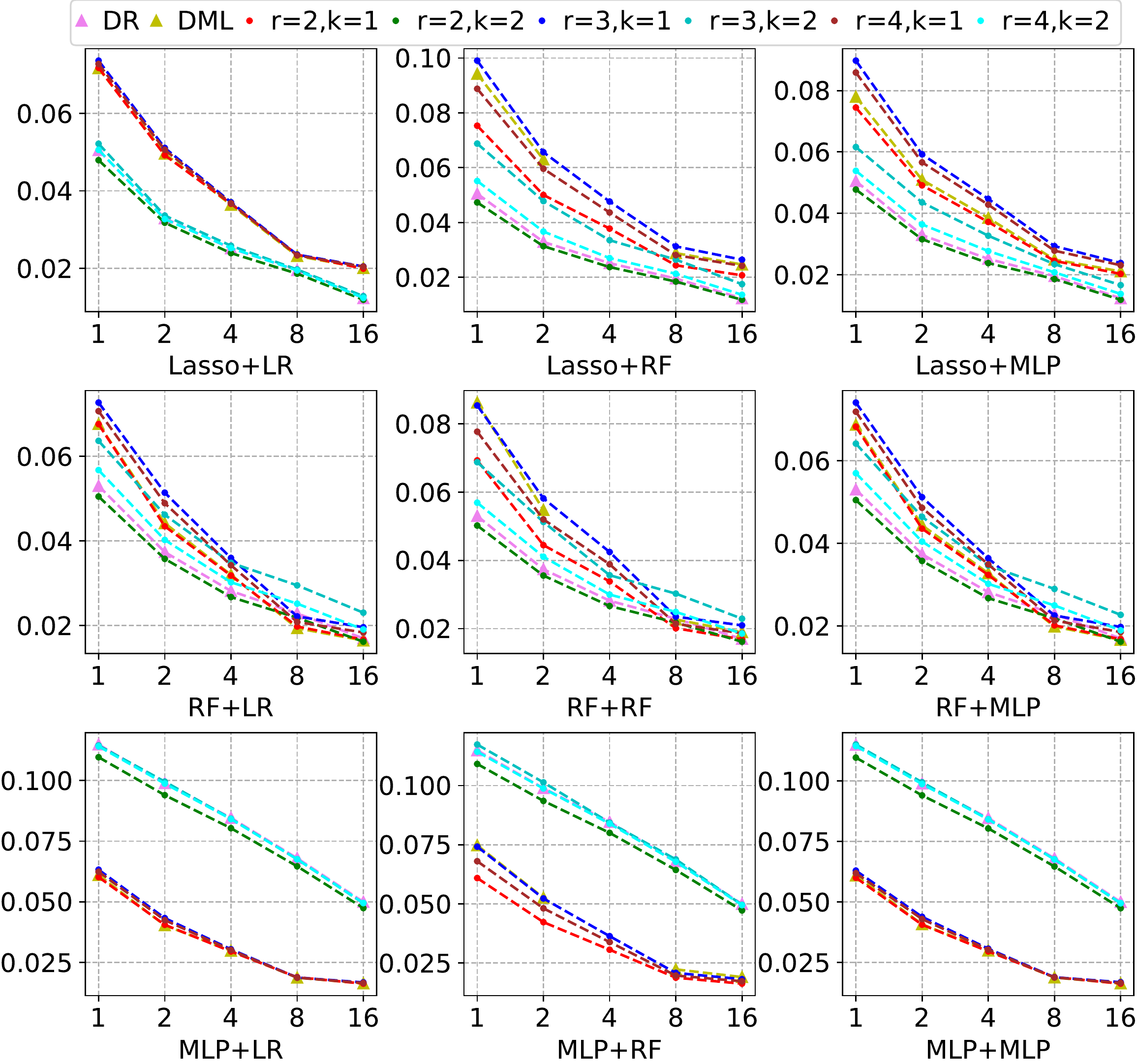}
	\caption{Plots of {\small {\small $\epsilon_{ATE}$}} versus the varying {\small $N$}: DR vs DML vs RCL.}
	\label{Figure:ATE_convergency_r=4}
\end{figure}
\begin{figure}[ht]
\setlength{\belowcaptionskip}{-0.3cm}
\centering
\includegraphics[width=0.9\columnwidth]{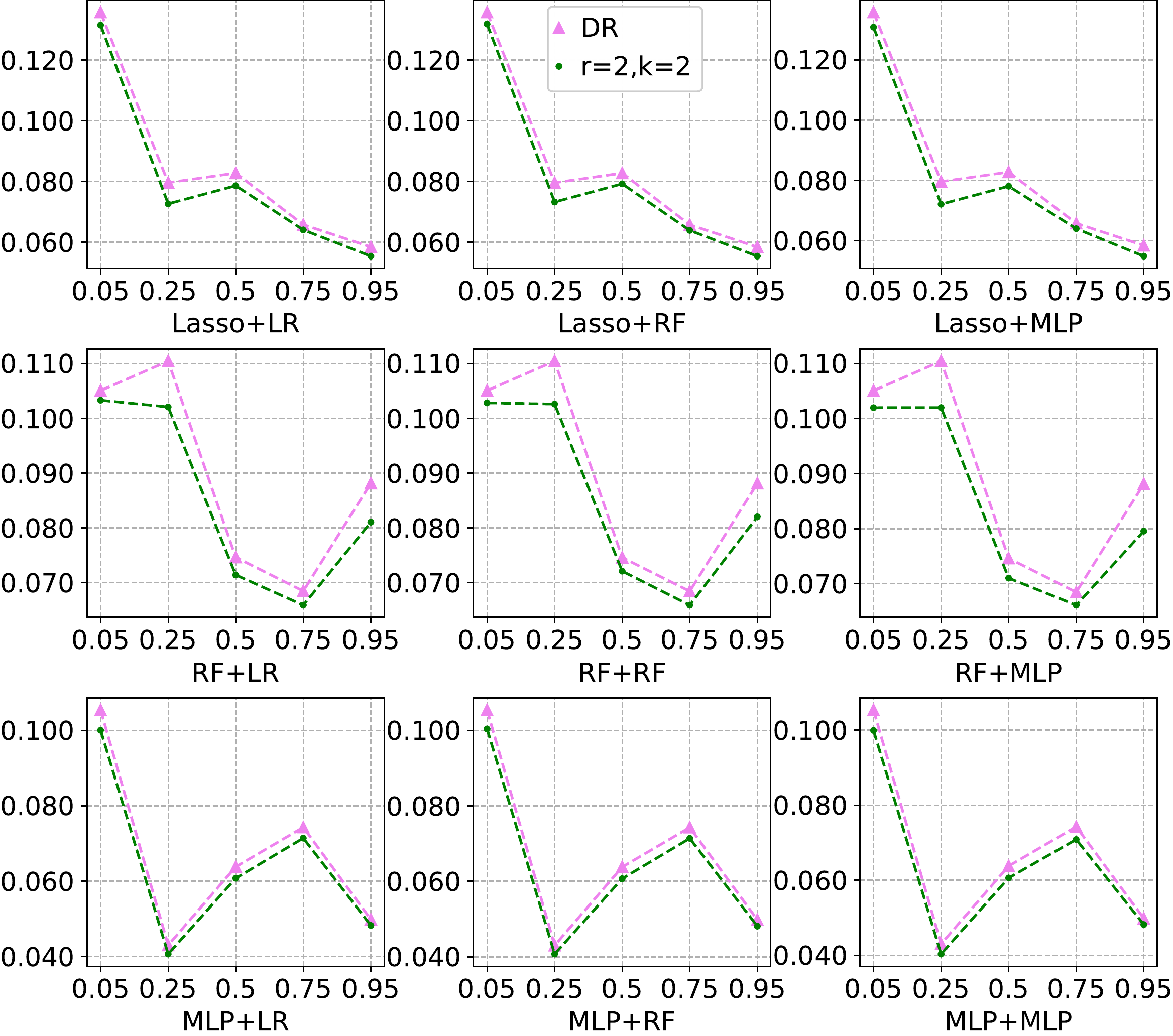}
\caption{Plots of {\small $\epsilon_{ATE}$} versus the varying {\small $r_c$}: DR vs RCL{\small $_{2,2}$}.}
\label{Figure:confounding_ratio_DRvsROL}
\end{figure}
\begin{figure}[ht]
\setlength{\belowcaptionskip}{-0.3cm}
\centering
\includegraphics[width=0.9\columnwidth]{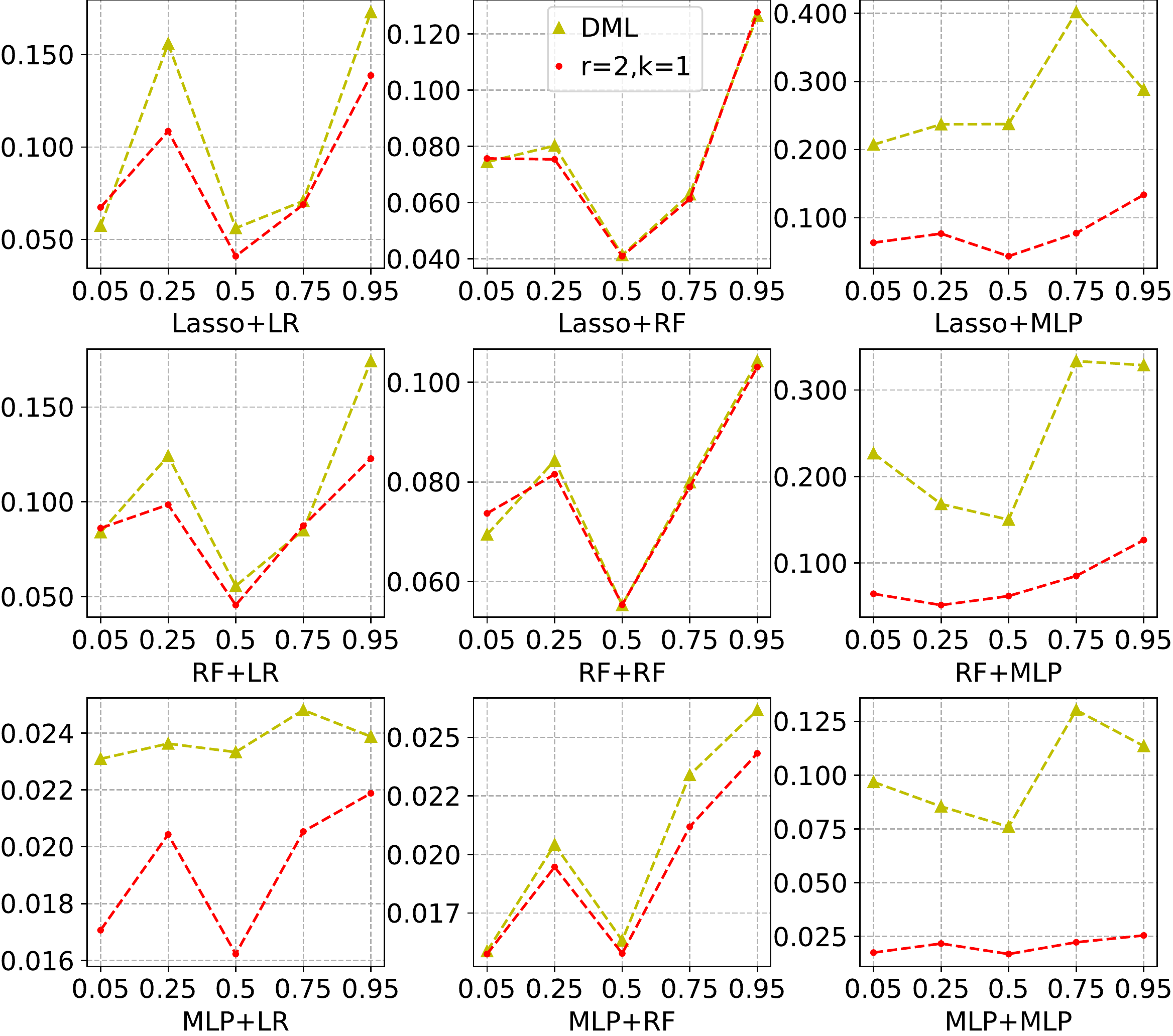}
\caption{Plots of {\small $\epsilon_{ATE}$} versus the varying {\small $r_c$}: DML vs RCL{\small $_{2,1}$}.}
\label{Figure:confounding_ratio_DMLvsROL}
\end{figure}
\begin{figure}[ht]
\setlength{\belowcaptionskip}{-0.3cm}
\centering
\includegraphics[width=0.9\columnwidth]{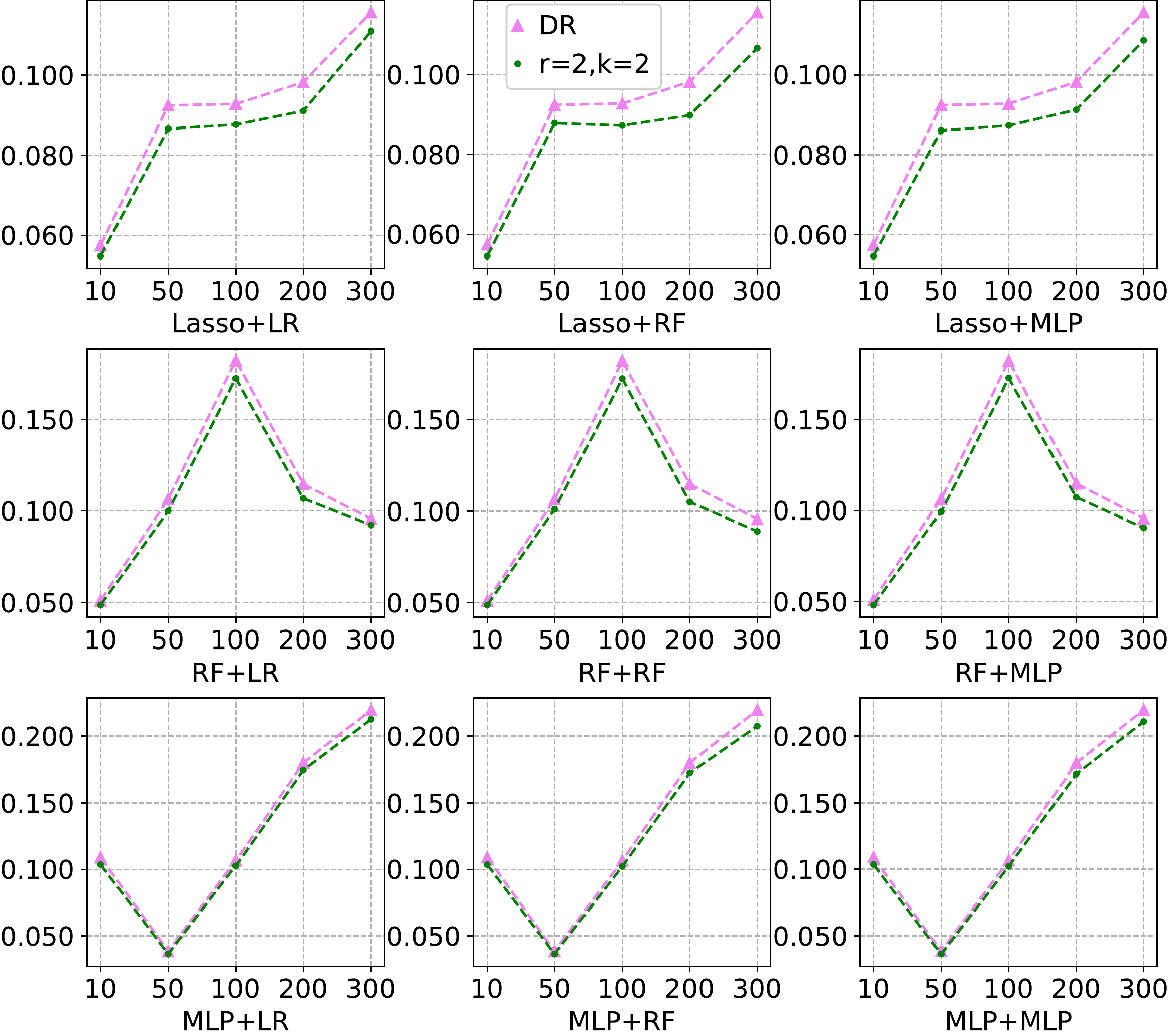}
\caption{Plots of {\small $\epsilon_{ATE}$} versus the varying {\small $p$}: DR vs RCL{\small $_{2,2}$}.}
\label{Figure:dim_DRvsROL}
\end{figure}
\begin{figure}[ht]
\setlength{\belowcaptionskip}{-0.3cm}
\centering
\includegraphics[width=0.9\columnwidth]{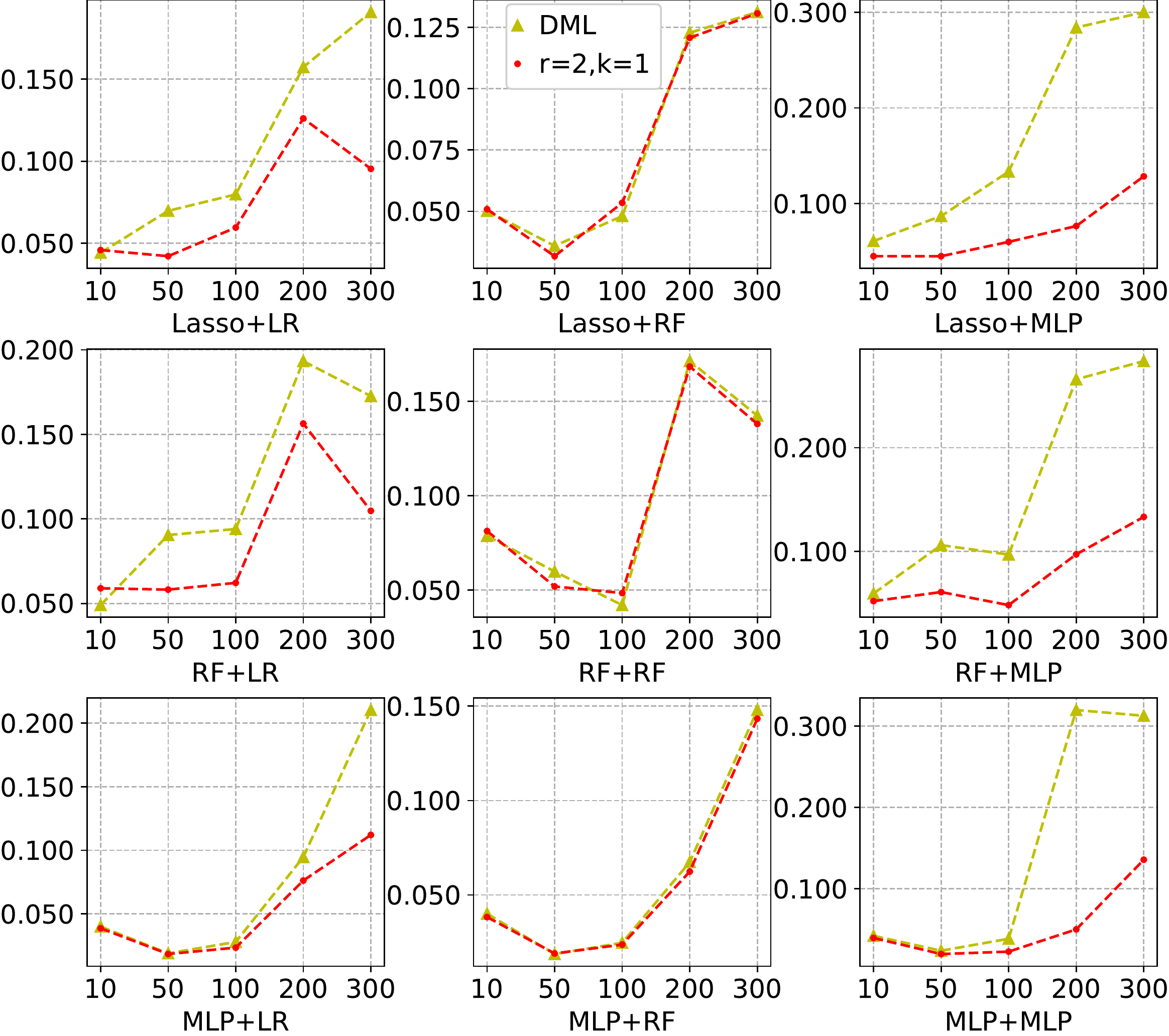}
\caption{Plots of {\small $\epsilon_{ATE}$} versus the varying {\small $p$}: DML vs RCL{\small $_{2,1}$}.}
\label{Figure:dim_DMLvsROL}
\end{figure}
\paragraph{Varying {\small $r_{c}$} and {\small $p$}} In the following experiments, we mainly compare RCL{\small $_{2,2}$} with DR, and RCL{\small $_{2,1}$} with DML since the above simulation experiments indicate that RCL{\small $_{2,2}$} and DR perform similarly, and RCL{\small $_{2,1}$} has similar trends to DML. We set {\small $N=10000$} and plot {\small $\epsilon_{ATE}$} produced by each model combination A+B versus i) different {\small $r_{c}$} with {\small $p = 100$} in Fig. \ref{Figure:confounding_ratio_DRvsROL} and Fig. \ref{Figure:confounding_ratio_DMLvsROL}; ii) different {\small $p$} with {\small $r_{c}=1$} in Fig. \ref{Figure:dim_DRvsROL} and Fig. \ref{Figure:dim_DMLvsROL}. From all the four figures, we observe that the DML estimator is sensitive to the change of {\small $r_{c}$} and {\small $p$}, especially when the classifier is MLP. As analyzed before, if the estimation error of {\small $\pi^{i}(\mathbf{z})$} is non-negligible for some $\mathbf{z}$, the term {\small $\frac{1}{\hat{\pi}^i(\cdot)}$} of the DML estimator often gives extreme values especially when {\small$\hat{\pi}^{i}(\mathbf{z})$} is small, leading to pronounced estimation errors of the ATE. Indeed, any ATE estimators that involve the inverse propensity score term might face this error-compounding issue. By contrast, our RCL estimators are less volatile to the variation of {\small $r_{c}$} and {\small $p$} regardless of the choice of classifiers. For example, in Fig. \ref{Figure:dim_DMLvsROL}, we notice that when the classifier is MLP, the error of DML rises dramatically as {\small $p$} increases, while RCL{\small $_{2,1}$} performs more steadily. In addition, our RCL{\small $_{2,2}$} estimator overall has a smaller {\small $\epsilon_{ATE}$} than the DR estimator no matter how {\small $r_{c}$} or {\small $p$} varies.
\subsection{Numerical Studies on Benchmark Datasets}\label{sec:Numerical Studies on Benchmark Datasets}
\begin{table*}[htbp]
	\centering
	\caption{The performance comparisons ({\small $\epsilon_{ATE} \pm \sigma_{ATE}$}) on the test sets of 1000 \textbf{IHDP} experiments. Smaller {\small $\epsilon_{ATE}$} is better.}
	\resizebox{2\columnwidth}{!}{
		\begin{tabular}{cccc|cccccc}
			\toprule
			Model/Estimator & DR    & RCL$_{2,2}$ & $R_{DR}$ & IPW   & AIPW  & DML   & DML-trim & RCL$_{2,1}$ & $R_{DML}$ \\
			\midrule
			LASSO+LR & \num{0.0957016592574597}$\pm$\num{0.640267801173915} & \num{0.0922338093233415}$\pm$\num{0.577391472448973} & -3.6\% & \num{1.51697356015605}$\pm$\num{4.68598260312737} & \num{0.105763367167141}$\pm$\num{0.300194083971267} & \num{0.105763367167142}$\pm$\num{0.300194083971274} & \num{0.105763364379096}$\pm$\num{0.300194084194018} & \num{0.104851958331204}$\pm$\num{0.651543659112282} & -0.9\% \\
			\midrule
			LASSO+RF & \num{0.0957016592574597}$\pm$\num{0.640267801173915} & \num{0.0919418871606084}$\pm$\num{0.564172928180374} & -3.9\% & $\infty$ & \num{0.109315513248708}$\pm$\num{0.280292014664635} & $\infty$ & \num{0.11131590918084}$\pm$\num{0.291181180228278} & \num{0.10262849397032}$\pm$\num{0.630798484505913} & -6.1\% \\
			\midrule
			LASSO+MLP & \num{0.0957016592574597}$\pm$\num{0.640267801173915} & \num{0.0927439264967579}$\pm$\num{0.593285296206187} & -3.1\% & \num{4.4812490222709}$\pm$\num{9.85644139084187} & \num{0.224440190445999}$\pm$\num{0.439720169984648} & \num{0.224440190445998}$\pm$\num{0.439720169984648} & \num{0.19311183721892}$\pm$\num{0.323636374244496} & \num{0.102944659987956}$\pm$\num{0.590220193929016} & -47\% \\
			\midrule
			RF+LR & \num{0.0979732966141551}$\pm$\num{0.553484192555486} & \num{0.0932579681403953}$\pm$\num{0.439817975315819} & -4.8\% & \num{1.51697356015605}$\pm$\num{4.68598260312737} & \num{0.128811621589012}$\pm$\num{0.724585718194816} & \num{0.128811621589012}$\pm$\num{0.724585718194816} & \num{0.128811615676305}$\pm$\num{0.724585718507295} & \num{0.128842821124574}$\pm$\num{1.07488187681154} & 0.0\% \\
			\midrule
			RF+RF & \num{0.0979732966141551}$\pm$\num{0.553484192555486} & \num{0.0942932571289985}$\pm$\num{0.458204054417463} & -3.8\% & $\infty$ & \num{0.126178712647616}$\pm$\num{0.638434948404866} & $\infty$ & \num{0.128012877019247}$\pm$\num{0.641268017780346} & \num{0.124621883086877}$\pm$\num{1.01221269744655} & -1.2\% \\
			\midrule
			RF+MLP & \num{0.0979732966141551}$\pm$\num{0.553484192555486} & \num{0.0950802814635367}$\pm$\num{0.506860794218963} & -3.0\% & \num{4.4812490222709}$\pm$\num{9.85644139084187} & \num{0.240480141381655}$\pm$\num{0.756511151700561} & \num{0.240480141381655}$\pm$\num{0.756511151700565} & \num{0.211482597532412}$\pm$\num{0.714411783365695} & \num{0.120590099634166}$\pm$\num{0.883571527750748} & -43\% \\
			\midrule
			MLP+LR & \num{0.0803597106682557}$\pm$\num{0.229579144869116} & \num{0.0790235103312214}$\pm$\num{0.229472621327462} & -1.7\% & \num{1.51697356015605}$\pm$\num{4.68598260312737} & \num{0.141389283589274}$\pm$\num{0.242181566583048} & \num{0.141389283589274}$\pm$\num{0.242181566583048} & \num{0.14138927504712}$\pm$\num{0.242181564569142} & \num{0.104578167591645}$\pm$\num{0.194193628272569} & -26\% \\
			\midrule
			MLP+RF & \num{0.0803597106682557}$\pm$\num{0.229579144869116} & \num{0.0792146226254808}$\pm$\num{0.23321971636867} & -1.4\% & $\infty$ & \num{0.140930737288581}$\pm$\num{0.244936557918274} & $\infty$ & \num{0.142145985158482}$\pm$\num{0.253716565669282} & \num{0.104382624931009}$\pm$\num{0.192382428814327} & -26\% \\
			\midrule
			MLP+MLP & \num{0.0803597106682557}$\pm$\num{0.229579144869116} & \num{0.0790382014019174}$\pm$\num{0.230289370683907} & -1.6\% & \num{4.4812490222709}$\pm$\num{9.85644139084187} & \num{0.388796191412648}$\pm$\num{1.11801289407659} & \num{0.388796144909208}$\pm$\num{1.11801240474788} & \num{0.340278452668023}$\pm$\num{0.951125010966888} & \num{0.111926344793105}$\pm$\num{0.235810638754905} & -67\% \\
			\midrule
			TARNet & \num{0.0542958025852373}$\pm$\num{0.0942507273076831} & \num{0.0533817393804141}$\pm$\num{0.0918769235152564} & -1.7\% & \num{1.27631249863221}$\pm$\num{4.1830027255246} & \num{0.0888336751009626}$\pm$\num{0.16250044416974} & \num{0.0888336437847067}$\pm$\num{0.162499466513899} & \num{0.0888336437847067}$\pm$\num{0.162499466513899} & \num{0.0827653474603322}$\pm$\num{0.197713461439485} & -6.8\% \\
			\midrule
			Dragonnet & \num{0.0562447309384025}$\pm$\num{0.092229254884046} & \num{0.0557504174151612}$\pm$\num{0.0952493269346849} & -0.9\% & \num{1.71597732398186}$\pm$\num{4.81418695573581} & \num{0.14127353877066}$\pm$\num{0.218639997523957} & \num{0.141273601602494}$\pm$\num{0.218640208253127} & \num{0.134702721565834}$\pm$\num{0.175778675670746} & \num{0.0816722418781141}$\pm$\num{0.105565159717243} & -39\% \\
			\bottomrule
		\end{tabular}%
	}
	\label{Table:ihdp}%
\end{table*}%
\begin{table*}[htbp]
	\centering
	\caption{The performance comparisons ({\small $\epsilon_{ATE} \pm \sigma_{ATE}$}) on the test sets of 100 \textbf{Twins} experiments. Smaller {\small $\epsilon_{ATE}$} is better.}
	\resizebox{2\columnwidth}{!}{
		\begin{tabular}{cccc|cccccc}
			\toprule
			Model/Estimator & DR    & RCL$_{2,2}$ & $R_{DR}$ & IPW   & AIPW  & DML   & DML-trim & RCL$_{2,1}$ & $R_{DML}$ \\
			\midrule
			LASSO+LR & \num{0.667161417975565}$\pm$\num{0.436943790888348} & \num{0.646031944788284}$\pm$\num{0.357810609635846} & -3.2\% & \num{0.997186654546707}$\pm$\num{1.0282479839264} & \num{0.863413170171111}$\pm$\num{0.867510368220328} & \num{0.863413170171112}$\pm$\num{0.867510368220329} & \num{0.863413170171112}$\pm$\num{0.867510368220329} & \num{0.860539637381624}$\pm$\num{0.861111151796158} & -0.3\% \\
			\midrule
			LASSO+RF & \num{0.667161417975565}$\pm$\num{0.436943790888348} & \num{0.640423458872207}$\pm$\num{0.346180238288157} & -4.0\% & \num{1.06774689631041}$\pm$\num{1.06667674961721} & \num{0.912542741177314}$\pm$\num{1.0324487837971} & \num{0.912542741177313}$\pm$\num{1.03244878379709} & \num{0.912541530559238}$\pm$\num{1.03244888082876} & \num{0.850930835731041}$\pm$\num{0.857651696249796} & -6.8\% \\
			\midrule
			LASSO+MLP & \num{0.667161417975565}$\pm$\num{0.436943790888348} & \num{0.652263208887379}$\pm$\num{0.361908866758876} & -2.2\% & \num{2.71695259473322}$\pm$\num{2.53524699605525} & \num{0.941193658246498}$\pm$\num{0.969680994312327} & \num{0.941193658246499}$\pm$\num{0.969680994312327} & \num{0.924807382693669}$\pm$\num{0.950933057524706} & \num{0.883284524889292}$\pm$\num{0.851073779438786} & -4.5\% \\
			\midrule
			RF+LR & \num{0.604186638495974}$\pm$\num{0.533120559180304} & \num{0.576393675529002}$\pm$\num{0.460608409580317} & -4.6\% & \num{0.997186654546707}$\pm$\num{1.0282479839264} & \num{0.782618974579244}$\pm$\num{0.863295302774404} & \num{0.782618974579244}$\pm$\num{0.863295302774404} & \num{0.782618974579244}$\pm$\num{0.863295302774404} & \num{0.774451947731414}$\pm$\num{0.842482685707567} & -1.0\% \\
			\midrule
			RF+RF & \num{0.604186638495974}$\pm$\num{0.533120559180304} & \num{0.573526624706884}$\pm$\num{0.444114614484407} & -5.1\% & \num{1.06774689631041}$\pm$\num{1.06667674961721} & \num{0.898143875599452}$\pm$\num{1.04811245118099} & \num{0.898143875599452}$\pm$\num{1.04811245118099} & \num{0.898142228357805}$\pm$\num{1.04811284592583} & \num{0.808639013083756}$\pm$\num{0.860979316949181} & -10\% \\
			\midrule
			RF+MLP & \num{0.604186638495974}$\pm$\num{0.533120559180304} & \num{0.582058275375756}$\pm$\num{0.467041836393486} & -3.7\% & \num{2.71695259473322}$\pm$\num{2.53524699605525} & \num{0.878704531177447}$\pm$\num{0.939717406064745} & \num{0.878704531177446}$\pm$\num{0.939717406064746} & \num{0.863391078384253}$\pm$\num{0.925297128540437} & \num{0.816129289317978}$\pm$\num{0.828611304610332} & -5.5\% \\
			\midrule
			MLP+LR & \num{0.660213653748403}$\pm$\num{0.643159826971596} & \num{0.624421249388836}$\pm$\num{0.561873977426024} & -5.4\% & \num{0.997186654546707}$\pm$\num{1.0282479839264} & \num{0.822347815959208}$\pm$\num{0.831239772507} & \num{0.822347815959207}$\pm$\num{0.831239772507} & \num{0.822347815959207}$\pm$\num{0.831239772507} & \num{0.817012134151291}$\pm$\num{0.827875248022536} & -0.6\% \\
			\midrule
			MLP+RF & \num{0.660213653748403}$\pm$\num{0.643159826971596} & \num{0.618517938691176}$\pm$\num{0.548179257855073} & -6.3\% & \num{1.06774689631041}$\pm$\num{1.06667674961721} & \num{0.939531974914234}$\pm$\num{1.00613870291003} & \num{0.939531974914234}$\pm$\num{1.00613870291003} & \num{0.939531328888188}$\pm$\num{1.00613877340099} & \num{0.84527502195103}$\pm$\num{0.827317719691973} & -10\% \\
			\midrule
			MLP+MLP & \num{0.660213653748403}$\pm$\num{0.643159826971596} & \num{0.629679914590112}$\pm$\num{0.562780660147385} & -4.6\% & \num{2.71695259473322}$\pm$\num{2.53524699605525} & \num{0.904776867847457}$\pm$\num{0.954993073594451} & \num{0.904776832924196}$\pm$\num{0.95499336557208} & \num{0.899217557920781}$\pm$\num{0.953636899261841} & \num{0.849724793635334}$\pm$\num{0.833052100933173} & -5.5\% \\
			\midrule
			TARNet & \num{0.656097905805353}$\pm$\num{0.600217082337399} & \num{0.621069087185189}$\pm$\num{0.509983143308584} & -5.3\% & \num{2.46916252879544}$\pm$\num{2.98410306499364} & \num{0.937674888447282}$\pm$\num{1.30252582133794} & \num{0.937675061452057}$\pm$\num{1.30252599182589} & \num{0.937675061452057}$\pm$\num{1.30252599182589} & \num{0.864644737634207}$\pm$\num{1.04331803418323} & -7.8\% \\
			\midrule
			Dragonnet & \num{0.677258215220963}$\pm$\num{0.634940870296755} & \num{0.642171326254504}$\pm$\num{0.561434959073984} & -5.2\% & \num{1.66850913554347}$\pm$\num{1.65120065182147} & \num{0.794897423744431}$\pm$\num{0.740918933191085} & \num{0.794897437630666}$\pm$\num{0.740918949177126} & \num{0.794897437630666}$\pm$\num{0.740918949177126} & \num{0.79024433520866}$\pm$\num{0.773968125856086} & -0.6\% \\
			\bottomrule
		\end{tabular}%
	}
	\label{Table:twins}%
\end{table*}%
\paragraph{Models} Similar to the simulation experiments, we choose Lasso, RF, and MLP as the regressors while LR, RF, and MLP as the classifiers. Additionally, two prevalent neural network models, TARNet and Dragonnet, are also considered for learning the nuisance parameters. According to \cite{shi2019adapting}, these two neural network structures can incorporate the estimations of both {\small $g^{i}$} and {\small $\pi^{i}$} using the representation learning technique.

\paragraph{Settings} We implement the above methods on two widely adopted benchmark datasets for causal inference, i.e., \textbf{IHDP} and \textbf{Twins}, and then compare RCL estimators with DR, IPW, DML, and their variants AIPW and DML-trim estimators. Mathematically, both the AIPW estimator and the DML-trim estimator are the same as the DML estimator. However, empirically, AIPW and DML-trim are less prone to suffer the extreme values. To be precise, AIPW decomposes the estimator into two parts that both contain the IPW term (see \cite{linden2016estimating}), while DML-trim trims estimated propensity scores at the cutoff points of $0.01$ and $0.99$ (see \cite{chernozhukov2018double}).

We take the RCL{\small $_{2,1}$} and RCL{\small $_{2,2}$} as the representatives of the general RCL estimators because for the real datasets with a relatively small sample size and a large dimension of features, the second-moment estimation of {\small $\nu^i$} is more reliable compared to the higher-moment estimations. We use grid search to adjust the hyperparameters for those general machine learning models. For TARNet and Dragonnet, we use the same network structures (layers, units, regularization, batch size, learning rate, and stopping criterion) as suggested in \cite{shalit2017estimating} and \cite{shi2019adapting}.

\paragraph{IHDP} It is a widely used benchmark dataset for causal inference introduced by \cite{hill2011bayesian}. IHDP dataset is constructed based on the randomized controlled experiment conducted by Infant Health and Development Program. The collected 25-dimensional confounders from the 747 samples are associated with the properties of infants and their mothers, such as birth weight and mother's age. Our aim is to study the treatment effect of the specialist visits (binary treatment) on the cognitive scores (continuous-valued outcome). By removing a subset of the treated group, the selection bias in the IHDP dataset occurs. 
There are 1000 IHDP datasets given in \cite{hill2011bayesian}. Each dataset is split by the ratio of {\small $63\%/27\%/10\%$} as training/validation/test sets, which keeps consistent with \cite{shalit2017estimating}.

\paragraph{Twins} Twins dataset is introduced by \cite{louizos2017causal} and it collects twin births in the USA between 1989 and 1991. The treatment {\small $D=1$} indicates the heavier twin while {\small $D=0$} indicates the lighter twin; the outcome {\small $Y$} is a binary variable defined as the mortality in the first year; the covariates {\small $\mathbf{Z}$} include 30 features relevant to the parents, the pregnancy and the birth. Similar to \cite{yoon2018ganite}, we only select twins that have the same gender and both weigh less than 2kg. Finally, we have {\small $11440$} pairs of twins whose mortality rates are {\small $17.7\%$} for lighter twin and {\small $16.1\%$} for heavier twin. To simulate an observational dataset with selection bias, we selectively choose one of the two twins as the observed sample based on the covariates of $m^{\text{th}}$ individual: {\small $D_m|\mathbf{Z}_m \sim$} Bernoulli(Sigmoid({\small $\mathbf{w}^T\mathbf{Z}_m+b$})), where {\small $\mathbf{w}^T \sim \mathcal{U}_{30}((-0.01,0.01)^{30})$} and {\small $b \sim \mathcal{N}(0,0.01)$}. We repeat this process {\small $100$} times, and each of the generated {\small $100$} Twins datasets is split by the ratio of {\small $64\%/16\%/20\%$} as training/validation/test sets, which keeps consistent with \cite{yoon2018ganite}.

\paragraph{Analysis} In Table \ref{Table:ihdp} and Table \ref{Table:twins}, we report the performance of every model combination, measured by {\small $\epsilon_{ATE} \; (\pm \sigma_{ATE})$}, for IHDP and Twins experiments, respectively. The smaller {\small $\epsilon_{ATE}$}, the better. The metric {\small $R_{DR}=\text{RCL}_{2,2}/\text{DR}-1$} ({\small $R_{DML}=\text{RCL}_{2,1}/\min(\text{IPW, AIPW, DML, DML-trim})-1$}) is used to evaluate the reduction ratio in {\small $\epsilon_{ATE}$} of RCL{\small $_{2,2}$} relative to DR (RCL{\small $_{2,1}$} relative to the best estimator among IPW, AIPW, DML, and DML-trim). The negative {\small $R_{DR}$} ({\small $R_{DML}$}) indicates that the RCL estimator has a smaller {\small $\epsilon_{ATE}$} than the DR (IPW, AIPW, DML, and DML-trim) estimator.

Table \ref{Table:ihdp} reports the experimental results on IHDP datasets. It illustrates that although the DR estimator produces reasonable estimates, the RCL{\small $_{2,2}$} estimator has a more minor {\small $\epsilon_{ATE}$} than the DR estimator, with the error reduced relatively by {\small $0.9\%-4.8\%$}. Simultaneously, RCL{\small $_{2,2}$} achieves the best performance among all the estimators across all the model combinations. We also notice that even though the AIPW and DML-trim avoid extreme values encountered by DML (e.g., when the classifier is chosen as RF, the inverse propensity score is estimated with an infinity value for some data points), the RCL{\small $_{2,1}$} estimator is still at most $67\%$ better than the best of IPW, DML, and DML-trim estimators. More importantly, when the variance of inverse propensity scores is large (e.g., when the classifier is MLP), the improvement of RCL{\small $_{2,1}$} to DML becomes more substantial.

Table \ref{Table:twins} presents the experimental results on Twins datasets. It can be observed that the RCL{\small $_{2,2}$} estimator has a significantly smaller {\small $\epsilon_{ATE}$} compared with other estimators for all model combinations, and it can reduce the estimation error relatively by {\small $2.2\%-6.3\%$} compared with the DR method. Besides, the RCL{\small $_{2,1}$} estimator can reduce the estimation error by {\small $0.3\%-10\%$} relative to the best of IPW, DML, and DML-trim estimators. It is also noticeable that when {\small $\pi^{i}$} is well specified (e.g., the case on using Dragonnet in Table \ref{Table:twins}), our RCL{\small $_{2,1}$} estimator still outperforms the DML estimator even though the error {\small $\epsilon_{ATE}$} produced by the DML estimator is small enough.

\subsection{Numerical Studies on Credit dataset}
Causal inference benchmark datasets are typically generated by a parametric data generating process. Though the ground truth of treatment effects are accessable in this way, such semi-synthetic datasets fail to resemble the original real data sets.

In summary, DML is recognized as a better method than DR and IPW because when the DR (IPW) estimator has a notable bias due to the misspecification on {\small $g^{i}$} ({\small $\pi^{i}$}), the DML estimator can reduce the bias if {\small $\pi^{i}$} ({\small $g^{i}$}) is well estimated. However, the advantages of DML are not easy to achieve in practice. First, the DML estimator, which incorporates the inverse propensity score term, may give a very large estimation or even infinite value of the ATE, reflecting that the DML estimator is volatile to the estimation of propensity scores. Second, if {\small $g^{i}$} is approximated well enough, DML will not assuredly perform better than DR due to the high variance of the IPW term. By contrast, our RCL estimators are more practical since i) they can stabilize the error caused by the misspecification on propensity scores; ii) if {\small $g^{i}$} is well approximated, the RCL{\small $_{2,2}$} estimator will outperform the DR estimator owing to the RCL scores are orthogonal scores; iii) if {\small $g^{i}$} is not well approximated, but {\small $\pi^{i}$} is correctly specified, the RCL estimator with {\small $k=1$} performs better than the DML estimator with smaller estimation errors and slighter volatility to the estimated propensity scores.

\section{Conclusion}
This paper constructs the RCL scores and establishes the RCL estimators for the ATE estimation. Theoretically, we prove that the RCL scores are orthogonal scores and the RCL estimators are consistent. Numerically, the comprehensive experiments have shown that our estimators outperform the commonly used estimators such as DR, IPW, AIPW, DML, and DML-trim estimators. In addition, the proposed RCL estimators have the same merit, i.e., the doubly robust property, as the DML estimator. However, unlike the DML estimator, the RCL estimators are more stable to the estimation error due to the misspecification on propensity scores than the DML estimator and its variants. In the future research, we will i) investigate the optimal values of {\small $(r,k)$} in Theorem \ref{thm:$k^{th}$-order orthogonal condition score function}; and ii) provide interpretability for deep learning models in causal inference using the RCL method.

\section{Acknowledgements}
Qi WU acknowledges the support from the Hong Kong Research Grants Council [General Research Fund 14206117, 11219420, and 11200219], CityU SRG-Fd fund 7005300, and the support from the CityU-JD Digits Laboratory in Financial Technology and Engineering, HK Institute of Data Science. The work described in this paper was partially supported by the InnoHK initiative, The Government of the HKSAR, and the Laboratory for AI-Powered Financial Technologies.

Shumin MA acknowledges the support from the Guangdong Provincial Key Laboratory of Interdisciplinary Research and Application for Data Science, BNU-HKBU United International College under project code 2022B1212010006, the support from Guangdong Higher Education Upgrading Plan (2021-2025) of ``Rushing to the Top, Making Up Shortcomings and Strengthening Special Features" with UIC research grant R0400001-22, and the UIC grant UICR0700019-22.

\section{Appendices}\label{sec:Appendices}
\subsection{proofs}\label{sec:proofs}
We present the theoretical proofs of Theorems and Corollaries given in the paper.
\begin{proof}[\textbf{Proof of Theorem \ref{thm:$k^{th}$-order orthogonal condition score function}}]\ \\
Given the nuisance parameters $\varrho=(\mathcal{g}^{i},a_{i})$ and the true nuisance parameters $\rho=(g^{i},\pi^{i})$, we find out the RCL score $\psi^{i}(W,\vartheta,\varrho)$ w.r.t. the nuisance parameters $\varrho=(\mathcal{g}^{i},a_{i})$ which can be used to construct the estimators of the causal parameter $\theta^{i}:=\mathbb{E}\left[g^{i}(\mathbf{Z})\right]$. We try an ansatz of $\psi^{i}(W,\vartheta,\varrho)$ such that
\begin{equation}
{\small
\begin{aligned}\label{eqt:higher-order orthogonal condition score function-supp}
\psi^{i}(W,\vartheta,\varrho)&=\vartheta-\mathcal{g}^{i}(\mathbf{Z})-(Y^{i}-\mathcal{g}^{i}(\mathbf{Z}))A(D,\mathbf{Z};a_{i}),
\end{aligned}
}
\end{equation}
where 
\begin{equation}
{\small
\begin{aligned}\label{eqt:higher-order orthogonal condition score function 2-supp}
A(D,\mathbf{Z};a_{i})&=\bar{b}_{r}\left[\mathbf{1}_{\{D=d^{i}\}}-a_{i}(\mathbf{Z})\right]^{r}\\
&+\underset{q=1}{\overset{k-1}{\sum}}b_{q}\big(\big[\mathbf{1}_{\{D=d^{i}\}}-a_{i}(\mathbf{Z})\big]^{q}-\mathbb{E}\big[(\nu^{i})^{q}\mid\mathbf{Z}\big]\big).
\end{aligned}
}
\end{equation}
Here, the coefficients $b_{1},\cdots,b_{k-1},\bar{b}_{r}$ depend on $\mathbf{Z}$ and $\nu^{i}$ only. Using the ansatz, we notice that $\psi^{i}(W,\vartheta,\varrho)$ satisfies the moment condition, i.e., $\mathbb{E}\left[\psi^{i}(W,\vartheta,\varrho)\mid_{\vartheta=\theta^{i},\;\varrho=\rho}\right]=0$. Indeed, we have
\begin{equation*}
{\small
\begin{aligned}
&\mathbb{E}\left[\psi^{i}(W,\vartheta,\varrho)\mid_{\vartheta=\theta^{i},\;\varrho=\rho}\right]\\
=&\mathbb{E}\left[\theta^{i}-g^{i}(\mathbf{Z})-(Y^{i}-g^{i}(\mathbf{Z}))A(D,\mathbf{Z};\pi^{i})\right]\\
=&\mathbb{E}\left[\theta^{i}-g^{i}(\mathbf{Z})\right]-\mathbb{E}\left[(Y^{i}-g^{i}(\mathbf{Z}))A(D,\mathbf{Z};\pi^{i})\right]\\
=&-\mathbb{E}\left[\xi^{i}\times A(D,\mathbf{Z};\pi^{i})\right]\\
=&-\mathbb{E}\left[\mathbb{E}\left[\xi^{i}\times A(D,\mathbf{Z};\pi^{i})\mid D, \mathbf{Z}\right]\right]\\
=&-\mathbb{E}\left[A(D,\mathbf{Z};\pi^{i})\mathbb{E}\left[\xi^{i}\mid D,\mathbf{Z}\right]\right]=0.\\
%=&-\mathbb{E}\left[A(D,\mathbf{Z};\pi^{i})\mathbb{E}\left[\xi^{i}\mid \mathbf{Z}\right]\right]=0.
\end{aligned}
}
\end{equation*}
The second last equality comes from the fact that $A(D,\mathbf{Z};\pi^{i})$ is a function of $(D,\mathbf{Z})$. The last equality comes from the fact that $(\xi^{i} \perp \!\!\! \perp D) \mid \mathbf{Z}$. Now, we aim to find out the coefficients of $b_{1},\cdots,b_{k-1},\bar{b}_{r}$ such that the score \eqref{eqt:higher-order orthogonal condition score function-supp} satisfies the $k^{\mathrm{th}}$ score. Indeed, we need to have $\mathbb{E}\left[\partial_{\mathcal{g}^{i}}^{\alpha_{1}}\partial_{a_{i}}^{\alpha_{2}}\psi^{i}(W,\vartheta,\varrho)\mid_{\vartheta=\theta^{i},\;\varrho=\rho}\mid\mathbf{Z}\right]=0$ for all $\alpha_{1}$ and $\alpha_{2}$ which are non-negative integers such that $1\leq \alpha_{1}+\alpha_{2}\leq k$. Since $\partial_{\mathcal{g}^{i}}^{\alpha_{1}}\partial_{a_{i}}^{\alpha_{2}}\psi^{i}(W,\vartheta,\varrho)=0$ when $\alpha_{1}\geq 2$, we only need to solve the coefficients $b_{1},\cdots,b_{k-1},\bar{b}_{r}$ from
\begin{subequations}
{\small
\begin{empheq}[left=\empheqlbrace]{align}
0&=\mathbb{E}\left[\partial_{a_{i}}^{k}\psi^{i}(W,\vartheta,\varrho)\mid_{\vartheta=\theta^{i},\;\varrho=\rho}\mid\mathbf{Z}\right],\label{eqt:score condition 1}\\
0&=\mathbb{E}\left[\partial_{\mathcal{g}^{i}}^{1}\partial_{a_{i}}^{q}\psi^{i}(W,\vartheta,\varrho)\mid_{\vartheta=\theta^{i},\;\varrho=\rho}\mid\mathbf{Z}\right]\label{eqt:score condition 2}
\end{empheq}
}\noindent
\end{subequations}
$\forall q=0,\cdots,k-1$. However, \eqref{eqt:score condition 1} always holds since
\begin{equation*}
{\small
\begin{aligned}
&\mathbb{E}\left[\partial_{a_{i}}^{k}\psi^{i}(W,\vartheta,\varrho)\mid_{\vartheta=\theta^{i},\;\varrho=\rho}\mid\mathbf{Z}\right]\\
=&\mathbb{E}\left[(Y^{i}-g^{i}(\mathbf{Z}))\times\partial_{a_{i}}^{k}A(D,\mathbf{Z};a_{i})\mid_{a_{i}=\pi^{i}}\mid\mathbf{Z}\right]\\
=&\mathbb{E}\left[\mathbb{E}\left[(Y^{i}-g^{i}(\mathbf{Z}))\times\partial_{a_{i}}^{k}A(D,\mathbf{Z};a_{i})\mid_{a_{i}=\pi^{i}}\mid D,\mathbf{Z}\right] \mid \mathbf{Z}\right]\\										=&\mathbb{E}\left[\partial_{a_{i}}^{k}A(D,\mathbf{Z};a_{i})\mid_{a_{i}=\pi^{i}}\mathbb{E}\left[(Y^{i}-g^{i}(\mathbf{Z}))\mid D,\mathbf{Z}\right] \mid \mathbf{Z}\right]\\
=&\mathbb{E}\left[\partial_{a_{i}}^{k}A(D,\mathbf{Z};a_{i})\mid_{a_{i}=\pi^{i}}\mathbb{E}\left[\xi^{i}\mid \mathbf{Z}\right] \mid D,\mathbf{Z}\right]=0.
\end{aligned}
}
\end{equation*}
Consequently, we need to find out the coefficients $b_{1}$, $b_{2}$, $\dots$, $b_{k-1}$, $\bar{b}_{r}$ from
{\small
\begin{align}\tag{\ref{eqt:score condition 2}}
\mathbb{E}\left[\partial_{\mathcal{g}^{i}}^{1}\partial_{a_{i}}^{q}\psi^{i}(W,\vartheta,\varrho)\mid_{\vartheta=\theta^{i},\;\varrho=\rho}\mid\mathbf{Z}\right]&=0
\end{align}
}\noindent
$\forall q=0,\cdots,k-1$. From \eqref{eqt:score condition 2}, there are $k$ equations and we need to solve the $k$ unknowns $b_{1},\cdots,b_{k-1},\bar{b}_{r}$ from the $k$ equations. Generally, the $k$ unknowns could be solved uniquely.

To start with, we compute $\partial_{\mathcal{g}^{i}}^{1}\partial_{a_{i}}^{q}\psi^{i}(W,\vartheta,\varrho)$ for $q=0,\cdots,k-1$. Note that
\begin{equation*}
{\small
\begin{aligned}
\partial_{\mathcal{g}^{i}}^{1}\partial_{a_{i}}^{q}\psi^{i}(W,\vartheta,\varrho)&=-1+A(D,\mathbf{Z};a_{i})
\end{aligned}
}
\end{equation*}
when $q=0$ and
\begin{equation*}
{\small
\begin{aligned}
\partial_{\mathcal{g}^{i}}^{1}\partial_{a_{i}}^{q}\psi^{i}(W,\vartheta,\varrho)&=\bar{b}_{r}\frac{r!(-1)^{q}[\mathbf{1}_{\{D=d^{i}\}}-a_{i}(\mathbf{Z})]^{r-q}}{(r-q)!}\\
&\quad+\overset{k-1}{\underset{u=q}{\sum}}b_{u}\frac{u!(-1)^{q}[\mathbf{1}_{\{D=d^{i}\}}-a_{i}(\mathbf{Z})]^{u-q}}{(u-q)!}
\end{aligned}
}
\end{equation*}
%\begin{equation*}
%{\small
%\begin{aligned}
%\partial_{\mathcal{g}^{i}}^{1}\partial_{a_{i}}^{q}\psi^{i}(W,\vartheta,\varrho)&=\begin{cases}
%\begin{aligned}
%&-1+A(D,\mathbf{Z};a_{i}),
%& \; & q=0\\
%&\bar{b}_{r}\frac{r!(-1)^{q}[\mathbf{1}_{\{D=d^{i}\}}-a_{i}(\mathbf{Z})]^{r-q}}{(r-q)!}\\
%&+\overset{k-1}{\underset{u=q}{\sum}}b_{u}\frac{u!(-1)^{q}[\mathbf{1}_{\{D=d^{i}\}}-a_{i}(\mathbf{Z})]^{u-q}}{(u-q)!},
%& \; 1\leq & q\leq k-1.
%\end{aligned}
%\end{cases}
%\end{aligned}
%}
%\end{equation*}
when $1\leq q\leq k-1$. Consequently, we need to solve for $b_{1},\cdots,b_{k-1}$ and $\bar{b}_{r}$ simultaneously from
\begin{subequations}
\begin{equation}
{\small
\begin{aligned}\label{eqt:simultaneous equation 1}
1&=\mathbb{E}\left[A(D,\mathbf{Z};\pi^{i})\mid\mathbf{Z}\right]
\end{aligned}
}
\end{equation}
and 
\begin{equation}
{\small
\begin{aligned}\label{eqt:simultaneous equation 2}
0&=\mathbb{E}\left[\bar{b}_{r}\frac{r!(-1)^{q}[\mathbf{1}_{\{D=d^{i}\}}-\pi^{i}(\mathbf{Z})]^{r-q}}{(r-q)!}\mid\mathbf{Z}\right]\\
&\quad+\mathbb{E}\left[\overset{k-1}{\underset{u=q}{\sum}}b_{u}\frac{u!(-1)^{q}[\mathbf{1}_{\{D=d^{i}\}}-\pi^{i}(\mathbf{Z})]^{u-q}}{(u-q)!}\mid\mathbf{Z}\right].
\end{aligned}
}
\end{equation}
\end{subequations}
%\begin{subequations}
%{\small
%\begin{empheq}[left=\empheqlbrace]{align}
%1&=\mathbb{E}\left[A(D,\mathbf{Z};\pi^{i})\mid\mathbf{Z}\right],\label{eqt:simultaneous equation 1}\\
%0&=\mathbb{E}\left[\bar{b}_{r}\frac{r!(-1)^{q}[\mathbf{1}_{\{D=d^{i}\}}-\pi^{i}(\mathbf{Z})]^{r-q}}{(r-q)!}+\overset{k-1}{\underset{u=q}{\sum}}b_{u}\frac{u!(-1)^{q}[\mathbf{1}_{\{D=d^{i}\}}-\pi^{i}(\mathbf{Z})]^{u-q}}{(u-q)!}\mid\mathbf{Z}\right].\label{eqt:simultaneous equation 2}
%\end{empheq}
%}\noindent
%\end{subequations}
From \eqref{eqt:simultaneous equation 1}, we have
\begin{equation}\tag{\ref{eqt:simultaneous equation 1}*}\label{eqt:simultaneous equation 1-simplified}
{\small
\begin{aligned}
1&=\bar{b}_{r}\mathbb{E}\left[\left(\mathbf{1}_{\{D=d^{i}\}}-\pi^{i}(\mathbf{Z})\right)^{r}\mid\mathbf{Z}\right]\\
&\quad+\underset{q=1}{\overset{k-1}{\sum}}b_{q}\mathbb{E}\left[\left(\mathbf{1}_{\{D=d^{i}\}}-\pi^{i}(\mathbf{Z})\right)^{q}\mid\mathbf{Z}\right]\\
&\quad-\underset{q=1}{\overset{k-1}{\sum}}b_{q}\mathbb{E}\left[\mathbb{E}\left[(\nu^{i})^{q}\mid\mathbf{Z}\right]\mid\mathbf{Z}\right].
\end{aligned}
}
\end{equation}
Since 
\begin{equation*}
\begin{aligned}
\mathbb{E}\left[\left(\mathbf{1}_{\{D=d^{i}\}}-\pi^{i}(\mathbf{Z})\right)^{q}\mid\mathbf{Z}\right]&=\mathbb{E}\left[(\nu^{i})^{q}\mid\mathbf{Z}\right]\quad\text{and}\\
\mathbb{E}\left[\mathbb{E}\left[(\nu^{i})^{q}\mid\mathbf{Z}\right]\mid\mathbf{Z}\right]&=\mathbb{E}\left[(\nu^{i})^{q}\mid\mathbf{Z}\right],
\end{aligned}
\end{equation*}
we understand that 
\begin{equation*}
\begin{aligned}
\mathbb{E}\left[\left(\mathbf{1}_{\{D=d^{i}\}}-\pi^{i}(\mathbf{Z})\right)^{q}\mid\mathbf{Z}\right]-\mathbb{E}\left[\mathbb{E}\left[(\nu^{i})^{q}\mid\mathbf{Z}\right]\mid\mathbf{Z}\right]=0.
\end{aligned}
\end{equation*}
As such, \eqref{eqt:simultaneous equation 1-simplified} can be reduced as
%
%Since {\small $\mathbb{E}\left[\left(\mathbf{1}_{\{D=d^{i}\}}-\pi^{i}(\mathbf{Z})\right)^{q}\mid\mathbf{Z}\right]=\mathbb{E}\left[(\nu^{i})^{q}\mid\mathbf{Z}\right]$} and {\small $\mathbb{E}\left[\mathbb{E}\left[(\nu^{i})^{q}\mid\mathbf{Z}\right]\mid\mathbf{Z}\right]=\mathbb{E}\left[(\nu^{i})^{q}\mid\mathbf{Z}\right]$}, we understand that $\mathbb{E}\left[\left(\mathbf{1}_{\{D=d^{i}\}}-\pi^{i}(\mathbf{Z})\right)^{q}\mid\mathbf{Z}\right]-\mathbb{E}\left[\mathbb{E}\left[(\nu^{i})^{q}\mid\mathbf{Z}\right]\mid\mathbf{Z}\right]=0$. The above equality can therefore be reduced as
\begin{equation*}
{\small
\begin{aligned}
\bar{b}_{r}\mathbb{E}\left[\left(\mathbf{1}_{\{D=d^{i}\}}-\pi^{i}(\mathbf{Z})\right)^{r}\mid\mathbf{Z}\right]&=1\\
\Rightarrow \bar{b}_{r}\mathbb{E}\left[(\nu^{i})^{r}\mid\mathbf{Z}\right]&=1.
\end{aligned}
}
\end{equation*}
%\begin{equation*}
%{\small
%\begin{aligned}
%\bar{b}_{r}\mathbb{E}\left[\left(\mathbf{1}_{\{D=d^{i}\}}-\pi^{i}(\mathbf{Z})\right)^{r}\mid\mathbf{Z}\right]+\underset{q=1}{\overset{k-1}{\sum}}b_{q}\mathbb{E}\left[\left(\mathbf{1}_{\{D=d^{i}\}}-\pi^{i}(\mathbf{Z})\right)^{q}\mid\mathbf{Z}\right]-\underset{q=1}{\overset{k-1}{\sum}}b_{q}\mathbb{E}\left[\mathbb{E}\left[(\nu^{i})^{q}\mid\mathbf{Z}\right]\mid\mathbf{Z}\right]&=1\\
%\Rightarrow \bar{b}_{r}\mathbb{E}\left[(\nu^{i})^{r}\mid\mathbf{Z}\right]+\underset{q=1}{\overset{k-1}{\sum}}b_{q}\mathbb{E}\left[\left((\nu^{i})^{q}-\mathbb{E}\left[(\nu^{i})^{q}\mid\mathbf{Z}\right]\right)\mid\mathbf{Z}\right]&=1\\
%\Rightarrow \bar{b}_{r}\mathbb{E}\left[(\nu^{i})^{r}\mid\mathbf{Z}\right]+\underset{q=1}{\overset{k-1}{\sum}}b_{q}\left(\mathbb{E}\left[(\nu^{i})^{q}\mid\mathbf{Z}\right]-\mathbb{E}\left[(\nu^{i})^{q}\mid\mathbf{Z}\right]\right)&=1\\
%\Rightarrow \bar{b}_{r}\mathbb{E}\left[(\nu^{i})^{r}\mid\mathbf{Z}\right]&=1.
%\end{aligned}
%}
%\end{equation*}
Hence, we can solve for $\bar{b}_{r}$ such that
\begin{equation*}
\begin{aligned}
\bar{b}_{r}=\frac{1}{\mathbb{E}\left[(\nu^{i})^{r}\mid\mathbf{Z}\right]}.
\end{aligned}
\end{equation*}
It remains to find out $b_{1},\cdots,b_{k-1}$ from \eqref{eqt:simultaneous equation 2}. Indeed, we can simplify \eqref{eqt:simultaneous equation 2} as
\begin{equation}
{\small
\begin{aligned}\label{eqt:iteration equation of finding coeff of b}
\bar{b}_{r}\mathbb{E}\left[\frac{r!(\nu^{i})^{r-q}}{(r-q)!}\mid\mathbf{Z}\right]+\overset{k-1}{\underset{u=q}{\sum}}b_{u}\mathbb{E}\left[\frac{u!(\nu^{i})^{u-q}}{(u-q)!}\mid\mathbf{Z}\right] &=0\\
\Rightarrow \bar{b}_{r}\binom{r}{q}\mathbb{E}\left[(\nu^{i})^{r-q}\mid\mathbf{Z}\right]+\overset{k-1}{\underset{u=q}{\sum}}b_{u}\binom{u}{q}\mathbb{E}\left[(\nu^{i})^{u-q}\mid\mathbf{Z}\right] &=0
\end{aligned}
}
\end{equation}
$\forall 1\leq q\leq k-1$. Now, we solve $b_{1},\cdots,b_{k-1}$. We start with finding out $b_{k-1}$, followed by $b_{k-2},\;b_{k-3},\cdots,b_{1}$ iteratively. When $q=k-1$, \eqref{eqt:iteration equation of finding coeff of b} becomes
\begin{equation*}
{\small
\begin{aligned}
0&=\bar{b}_{r}\binom{r}{k-1}\mathbb{E}\left[(\nu^{i})^{r-k+1}\mid\mathbf{Z}\right]\\
&\qquad\qquad+b_{k-1}\binom{k-1}{k-1}\mathbb{E}\left[(\nu^{i})^{0}\mid\mathbf{Z}\right]\\
\Rightarrow b_{k-1}&=-\bar{b}_{r}\binom{r}{k-1}\mathbb{E}\left[(\nu^{i})^{r-k+1}\mid\mathbf{Z}\right].
\end{aligned}
}
\end{equation*}
Now, when $q=k-2$, \eqref{eqt:iteration equation of finding coeff of b} becomes
\begin{equation*}
{\small
\begin{aligned}
0&=\bar{b}_{r}{r \choose k-2}\mathbb{E}\left[(\nu^{i})^{r-k+2}\mid\mathbf{Z}\right]\\
&\quad+b_{k-1}{k-1 \choose k-2}\mathbb{E}\left[(\nu^{i})^{(k-1)-(k-2)}\mid\mathbf{Z}\right]\\
&\quad+b_{k-2}\mathbb{E}\left[(\nu^{i})^{0}\mid\mathbf{Z}\right]\\
\Rightarrow b_{k-2}&=-b_{k-1}{k-1 \choose k-2}\mathbb{E}\left[(\nu^{i})^{1}\mid\mathbf{Z}\right]\\
&\quad-\bar{b}_{r}{r \choose k-2}\mathbb{E}\left[(\nu^{i})^{r-k+2}\mid\mathbf{Z}\right].
\end{aligned}
}
\end{equation*}
Now, suppose $b_{q+1},\cdots,b_{k-1}$ are known and we want to find out what $b_{q}$ is. We have to solve it from
\begin{equation*}
{\small
\begin{aligned}
0&=b_{q}\mathbb{E}\left[(\nu^{i})^{0}\mid\mathbf{Z}\right]\\
&\quad\quad+\bar{b}_{r}{r \choose q}\mathbb{E}\left[(\nu^{i})^{r-q}\mid\mathbf{Z}\right]+\overset{k-1}{\underset{u=q+1}{\sum}}b_{u}{u \choose q}\mathbb{E}\left[(\nu^{i})^{u-q}\mid\mathbf{Z}\right].
\end{aligned}
}
\end{equation*}
%\begin{equation*}
%{\small
%\begin{aligned}
%\bar{b}_{r}{r \choose q}\mathbb{E}\left[(\nu^{i})^{r-q}\mid\mathbf{Z}\right]+\overset{k-1}{\underset{u=q+1}{\sum}}b_{u}{u \choose q}\mathbb{E}\left[(\nu^{i})^{u-q}\mid\mathbf{Z}\right]+b_{q}\mathbb{E}\left[(\nu^{i})^{0}\mid\mathbf{Z}\right]&=0.
%\end{aligned}
%}
%\end{equation*}
We can obtain $b_{q}$ from the above equation, which gives
\begin{equation*}
{\small
\begin{aligned}
b_{q}&=-\overset{k-1}{\underset{u=q+1}{\sum}}b_{u}{u \choose q}\mathbb{E}\left[(\nu^{i})^{u-q}\mid\mathbf{Z}\right]\\
&\quad\quad-\bar{b}_{r}{r \choose q}\mathbb{E}\left[(\nu^{i})^{r-q}\mid\mathbf{Z}\right]\\
\Rightarrow b_{q}&=-\overset{k-1-q}{\underset{u=1}{\sum}}b_{q+u}{q+u \choose q}\mathbb{E}\left[(\nu^{i})^{u}\mid\mathbf{Z}\right]\\
&\quad\quad-\bar{b}_{r}{r \choose q}\mathbb{E}\left[(\nu^{i})^{r-q}\mid\mathbf{Z}\right].
\end{aligned}
}
\end{equation*}
The proof is completed.
\end{proof}

\begin{proof}[\textbf{Proof of Corollary \ref{corollary:$k^{th}$-order orthogonal condition estimator}}]\ 
We have discussed the way to obtain the estimator in the main paper.
\end{proof}
To facilitate the upcoming studies, we first introduce some notations. Recall that $Y^{i}$ is the potential outcome under the treatment $d^{i}$. We use $Y^{i;F}$ as the factual outcome if an individual receives $d^i$, and $Y^{i;CF}$ as the counterfactual outcome if an individual receives alternative treatments. Hence, we have
\begin{equation*}
\begin{aligned}
Y^{i}=\begin{cases}
Y^{i;F} & \text{if $D=d^{i}$}\\
Y^{i;CF} & \text{if $D\neq d^{i}$}
\end{cases}.
\end{aligned}
\end{equation*}
Based on the introduced notations, we can define two \textit{residual differences} $\xi_{m}^{i;F}$ and $\xi_{m}^{i;CF}$ for the $m^{\text{th}}$ individual according to the sets $\mathscr{I}$ and $\mathscr{I}^{c}$ that the $m^{\text{th}}$ individual belongs to. Mathematically, $\xi_m^{i;F}:=Y_m^{i;F}-g^{i}(\mathbf{Z}_m)$ if $m \in \mathscr{I}$ and $\xi_m^{i;CF}:=Y_m^{i;CF}-g^{i}(\mathbf{Z}_m)$ if $m \in \mathscr{I}^c$. 

We give the statistical properties between $\xi_{m}^{i;F}$ and $\xi_{\bar{m}}^{i;CF}$ for $m,\;\bar{m}$.  First, $\xi_{m}^{i;F}\perp\!\!\!\perp\xi_{\bar{m}}^{i;CF}\mid \mathbf{Z}$ due to the SUTVA assumption for $m\neq\bar{m}$. Second. Regardless of the actual treatment the individual receives, the noises in terms of $d^{i}$ should be identical and independently distributed for different individuals. As a result, we should have
\begin{equation*}
\begin{aligned}
\xi_{m}^{i;F}\overset{d}{=}\xi_{\bar{m}}^{i;CF}\mid \mathbf{Z}\quad\text{and}\quad\xi_{m}^{i;F}\perp\!\!\!\perp\xi_{\bar{m}}^{i;CF}\mid \mathbf{Z}.
\end{aligned}
\end{equation*}
From the above assumptions, we have $\mathbb{E}\big[(\xi_{m}^{i;F})^{r}\mid\mathbf{Z}\big]=\mathbb{E}\big[(\xi_{\bar{m}}^{i;CF})^{r} \mid\mathbf{Z}\big]$ $\forall\;r$ and $m,\bar{m}$.

We give the statistical properties between $\xi_{m}^{i;F}$ and $\xi_{\bar{m}}^{i;CF}$. To be precise, we study $\xi_{m}^{i;F}$ and $\xi_{\bar{m}}^{i;CF}$ conditioning on $\mathbf{Z}$. The properties are summarized in Proposition \ref{xi proposition}.
\begin{proposition} \label{xi proposition}
Given the covariates $\mathbf{Z}$, the random variable $\xi_{m}^{i;F}$ and $\xi_{\bar{m}}^{i;CF}$ are independent and identically distributed, i.e., $\mathbb{E}\big[(\xi_{m}^{i;F})^{r}\mid D_{m},\mathbf{Z}\big]=\mathbb{E}\big[(\xi_{\bar{m}}^{i;CF})^{r} \mid D_{\bar{m}},\mathbf{Z}\big]$ and $\xi_{m}^{i;F} \perp\!\!\!\perp \xi_{\bar{m}}^{i;CF}\mid \mathbf{Z}$.
\end{proposition}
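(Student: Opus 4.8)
The plan is to formalize the two informal observations recorded just before the proposition, invoking the Ignorability assumption for the identical-distribution claim and the SUTVA assumption together with the i.i.d.\ sampling for the independence claim.

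First I would establish that the noise term decouples from the treatment, namely $\xi^{i}\perp\!\!\!\perp D\mid\mathbf{Z}$. This follows directly from Ignorability: since $(Y^{1},\dots,Y^{n})\perp\!\!\!\perp D\mid\mathbf{Z}$ and $g^{i}(\mathbf{Z})$ is a deterministic (hence $\mathbf{Z}$-measurable) function, the shifted variable $\xi^{i}=Y^{i}-g^{i}(\mathbf{Z})$ inherits the conditional independence from $Y^{i}$. The immediate consequence is that every conditional moment of $\xi^{i}$ given $\mathbf{Z}$ is insensitive to further conditioning on $D$, i.e., $\mathbb{E}[(\xi^{i})^{r}\mid D,\mathbf{Z}]=\mathbb{E}[(\xi^{i})^{r}\mid\mathbf{Z}]$.

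For the identically-distributed claim, I would apply this decoupling to both the factual and the counterfactual residuals. Because $\xi_{m}^{i;F}$ equals $\xi_{m}^{i}$ on the event $\{D_{m}=d^{i}\}$ and $\xi_{\bar{m}}^{i;CF}$ equals $\xi_{\bar{m}}^{i}$ on the event $\{D_{\bar{m}}\neq d^{i}\}$, the decoupling gives $\mathbb{E}[(\xi_{m}^{i;F})^{r}\mid D_{m},\mathbf{Z}]=\mathbb{E}[(\xi_{m}^{i})^{r}\mid\mathbf{Z}]$ and $\mathbb{E}[(\xi_{\bar{m}}^{i;CF})^{r}\mid D_{\bar{m}},\mathbf{Z}]=\mathbb{E}[(\xi_{\bar{m}}^{i})^{r}\mid\mathbf{Z}]$. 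Since the observations $\{W_{m}\}$ are i.i.d., the two right-hand sides coincide, which yields the required equality of conditional moments for every $r$ and hence equality in conditional distribution.

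For the independence claim I would note that the memberships $m\in\mathscr{I}$ and $\bar{m}\in\mathscr{I}^{c}$ force $m\neq\bar{m}$, so $\xi_{m}^{i;F}$ and $\xi_{\bar{m}}^{i;CF}$ are functions of two distinct individuals' data. SUTVA rules out interference, so individual $\bar{m}$'s counterfactual outcome is unaffected by individual $m$'s treatment, and combined with the i.i.d.\ sampling of $\{W_{m}\}$ this delivers $\xi_{m}^{i;F}\perp\!\!\!\perp\xi_{\bar{m}}^{i;CF}\mid\mathbf{Z}$. I expect the identical-distribution step to be the main obstacle: one must argue that the never-observed counterfactual residual shares the conditional law of the observed factual residual, and this is precisely what Ignorability secures---without it the treated and untreated subpopulations could carry systematically different noise distributions, and the replacement of counterfactual residuals by factual ones underlying the RCL estimator would break down.
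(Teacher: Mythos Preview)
Your proposal is correct and follows essentially the same approach as the paper: both invoke SUTVA for the conditional independence and Ignorability to remove the treatment from the conditioning set in the moment calculation. The paper's proof carries out the identical-distribution step via a binomial expansion of $(Y_{m}^{i;F}-g^{i}(\mathbf{Z}_{m}))^{r}$ before applying Ignorability term by term, whereas you more directly observe that $\xi^{i}=Y^{i}-g^{i}(\mathbf{Z})$ inherits $Y^{i}\perp\!\!\!\perp D\mid\mathbf{Z}$; this is a cosmetic difference only.
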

\begin{proof}
Using the SUTVA assumption, we have $\xi_{m}^{i;F} \perp\!\!\!\perp \xi_{\bar{m}}^{i;CF}\mid\mathbf{Z}$. In addition, we have
\begin{equation*}
\resizebox{0.5\textwidth}{!}{$
\begin{aligned}
&\mathbb{E}\left[(\xi_{m}^{i;F})^{r} \mid D_{m}, \mathbf{Z}_{m}\right]=\mathbb{E}\left[(Y_{m}^{i;F}-g^{i}(\mathbf{Z}_{m}))^{r} \mid D_{m},\mathbf{Z}_{m}\right]\\
=&\underset{k=0}{\overset{r}{\sum}}\binom{r}{k}\mathbb{E}\left[(Y_{m}^{i;F})^{r} \mid D_{m}, \mathbf{Z}_{m}\right](-g^{i}(\mathbf{Z}_{m}))^{r-k}\\
\overset{\star}{=}&\underset{k=0}{\overset{r}{\sum}}\binom{r}{k}\mathbb{E}\left[(Y_{m}^{i;F})^{r} \mid \mathbf{Z}_{m}\right](-g^{i}(\mathbf{Z}_{m}))^{r-k}=\mathbb{E}\left[(\xi^i)^{r} \mid \mathbf{Z}\right].
\end{aligned}
$}
\end{equation*}
$\overset{\star}{=}$ is due to the ignorability assumption. Similarly, we also have
\begin{equation*}
\resizebox{0.5\textwidth}{!}{$
\begin{aligned}
\mathbb{E}\left[(\xi_{\bar{m}}^{i;CF})^{r} \mid D_{\bar{m}} \neq d^i, \mathbf{Z}_{\bar{m}}\right]=\mathbb{E}\left[(\xi_{\bar{m}}^i)^{r} \mid \mathbf{Z}_{\bar{m}}\right]=\mathbb{E}\left[(\xi^i)^{r} \mid \mathbf{Z}\right].
\end{aligned}
$}
\end{equation*}
\end{proof}
In the remaining sequel, we investigate the consistency of our RCL estimators based on the basics of orthogonal machine learning theory. To start with, we give the assumptions on the nuisance parameters. Only the assumptions that are helpful in studying the consistency of our RCL estimators are stated. Other assumptions that concentrate on the conditions of the scores, including orthogonality, identifiability, non-degeneracy, smoothness, and the regularity of moments can be found in \cite{mackey2018orthogonal} and references therein.
\begin{assumption}\label{RCL assumption}
Given that the nuisance parameters and the true nuisance parameters are $(\hat{g}^{i},\hat{\pi}^{i})$ and $(g^{i},\pi^{i})$, and $S=\{\bm{\alpha}=(\alpha_{1},\alpha_{2})\in\mathbb{Z}^{2}_{\geq 0}:\left\|\bm{\alpha}\right\|_{1}\leq k\}$, we have
\begin{enumerate}
\item {\small $\mathbb{E}\big[\left|\hat{g}^{i}(\mathbf{Z})-g^{i}(\mathbf{Z})\right|^{4\alpha_{1}}\left|\hat{\pi}^{i}(\mathbf{Z})-\pi^{i}(\mathbf{Z})\right|^{4\alpha_{2}}\mid \hat{g}^{i},\hat{\pi}^{i}\big]\overset{p}{\rightarrow} 0$} {\small $\forall\;\bm{\alpha}\in S$}\\
\item {\small $N^{\frac{1}{2}}\sqrt{\mathbb{E}\big[\left|\hat{g}^{i}(\mathbf{Z})-g^{i}(\mathbf{Z})\right|^{2\alpha_{1}}\left|\hat{\pi}^{i}(\mathbf{Z})-\pi^{i}(\mathbf{Z})\right|^{2\alpha_{2}}\mid \hat{g}^{i},\hat{\pi}^{i}\big]}\overset{p}{\rightarrow} 0$} {\small $\forall\;\bm{\alpha}\in \{\bm{\alpha}\in\mathbb{Z}^{2}_{\geq 0}:\left\|\bm{\alpha}\right\|_{1}\leq k+1\}\backslash S$}.
\end{enumerate}
\end{assumption}
%\begin{figure}[h]
%\includegraphics[width=8cm]{Image_20220128120317}
%\end{figure}

For notational convenience, we rewrite our RCL estimator $\hat{\theta}^{i}_{RCL}$ and denote it as $\hat{\theta}^{i}_{N}$. Indeed, we have
\begin{equation}
{\small
\begin{aligned}
\hat{\theta}^{i}_{N}=&\underbrace{\frac{1}{N}\underset{m=1}{\overset{N}{\sum}}\hat{g}^{i}(\mathbf{Z}_{m})}_{(a)}+\underbrace{\frac{1}{N}\underset{m\in \mathscr{I}}{\overset{}{\sum}}(Y^{i;F}_{m}-\hat{g}^{i}(\mathbf{Z}_{m}))\hat{A}_{m}^{i}}_{(b)}\\
&+\underbrace{\frac{1}{R}\overset{R}{\underset{u=1}{\sum}}\left[\frac{1}{N}\underset{m\in \mathscr{I}^{c}}{\overset{}{\sum}}\hat{\xi}_{m,u}^{i;F} \hat{A}_{m}^{i}\right]}_{(c)}.
\end{aligned}
}
\end{equation}
Besides, we define two quantities $\hat{\tilde{\theta}}_{N}^{i}$ and $\hat{\bar{\theta}}_{N}^{i}$. They are
{\small
\begin{align}
\hat{\tilde{\theta}}_{N}^{i}&=\frac{1}{N}\underset{m=1}{\overset{N}{\sum}}\hat{g}^{i}(\mathbf{Z}_{m})+\frac{1}{N}\underset{m\in \mathscr{I}}{\overset{}{\sum}}(Y^{i;F}_{m}-\hat{g}^{i}(\mathbf{Z}_{m}))\hat{A}_{m}^{i}\nonumber\\
&\quad+\frac{1}{N}\underset{m\in \mathscr{I}^{c}}{\overset{}{\sum}}\hat{\xi}^{i;CF}_{m}\hat{A}_{m}^{i},\label{eqt:higher-order orthogonal condition before estimator in appendix}\\
\hat{\bar{\theta}}_{N}^{i}&=\frac{1}{N}\underset{m=1}{\overset{N}{\sum}}\hat{g}^{i}(\mathbf{Z}_{m})+\frac{1}{N}\underset{m\in \mathscr{I}}{\overset{}{\sum}}(Y^{i;F}_{m}-\hat{g}^{i}(\mathbf{Z}_{m}))\hat{A}_{m}^{i}\nonumber\\
&\quad+\frac{1}{N}\underset{m\in \mathscr{I}^{c}}{\overset{}{\sum}}\hat{\xi}_{m}^{i;F}\hat{A}_{m}^{i}.\label{eqt:higher-order orthogonal condition final estimator in appendix}
\end{align}
%\begin{align}
%\hat{\tilde{\theta}}_{N}^{i}&=\frac{1}{N}\underset{m=1}{\overset{N}{\sum}}\hat{g}^{i}(\mathbf{Z}_{m})+\frac{1}{N}\underset{m\in \mathscr{I}}{\overset{}{\sum}}(Y^{i;F}_{m}-\hat{g}^{i}(\mathbf{Z}_{m}))\hat{A}_{m}^{i}\nonumber\\
%&\quad+\frac{1}{N}\underset{m\in \mathscr{I}^{c}}{\overset{}{\sum}}(Y^{i;CF}_{m}-\hat{g}^{i}(\mathbf{Z}_{m}))\hat{A}_{m}^{i}\nonumber\\
%&=\frac{1}{N}\underset{m=1}{\overset{N}{\sum}}\hat{g}^{i}(\mathbf{Z}_{m})+\frac{1}{N}\underset{m\in \mathscr{I}}{\overset{}{\sum}}(Y^{i;F}_{m}-\hat{g}^{i}(\mathbf{Z}_{m}))\hat{A}_{m}^{i}\nonumber\\
%&\quad+\frac{1}{N}\underset{m\in \mathscr{I}^{c}}{\overset{}{\sum}}\hat{\xi}^{i;CF}_{m}\hat{A}_{m}^{i},\label{eqt:higher-order orthogonal condition before estimator in appendix}\\
%\hat{\bar{\theta}}_{N}^{i}&=\frac{1}{N}\underset{m=1}{\overset{N}{\sum}}\hat{g}^{i}(\mathbf{Z}_{m})+\frac{1}{N}\underset{m\in \mathscr{I}}{\overset{}{\sum}}(Y^{i;F}_{m}-\hat{g}^{i}(\mathbf{Z}_{m}))\hat{A}_{m}^{i}\nonumber\\
%&\quad+\frac{1}{N}\underset{m\in \mathscr{I}^{c}}{\overset{}{\sum}}\hat{\xi}_{m}^{i;F}\hat{A}_{m}^{i}.\label{eqt:higher-order orthogonal condition final estimator in appendix}
%\end{align}
}\noindent
We also define
\begin{equation*}
%{\small
\resizebox{0.49\textwidth}{!}{$
\begin{aligned}
\kappa_{N}^{i;F}&=\frac{1}{N}\underset{m\in \mathscr{I}^{c}}{\overset{}{\sum}}\xi_{m}^{i;F}A_{m}^{i},\;\hat{\kappa}_{N}^{i;F}=\frac{1}{N}\underset{m\in \mathscr{I}^{c}}{\overset{}{\sum}}\hat{\xi}_{m}^{i;F}\hat{A}_{m}^{i},\\
\kappa_{N}^{i;CF}&=\frac{1}{N}\underset{m\in \mathscr{I}^{c}}{\overset{}{\sum}}\xi_{m}^{i;CF}A_{m}^{i},\;\hat{\kappa}_{N}^{i;CF}=\frac{1}{N}\underset{m\in \mathscr{I}^{c}}{\overset{}{\sum}}\hat{\xi}_{m}^{i;CF}\hat{A}_{m}^{i},\\
\hat{\kappa}_{R,N}^{i;F}&=\frac{1}{R}\underset{u=1}{\overset{R}{\sum}}\big[\frac{1}{N}\underset{m\in \mathscr{I}^{c}}{\overset{}{\sum}}\hat{\xi}_{m,u}^{i;F}\hat{A}_{m}^{i}\big],\;\kappa_{R,N}^{i;F}=\frac{1}{R}\overset{R}{\underset{u=1}{\sum}} \big[\frac{1}{N}\underset{m\in \mathscr{I}^{c}}{\overset{}{\sum}}\xi_{m,u}^{i;F} A_{m}^{i}\big].
\end{aligned}
$}
%}
\end{equation*}
Then \eqref{eqt:higher-order orthogonal condition before estimator in appendix} and \eqref{eqt:higher-order orthogonal condition final estimator in appendix} can be rewritten as
{\small
\begin{align}
\hat{\tilde{\theta}}_{N}^{i}&=\frac{1}{N}\underset{m=1}{\overset{N}{\sum}}\hat{g}^{i}(\mathbf{Z}_{m})\nonumber\\
&\quad+\frac{1}{N}\underset{m\in \mathscr{I}}{\overset{}{\sum}}(Y^{i;F}_{m}-\hat{g}^{i}(\mathbf{Z}_{m}))\hat{A}_{m}^{i}+\hat{\kappa}_{N}^{i;CF},\tag{\ref{eqt:higher-order orthogonal condition before estimator in appendix}}\\
\hat{\bar{\theta}}_{N}^{i}&=\frac{1}{N}\underset{m=1}{\overset{N}{\sum}}\hat{g}^{i}(\mathbf{Z}_{m})\nonumber\\
&\quad+\frac{1}{N}\underset{m\in \mathscr{I}}{\overset{}{\sum}}(Y^{i;F}_{m}-\hat{g}^{i}(\mathbf{Z}_{m}))\hat{A}_{m}^{i}+\hat{\kappa}_{N}^{i;F}\tag{\ref{eqt:higher-order orthogonal condition final estimator in appendix}}
\end{align}
}\noindent
for simplicity. In addition, we have to use two lemmas and two propositions to study the consistency of $\hat{\theta}_{N}^{i}$. We state them with the proofs.
\begin{lemma}\label{lemma:simple lemma}	
Given two sequences of random variables $(X_{N})_{N=1}^{\infty}$ and $(Y_{N})_{N=1}^{\infty}$ such that $X_{N}\overset{d}{=}Y_{N}$. If $X_{N}\overset{p}{\rightarrow}c$ for some constant $c$, then $Y_{N}\overset{p}{\rightarrow}c$.	
\end{lemma}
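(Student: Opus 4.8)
The plan is to reduce the convergence in probability to a statement about the one-dimensional marginal laws, which is exactly the information encoded by the hypothesis $X_{N}\overset{d}{=}Y_{N}$. The key observation is that, for a fixed tolerance $\epsilon>0$, the deviation event is the preimage of a single fixed Borel set under the identity map, so its probability is a functional of the law of the underlying random variable alone and does not otherwise depend on which random variable realizes that law.

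First I would fix an arbitrary $\epsilon>0$ and set $B_{\epsilon}=\{x\in\mathbb{R}:|x-c|>\epsilon\}=(-\infty,c-\epsilon)\cup(c+\epsilon,\infty)$, which is open and hence Borel. The deviation events then read $\{|X_{N}-c|>\epsilon\}=\{X_{N}\in B_{\epsilon}\}$ and $\{|Y_{N}-c|>\epsilon\}=\{Y_{N}\in B_{\epsilon}\}$. Since $X_{N}\overset{d}{=}Y_{N}$ means that $X_{N}$ and $Y_{N}$ induce the same law on $(\mathbb{R},\mathcal{B}(\mathbb{R}))$, they assign identical probability to every Borel set; applied to $B_{\epsilon}$ this yields
\[
\mathbb{P}(|Y_{N}-c|>\epsilon)=\mathbb{P}(Y_{N}\in B_{\epsilon})=\mathbb{P}(X_{N}\in B_{\epsilon})=\mathbb{P}(|X_{N}-c|>\epsilon).
\]
By the assumption $X_{N}\overset{p}{\rightarrow}c$, the right-hand side converges to $0$ as $N\to\infty$, so the left-hand side does as well. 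As $\epsilon>0$ was arbitrary, this is precisely the assertion $Y_{N}\overset{p}{\rightarrow}c$, which completes the argument.

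I do not anticipate any genuine obstacle here: the single point requiring care is recognizing that equality in distribution is, by definition, equality of the induced laws, so the equality of the probabilities of $B_{\epsilon}$ is immediate and needs no approximation through continuity points of the cumulative distribution function. One could instead phrase $\overset{d}{=}$ via equal CDFs and rewrite $\mathbb{P}(X_{N}\in B_{\epsilon})$ in terms of $F_{X_{N}}$ evaluated near $c\pm\epsilon$, but this detour only introduces bookkeeping at possible atoms and is unnecessary for the conclusion.
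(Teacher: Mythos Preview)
Your proposal is correct and follows essentially the same approach as the paper: both observe that the deviation probability $\mathbb{P}(|X_{N}-c|\geq\epsilon)$ depends only on the law of $X_{N}$, hence equals $\mathbb{P}(|Y_{N}-c|\geq\epsilon)$ by $X_{N}\overset{d}{=}Y_{N}$. The only difference is that the paper phrases this through equality of density functions, whereas you work directly with the induced measures on a Borel set, which is slightly cleaner and avoids the unnecessary assumption that densities exist.
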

\begin{proof}
Let $f_{X_{N}}(\cdot)$ and $f_{Y_{N}}(\cdot)$ be the density functions of the random variables $X_{N}$ and $Y_{N}$ respectively. Since $X_{N}\overset{d}{=}Y_{N}$, $f_{X_{N}}(\cdot)=f_{Y_{N}}(\cdot)$. Hence, $\mathbb{P}\left\{\left|X_{N}-c\right|\geq \epsilon\right\}=\int_{|z-c|\geq\epsilon}f_{X_{N}}(z)dz=\int_{|z-c|\geq\epsilon}f_{Y_{N}}(z)dz=\mathbb{P}\left\{\left|Y_{N}-c\right|\geq \epsilon\right\}$. Consequently, $X_{N}\overset{p}{\rightarrow}c$ implies $Y_{N}\overset{p}{\rightarrow}c$.
\end{proof}
\begin{lemma}\label{lemma:simple lemma2}	
Given random variables $X$, $Y$, $E$, $Z$. If $(X \overset{d}{=} Y) \mid Z$, $(X \perp \!\!\! \perp E) \mid Z$, $(Y \perp \!\!\! \perp E) \mid Z$, then $Xh(E,Z) \overset{d}{=} Yh(E,Z)$ for any function $h$.
\end{lemma}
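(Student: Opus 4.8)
The plan is to establish the stronger conditional statement $\big(Xh(E,Z)\mid Z\big)\overset{d}{=}\big(Yh(E,Z)\mid Z\big)$ and then recover the unconditional claim by integrating against the law of $Z$. The key observation is that conditioning on $Z=z$ does two things at once: it fixes the second argument of $h$, turning $e\mapsto h(e,z)=:h_{z}(e)$ into an ordinary measurable function of $E$, and, via the two independence hypotheses, it decouples $X$ (respectively $Y$) from $E$. To compare distributions I would therefore compare conditional joint laws and then push them forward under a fixed map.

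First I would show that the conditional joint laws agree: $\big((X,E)\mid Z=z\big)\overset{d}{=}\big((Y,E)\mid Z=z\big)$ for $F_{Z}$-almost every $z$. By $(X\perp\!\!\!\perp E)\mid Z$ the conditional law of $(X,E)$ given $Z=z$ is the product measure $F_{X\mid Z=z}\otimes F_{E\mid Z=z}$; by $(Y\perp\!\!\!\perp E)\mid Z$ the conditional law of $(Y,E)$ given $Z=z$ is $F_{Y\mid Z=z}\otimes F_{E\mid Z=z}$. The hypothesis $(X\overset{d}{=}Y)\mid Z$ gives $F_{X\mid Z=z}=F_{Y\mid Z=z}$, so the two product measures share identical factors and hence coincide.

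Next I would push these equal joint laws forward under the fixed measurable map $\Phi_{z}:(x,e)\mapsto x\,h_{z}(e)$. On the event $\{Z=z\}$ we have $Xh(E,Z)=\Phi_{z}(X,E)$ and $Yh(E,Z)=\Phi_{z}(Y,E)$, so applying the same measurable map to equal distributions yields equal distributions, i.e. $\big(Xh(E,Z)\mid Z=z\big)\overset{d}{=}\big(Yh(E,Z)\mid Z=z\big)$. Finally, writing $\mathbb{P}\{Xh(E,Z)\in B\}=\int \mathbb{P}\{Xh(E,Z)\in B\mid Z=z\}\,dF_{Z}(z)$ and doing the same for $Y$ shows the two unconditional laws agree on every Borel set $B$, which is the assertion.

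I do not expect a deep obstacle here; the argument is essentially careful bookkeeping with conditional distributions. The one point to state with care is the double role played by the single conditioning on $Z$ — it both evaluates $h(\cdot,Z)$ at the fixed $z$ and supplies the factorization of the joint law — together with the need to phrase the pointwise-in-$z$ identities as holding for $F_{Z}$-almost every $z$, so that the concluding integration is legitimate. This mirrors Lemma \ref{lemma:simple lemma}, which converted an equality in distribution into a statement about the associated probabilities; here an equality in distribution of the factors is promoted to an equality in distribution of the products.
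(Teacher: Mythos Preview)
Your proposal is correct and follows essentially the same approach as the paper: condition on $Z$, use the two conditional-independence hypotheses to factorize the joint law of $(X,E)$ and $(Y,E)$ given $Z$, swap the $X$-factor for the $Y$-factor via $(X\overset{d}{=}Y)\mid Z$, and then integrate out $Z$. The only cosmetic difference is that the paper writes the argument with explicit conditional densities and an indicator inside a triple integral, whereas you phrase it in terms of product measures and a pushforward under $\Phi_{z}$; your formulation is slightly cleaner in that it does not presuppose densities, but the logical skeleton is identical.
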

\begin{proof}
Define $f_Z(z)$ as the density function of $Z$, $f_{X|Z}(x|z)$ is the conditional density function of $X|Z$, $f_{Y|Z}(y|z)$ is the conditional density function of $Y|Z$, $f_{E|Z}(e|z)$ is the conditional density function of $E|Z$, $f_{X,E|Z}(x,e|z)$ is the conditional joint density function of $X,E|Z$, and $f_{Y,E|Z}(y,e|z)$ is the conditional joint density function of $Y,E|Z$. For a measurable set $\mathcal{A}$, we have
\begin{equation*}
{\small
\begin{aligned}
&\mathbb{P}\{Xh(E,Z)\in \mathcal{A}\}\\
%=&\int_{\Omega_{Z}}\mathbb{P}\{Xh(E,Z)\in \mathcal{A}| Z=z\}f_{Z}(z)dz\\
%=&\int_{\Omega_{Z}}\mathbb{P}\{Xh(E,z)\in \mathcal{A}| Z=z\}f_{Z}(z)dz\\
=&\int_{\Omega_{Z}}\left\{\iint_{\Omega_{X} \times \Omega_{E}} \mathbf{1}_{\{xh(e,z)\in \mathcal{A}\}}f_{X,E|Z}(x,e|z)dxde\right\}f_{Z}(z)dz\\
\overset{*}{=}&\int_{\Omega_{Z}}\left\{\iint_{\Omega_{X} \times \Omega_{E}} \mathbf{1}_{\{xh(e,z)\in \mathcal{A}\}}f_{X|Z}(x|z)f_{E|Z}(e|z)dxde\right\}f_{Z}(z)dz\\
\overset{\triangle}{=}&\int_{\Omega_{Z}}\left\{\iint_{\Omega_{Y} \times \Omega_{E}} \mathbf{1}_{\{yh(e,z)\in \mathcal{A}\}}f_{Y|Z}(y|z)f_{E|Z}(e|z)dyde\right\}f_{Z}(z)dz\\
\overset{\square}{=}&\int_{\Omega_{Z}}\left\{\iint_{\Omega_{Y} \times \Omega_{E}} \mathbf{1}_{\{yh(e,z)\in \mathcal{A}\}}f_{Y,E|Z}(y,e|z)dyde\right\}f_{Z}(z)dz\\
=&\mathbb{P}\{Yh(E,Z)\in \mathcal{A}\}
\end{aligned}	
}
\end{equation*}
$*$ holds since $(X \perp \!\!\! \perp E) \mid Z$, $\triangle$ holds since $(X \overset{d}{=} Y) \mid Z$, and $\square$ holds since $(Y \perp \!\!\! \perp E) \mid Z$.
\end{proof}
\begin{proposition}\label{lemma:difference from the statistical standpoint}
Suppose {\small $\mathbb{E}\left[(\xi^{i;F}_{m})^{2}\mid\mathbf{Z}\right]$} and {\small $(A^{i}_m)^2$} exist such that $\mathbb{E}\left[(A_m^{i})^2\mathbb{E}\left[(\xi_m^{i;F})^{2}\mid\mathbf{Z}\right]\right]$ is finite for all $m$. We have {\small $\kappa_{N}^{i;CF}-\kappa_{N}^{i;F}\overset{p}{\rightarrow} 0$} when {\small $N\rightarrow \infty$}.
\end{proposition}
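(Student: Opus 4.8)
The plan is to show that both averages $\kappa_N^{i;CF}$ and $\kappa_N^{i;F}$ converge in probability to the \emph{same} constant, namely $0$, so that their difference vanishes. The two lemmas and Proposition \ref{xi proposition} proved just above are tailored for exactly this route: Proposition \ref{xi proposition} supplies the distributional input, Lemma \ref{lemma:simple lemma2} transports it through the weight $A_m^i$, and Lemma \ref{lemma:simple lemma} transfers convergence in probability across equal-in-distribution sequences.

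First I would verify the hypotheses of Lemma \ref{lemma:simple lemma2} termwise. Fixing $m\in\mathscr{I}^c$, I set $X=\xi_m^{i;CF}$, $Y=\xi_m^{i;F}$, $E=D_m$, $Z=\mathbf{Z}_m$, and $h=A(\cdot,\cdot;\pi^i)$, which is indeed a function of $(D_m,\mathbf{Z}_m)$ only. Proposition \ref{xi proposition} gives $(\xi_m^{i;CF}\overset{d}{=}\xi_m^{i;F})\mid\mathbf{Z}$, while the Ignorability assumption yields $\xi^i\perp\!\!\!\perp D\mid\mathbf{Z}$, hence $(\xi_m^{i;CF}\perp\!\!\!\perp D_m)\mid\mathbf{Z}_m$ and $(\xi_m^{i;F}\perp\!\!\!\perp D_m)\mid\mathbf{Z}_m$. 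Lemma \ref{lemma:simple lemma2} then delivers $\xi_m^{i;CF}A_m^i\overset{d}{=}\xi_m^{i;F}A_m^i$ for each $m$. Using the i.i.d. structure of $\{W_m\}$ together with the independence of the auxiliary factual residuals, this termwise equality lifts to the full sums over $m\in\mathscr{I}^c$, yielding $\kappa_N^{i;CF}\overset{d}{=}\kappa_N^{i;F}$.

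The main analytical step is to show $\kappa_N^{i;F}\overset{p}{\rightarrow}0$ by a second-moment (Chebyshev) argument. The mean is zero: conditioning on $(D_m,\mathbf{Z}_m)$ and pulling out the measurable factor $A_m^i$, the tower property together with $\mathbb{E}[\xi^i\mid D,\mathbf{Z}]=0$ gives $\mathbb{E}[\xi_m^{i;F}A_m^i]=0$. For the variance, the terms $\mathbf{1}_{\{m\in\mathscr{I}^c\}}\xi_m^{i;F}A_m^i$ are independent across $m$, so
\begin{equation*}
{\small
\begin{aligned}
\mathrm{Var}(\kappa_N^{i;F}) &\leq \frac{1}{N^2}\sum_{m\in\mathscr{I}^c}\mathbb{E}\left[(A_m^i)^2\,\mathbb{E}\left[(\xi_m^{i;F})^2\mid\mathbf{Z}\right]\right]\\
&\leq \frac{1}{N}\,\mathbb{E}\left[(A^i)^2\,\mathbb{E}\left[(\xi^{i;F})^2\mid\mathbf{Z}\right]\right],
\end{aligned}
}
\end{equation*}
where I used $\mathbb{E}[(\xi^{i;F})^2\mid D,\mathbf{Z}]=\mathbb{E}[(\xi^{i;F})^2\mid\mathbf{Z}]$ by Ignorability. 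The right-hand side is finite by the hypothesis of the proposition and tends to $0$, so Chebyshev gives $\kappa_N^{i;F}\overset{p}{\rightarrow}0$. Applying Lemma \ref{lemma:simple lemma} with the equal-in-distribution pair $(\kappa_N^{i;F},\kappa_N^{i;CF})$ and $c=0$ then yields $\kappa_N^{i;CF}\overset{p}{\rightarrow}0$, whence $\kappa_N^{i;CF}-\kappa_N^{i;F}\overset{p}{\rightarrow}0$.

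I expect the hard part to be the second step: rigorously lifting the termwise distributional equality from Lemma \ref{lemma:simple lemma2} to equality in distribution of the two full sums, since the index set $\mathscr{I}^c$ is random and the auxiliary factual residuals $\xi_m^{i;F}$ must be coupled to the counterfactual ones consistently with the conditional independence across units. An alternative that sidesteps this distributional bookkeeping is to bound the difference directly: $\kappa_N^{i;CF}-\kappa_N^{i;F}=\frac{1}{N}\sum_{m\in\mathscr{I}^c}(\xi_m^{i;CF}-\xi_m^{i;F})A_m^i$ has conditional mean zero, and by conditional independence and the mean-zero, identically-distributed property its conditional second moment equals $(A_m^i)^2\cdot 2\,\mathbb{E}[(\xi^{i;F})^2\mid\mathbf{Z}]$, so the very same Chebyshev bound applies to the difference itself.
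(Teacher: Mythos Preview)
Your primary route---establishing $\kappa_N^{i;CF}\overset{d}{=}\kappa_N^{i;F}$ termwise via Lemma~\ref{lemma:simple lemma2}, proving $\kappa_N^{i;F}\overset{p}{\rightarrow}0$ by Chebyshev, and then transferring via Lemma~\ref{lemma:simple lemma}---is a genuinely different strategy from the paper's, and you have correctly flagged its weak point: Lemma~\ref{lemma:simple lemma2} only gives a marginal (per-$m$) identity in law, and lifting this to equality in distribution of the two full sums over the random index set $\mathscr{I}^c$ requires a \emph{joint} statement that the lemma does not provide. Closing that gap cleanly is more work than the proposition warrants.

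The alternative you sketch in your final paragraph is exactly what the paper does. It applies Chebyshev directly to the difference, writing $\Xi_m^i=\xi_m^{i;CF}-\xi_m^{i;F}$ and bounding
\[
\mathbb{P}\bigl\{|\kappa_N^{i;CF}-\kappa_N^{i;F}|\ge\epsilon\bigr\}
\le\frac{1}{N^2\epsilon^2}\,\mathbb{E}\Bigl[\Bigl(\sum_{m\in\mathscr{I}^c}\Xi_m^i A_m^i\Bigr)^2\Bigr].
\]
Cross terms vanish because $\mathbb{E}[\Xi_m^i\mid D,\mathbf{Z}]=0$ and the units are independent; the diagonal terms use $\mathbb{E}[(\Xi_m^i)^2\mid\mathbf{Z}]=2\,\mathbb{E}[(\xi_m^{i;F})^2\mid\mathbf{Z}]$ (conditional i.i.d.\ plus zero conditional mean), giving the bound $\tfrac{2}{N\epsilon^2}\mathbb{E}\bigl[(A^i)^2\mathbb{E}[(\xi^{i;F})^2\mid\mathbf{Z}]\bigr]$, which matches your last line. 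This direct second-moment argument sidesteps the distributional-lifting issue entirely; your primary route, even if made rigorous, would buy nothing additional here, so the paper's choice is the cleaner one.
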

%\begin{proposition}\label{lemma:difference from the statistical standpoint}
%Given two i.i.d. sequences {\small $(\xi_{m}^{i;F})$} and {\small $(\xi_{m}^{i;CF})$}. Suppose {\small $(\xi^{i;F}_{m}\overset{d}{=}\xi^{i;CF}_{\bar{m}})\mid\mathbf{Z}$} and {\small $(\xi^{i;F}_{m}\perp\!\!\!\perp\xi^{i;CF}_{\bar{m}})\mid\mathbf{Z}$}\;\;for any {\small $m$} and {\small $\bar{m}$}. Furthermore, suppose {\small $\mathbb{E}\left[(\xi^{i;F}_{m})^{2}\mid\mathbf{Z}\right]$} and {\small $(A^{i}_m)^2$} exist such that $\mathbb{E}\left[(A_m^{i})^2\mathbb{E}\left[(\xi_m^{i;F})^{2}\mid\mathbf{Z}\right]\right]$ is finite for all $m$. We have {\small $\kappa_{N}^{i;CF}-\kappa_{N}^{i;F}\overset{p}{\rightarrow} 0$} when {\small $N\rightarrow \infty$}.
%\end{proposition}
\begin{proof}
$\forall \epsilon>0$, we consider $\mathbb{P}\left\{\left|\kappa_{N}^{i;CF}-\kappa_{N}^{i;F}\right|\geq\epsilon\right\}$. Indeed, we have
\begin{equation*}
{\small
\begin{aligned}
&\mathbb{P}\left\{\left|\kappa_{N}^{i;CF}-\kappa_{N}^{i;F}\right|\geq\epsilon\right\}\leq \frac{\mathbb{E}\left[\left(\kappa_{N}^{i;CF}-\kappa_{N}^{i;F}\right)^{2}\right]}{\epsilon^{2}}\\
=&\frac{\frac{1}{N^{2}}\mathbb{E}\left[\left(\underset{m\in \mathscr{I}^{c}}{\sum}(\xi_{m}^{i;CF}-\xi_{m}^{i;F})A_{m}^{i}\right)^{2}\right]}{\epsilon^{2}}.\\
%\\
%=&\frac{\frac{1}{N^{2}}\underset{m\in \mathscr{I}^{c}}{\sum}\mathbb{E}\left[\left(\xi_{m}^{i;F}\right)^{2}\left(A(D_{m},\mathbf{Z}_{m};\pi^{i})\right)^{2}\right]}{\epsilon^{2}},
\end{aligned}
}
\end{equation*}
Denoting $\xi_{m}^{i;CF}-\xi_{m}^{i;F}$ as $\Xi_{m}^{i}$, we have
\begin{equation*}
{\small
\begin{aligned}
&\mathbb{E}\big[\big(\underset{m\in \mathscr{I}^{c}}{\sum}(\xi_{m}^{i;CF}-\xi_{m}^{i;F})A_{m}^{i}\big)^{2}\big]=\mathbb{E}\big[\big(\underset{m\in \mathscr{I}^{c}}{\sum}\Xi_{m}^{i}A_{m}^{i}\big)^{2}\big]\\
=&\mathbb{E}\big[\underset{m,\bar{m}\in \mathscr{I}^{c}}{\sum}\Xi_{m}^{i}A_{m}^{i}\Xi_{\bar{m}}^{i}A_{\bar{m}}^{i}\big]=\underset{m,\bar{m}\in \mathscr{I}^{c}}{\sum}\mathbb{E}\left[\Xi_{m}^{i}A_{m}^{i}\Xi_{\bar{m}}^{i}A_{\bar{m}}^{i}\right]\\
%=&\mathbb{E}\big[\underset{m,\bar{m}\in \mathscr{I}^{c}}{\sum}\Xi_{m}^{i}A_{m}^{i}\Xi_{\bar{m}}^{i}A_{\bar{m}}^{i}\big]=\underset{m,\bar{m}\in \mathscr{I}^{c}}{\sum}\mathbb{E}\left[\Xi_{m}^{i}A_{m}^{i}\Xi_{\bar{m}}^{i}A_{\bar{m}}^{i}\right]=\underset{m,\bar{m}\in \mathscr{I}^{c}}{\sum}\mathbb{E}\left[A_{m}^{i}A_{\bar{m}}^{i}\mathbb{E}\left[\Xi_{m}^{i}\Xi_{\bar{m}}^{i}\mid D, \mathbf{Z}\right]\right]\\
%=&\underset{m\in \mathscr{I}^{c}}{\sum}\mathbb{E}\left[(A_{m}^{i})^{2}\mathbb{E}\left[(\Xi_{m}^{i})^{2}\mid D,\mathbf{Z}\right]\right]\\
&\;+\underset{\substack{m,\bar{m}\in \mathscr{I}^{c} \\ m\neq \bar{m}}}{\sum}\mathbb{E}\left[A_{m}^{i}A_{\bar{m}}^{i}\mathbb{E}\left[\Xi_{m}^{i}\Xi_{\bar{m}}^{i}\mid D,\mathbf{Z}\right]\right]\\
=&\underset{m\in \mathscr{I}^{c}}{\sum}\mathbb{E}\left[(A_{m}^{i})^{2}\mathbb{E}\left[(\Xi_{m}^{i})^{2}\mid D,\mathbf{Z}\right]\right]\\
&\;+\underset{\substack{m,\bar{m}\in \mathscr{I}^{c} \\ m\neq \bar{m}}}{\sum}\mathbb{E}\left[A_{m}^{i}A_{\bar{m}}^{i}\mathbb{E}\left[\Xi_{m}^{i}\mid D,\mathbf{Z}\right]\mathbb{E}\left[\Xi_{\bar{m}}^{i}\mid D,\mathbf{Z}\right]\right]\\
=&\underset{m\in \mathscr{I}^{c}}{\sum}\mathbb{E}\left[(A_{m}^{i})^{2}\mathbb{E}\left[(\Xi_{m}^{i})^{2}\mid \mathbf{Z}\right]\right]=2\underset{m\in \mathscr{I}^{c}}{\sum}\mathbb{E}\left[(A_{m}^{i})^{2}\mathbb{E}\left[(\xi^{i;F}_{m})^{2}\mid \mathbf{Z}\right]\right]\\
\leq& 2N\mathbb{E}\left[(A^{i})^{2}\mathbb{E}\left[(\xi^{i;F})^{2}\mid \mathbf{Z}\right]\right].
\end{aligned}
}
\end{equation*}
The last equality follows from
\begin{equation*}
{\small
\begin{aligned}
&\mathbb{E}\big[(\Xi^{i}_{m})^{2}\mid\mathbf{Z}\big]=\mathbb{E}\big[(\xi^{i;CF}_{m}-\xi^{i;F}_{m})^{2}\mid\mathbf{Z}\big]\\
=&\mathbb{E}\big[(\xi^{i;CF}_{m})^{2}\mid\mathbf{Z}\big]-2\mathbb{E}\big[\xi^{i;F}_{m}\xi^{i;CF}_{m}\mid\mathbf{Z}\big]+\mathbb{E}\big[(\xi^{i;F}_{m})^{2}\mid\mathbf{Z}\big]\\
=&\mathbb{E}\big[(\xi^{i;CF}_{m})^{2}\mid\mathbf{Z}\big]-2\mathbb{E}\big[\xi^{i;F}_{m}\mid\mathbf{Z}\big]\mathbb{E}\big[\xi^{i;CF}_{m}\mid\mathbf{Z}\big]+\mathbb{E}\big[(\xi^{i;F}_{m})^{2}\mid\mathbf{Z}\big]\\
=&\mathbb{E}\big[(\xi^{i;CF}_{m})^{2}\mid\mathbf{Z}\big]+\mathbb{E}\big[(\xi^{i;F}_{m})^{2}\mid\mathbf{Z}\big]=2\mathbb{E}\big[(\xi^{i;F}_{m})^{2}\mid\mathbf{Z}\big].
\end{aligned}
}
\end{equation*}
%where $\sigma^{i}=\mathbb{E}\left[(\xi^{i;CF})^{2}\right]=\mathbb{E}\left[(\xi^{i;F})^{2}\right]$.
%The derivations are valid since $(\xi^{i;CF}\overset{d}{=}\xi^{i;F})\mid\mathbf{Z}$.
As a consequence, we have
\begin{equation*}
{\small
\begin{aligned}
\mathbb{P}\left\{\left|\kappa_{N}^{i;CF}-\kappa_{N}^{i;F}\right|\geq\epsilon\right\}&\leq \frac{2N\mathbb{E}\left[(A^{i})^{2}\mathbb{E}\left[(\xi^{i;F})^{2}\mid\mathbf{Z}\right]\right]}{N^{2}\epsilon^{2}}\\
&=\frac{2\mathbb{E}\left[(A^{i})^{2}\mathbb{E}\left[(\xi^{i;F})^{2}\mid\mathbf{Z}\right]\right]}{N\epsilon^{2}}\rightarrow 0
\end{aligned}
}
\end{equation*}
when $N\rightarrow \infty$. As a result, we have {\small $\kappa_{N}^{i;CF}-\kappa_{N}^{i;F}\overset{p}{\rightarrow} 0$}. The proof is completed.
\end{proof}

\begin{proposition}\label{lemma:unbiasedness and consistency}
Suppose that, conditioning on {\small $\mathbf{Z}$}, {\small $\xi_{m,u}^{i;F}$} are i.i.d. of {\small $\xi_{m}^{i;F}$} and {\small $\xi_{m,u}^{i;F}$} are i.i.d. of {\small $\xi_{m,\bar{u}}^{i;F}$} {\small $\forall u,\;\bar{u}\in\{1,2,\cdots,R\}$}. We have
\begin{equation*}
\kappa_{N}^{i;F}-\kappa_{R,N}^{i;F}\overset{p}{\rightarrow}0\quad \textit{when}\quad N\rightarrow \infty.\label{item:statistical difference 2}
\end{equation*}
\end{proposition}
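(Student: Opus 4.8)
The plan is to reproduce the second-moment argument used in the proof of Proposition~\ref{lemma:difference from the statistical standpoint}. First I would merge the two quantities into a single sum over the control set, writing
\[
\kappa_{N}^{i;F}-\kappa_{R,N}^{i;F}=\frac{1}{N}\underset{m\in\mathscr{I}^{c}}{\sum}\left(\xi_{m}^{i;F}-\frac{1}{R}\overset{R}{\underset{u=1}{\sum}}\xi_{m,u}^{i;F}\right)A_{m}^{i},
\]
and abbreviate the bracketed factor as $\Delta_{m}^{i}$. Applying Markov's inequality to $\mathbb{P}\{|\kappa_{N}^{i;F}-\kappa_{R,N}^{i;F}|\geq\epsilon\}$ then reduces the task to bounding the second moment $\mathbb{E}[(\kappa_{N}^{i;F}-\kappa_{R,N}^{i;F})^{2}]=\frac{1}{N^{2}}\underset{m,\bar{m}\in\mathscr{I}^{c}}{\sum}\mathbb{E}[\Delta_{m}^{i}A_{m}^{i}\Delta_{\bar{m}}^{i}A_{\bar{m}}^{i}]$.

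Next I would split this double sum into diagonal ($m=\bar{m}$) and off-diagonal ($m\neq\bar{m}$) terms. For the off-diagonal terms, I would condition on $(D,\mathbf{Z})$ so that $A_{m}^{i}$ and $A_{\bar{m}}^{i}$ factor out, and then invoke that, given $\mathbf{Z}$, the variables $\Delta_{m}^{i}$ and $\Delta_{\bar{m}}^{i}$ are independent with conditional mean zero. The zero-mean property follows from $\mathbb{E}[\xi^{i}\mid D,\mathbf{Z}]=0$ in model~\eqref{eqt:model 1} together with the hypothesis that each resample $\xi_{m,u}^{i;F}$ shares the conditional distribution of $\xi_{m}^{i;F}$, so every off-diagonal term vanishes. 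For the diagonal terms I would compute $\mathbb{E}[(\Delta_{m}^{i})^{2}\mid\mathbf{Z}]$ using the conditional mutual independence of $\xi_{m}^{i;F},\xi_{m,1}^{i;F},\dots,\xi_{m,R}^{i;F}$: since the average of $R$ i.i.d. copies has conditional variance $\frac{1}{R}\mathbb{E}[(\xi_{m}^{i;F})^{2}\mid\mathbf{Z}]$ and is independent of $\xi_{m}^{i;F}$, this yields
\[
\mathbb{E}\left[(\Delta_{m}^{i})^{2}\mid\mathbf{Z}\right]=\frac{R+1}{R}\,\mathbb{E}\left[(\xi_{m}^{i;F})^{2}\mid\mathbf{Z}\right].
\]

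Combining the two contributions, the second moment collapses to $\frac{R+1}{RN^{2}}\underset{m\in\mathscr{I}^{c}}{\sum}\mathbb{E}[(A_{m}^{i})^{2}\mathbb{E}[(\xi_{m}^{i;F})^{2}\mid\mathbf{Z}]]$, which I would bound above by $\frac{R+1}{RN}\mathbb{E}[(A^{i})^{2}\mathbb{E}[(\xi^{i;F})^{2}\mid\mathbf{Z}]]$ using $|\mathscr{I}^{c}|\leq N$. Feeding this into Markov's inequality produces a bound of order $O(1/N)$ that tends to $0$ as $N\rightarrow\infty$ for fixed $R$, establishing $\kappa_{N}^{i;F}-\kappa_{R,N}^{i;F}\overset{p}{\rightarrow}0$. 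The main obstacle I anticipate is the careful bookkeeping of the conditioning: I must justify that each resampled $\xi_{m,u}^{i;F}$ is independent not only of the other resamples but also of $A_{m}^{i}$, which is a function of $(D_{m},\mathbf{Z}_{m})$, so that both the vanishing of the off-diagonal terms and the clean variance identity go through. This rests squarely on the i.i.d.-given-$\mathbf{Z}$ hypothesis stated in the proposition and on the zero-conditional-mean property of the model noise.
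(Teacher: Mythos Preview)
Your proposal is correct and follows essentially the same route as the paper: a Chebyshev/second-moment bound, the diagonal/off-diagonal split with the off-diagonal terms killed by conditioning on $(D,\mathbf{Z})$ and the zero conditional mean, and the diagonal terms evaluated to $(1+1/R)\,\mathbb{E}[(\xi_{m}^{i;F})^{2}\mid\mathbf{Z}]$ via the i.i.d.\ hypothesis. If anything, your version is slightly tidier in that you keep the $(A_{m}^{i})^{2}$ factor explicitly in the final bound, whereas the paper's displayed bound drops it.
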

\begin{proof}\ \\
Write
%Write {\color{blue}{\small $\frac{1}{R}\overset{R}{\underset{u=1}{\sum}}\xi_{m,u}^{i;F}=\mathscr{E}_{m}^{i}$}. Then we have}
\begin{equation*}
{\small
\begin{aligned}
\kappa_{R,N}^{i;F}&=\frac{1}{R}\overset{R}{\underset{u=1}{\sum}}\bigg[\frac{1}{N}\underset{m\in \mathscr{I}^{c}}{\overset{}{\sum}}\xi_{m,u}^{i;F} A_{m}^{i}\bigg]\\
&=\frac{1}{N}\underset{m\in \mathscr{I}^{c}}{\overset{}{\sum}}\left(\frac{1}{R}\overset{R}{\underset{u=1}{\sum}}\xi_{m,u}^{i;F}\right) A_{m}^{i}=\frac{1}{N}\underset{m\in \mathscr{I}^{c}}{\overset{}{\sum}}\mathscr{E}_{m}^{i} A_{m}^{i}.
\end{aligned}
}
\end{equation*}
%\begin{equation*}
%{\small
%\begin{aligned}
%\kappa_{R,N}^{i;F}&=\frac{1}{R}\overset{R}{\underset{u=1}{\sum}} \big[\frac{1}{N}\underset{m\in \mathscr{I}^{c}}{\overset{}{\sum}}\xi_{m,u}^{i;F} A_{m}^{i}\big]\\
%&=\frac{1}{N}\underset{m\in \mathscr{I}^{c}}{\overset{}{\sum}}\left(\frac{1}{R}\overset{R}{\underset{u=1}{\sum}}\xi_{m,u}^{i;F}\right) A_{m}^{i}=\frac{1}{N}\underset{m\in \mathscr{I}^{c}}{\overset{}{\sum}}\mathscr{E}_{m}^{i} A_{m}^{i}
%\end{aligned}
%}
%\end{equation*}
$\forall\epsilon>0$, we have
\begin{equation*}
{\small
\begin{aligned}
\mathbb{P}\left\{\left|\kappa_{N}^{i;F}-\kappa_{R,N}^{i;F}\right|\geq\epsilon\right\}\leq  \frac{\mathbb{E}\left[\left(\frac{1}{N}\underset{m\in \mathscr{I}^{c}}{\overset{}{\sum}}\left[\mathscr{E}_{m}^{i}-\xi_{m}^{i;F}\right]A_{m}^{i}\right)^{2}\right]}{\epsilon^{2}}.
%=&\mathbb{P}\left\{\left|\frac{1}{N}\underset{m\in \mathscr{I}^{c}}{\overset{}{\sum}}\left[\mathscr{E}_{m}^{i}-\xi_{m}^{i;F}\right]A_{m}^{i}\right|\geq\epsilon\right\}\\
\end{aligned}
}
\end{equation*}
We simplify {\small $\mathbb{E}\left[\left(\frac{1}{N}\underset{m\in \mathscr{I}^{c}}{\overset{}{\sum}}\left[\mathscr{E}_{m}^{i}-\xi_{m}^{i;F}\right]A_{m}^{i}\right)^{2}\right]$}. Note that
\begin{equation*}
%{\small
\resizebox{0.5\textwidth}{!}{$
\begin{aligned}
&\mathbb{E}\left[\left(\frac{1}{N}\underset{m\in \mathscr{I}^{c}}{\overset{}{\sum}}\left[\mathscr{E}_{m}^{i}-\xi_{m}^{i;F}\right]A_{m}^{i}\right)^{2}\right]\\
=&\frac{1}{N^{2}}\:\:\sum_{\mathclap{\substack{m,\bar{m}\in \mathscr{I}^{c}}}}\mathbb{E}\left[\left(\mathscr{E}_{m}^{i}-\xi_{m}^{i;F}\right)\left(\mathscr{E}_{\bar{m}}^{i}-\xi_{\bar{m}}^{i;F}\right)A_{\bar{m}}^{i}A_{m}^{i}\right]\\
=&\frac{1}{N^{2}}\:\:\sum_{\mathclap{\substack{m\in \mathscr{I}^{c}}}}\mathbb{E}\left[\left(\mathscr{E}_{m}^{i}-\xi_{m}^{i;F}\right)^{2}(A_{m}^{i})^{2}\right]\\
&\;+\frac{1}{N^{2}}\:\:\sum_{\mathclap{\substack{m,\bar{m}\in \mathscr{I}^{c}\\ m\neq\bar{m}}}}\mathbb{E}\left[\left(\mathscr{E}_{m}^{i}-\xi_{m}^{i;F}\right)\left(\mathscr{E}_{\bar{m}}^{i}-\xi_{\bar{m}}^{i;F}\right)A_{\bar{m}}^{i}A_{m}^{i}\right]\\
=&\frac{1}{N^{2}}\:\:\sum_{\mathclap{\substack{m\in \mathscr{I}^{c}}}}\mathbb{E}\left[\left(\mathscr{E}_{m}^{i}-\xi_{m}^{i;F}\right)^{2}(A_{m}^{i})^{2}\right]\\
&\;+\frac{1}{N^{2}}\:\:\sum_{\mathclap{\substack{m,\bar{m}\in \mathscr{I}^{c}\\ m\neq\bar{m}}}}\mathbb{E}\left[A_{\bar{m}}^{i}A_{m}^{i}\mathbb{E}\left[\left(\mathscr{E}_{m}^{i}-\xi_{m}^{i;F}\right)\left(\mathscr{E}_{\bar{m}}^{i}-\xi_{\bar{m}}^{i;F}\right)\mid D,\mathbf{Z}\right]\right]\\
=&\frac{1}{N^{2}}\:\:\sum_{\mathclap{\substack{m\in \mathscr{I}^{c}}}}\mathbb{E}\left[\left(\mathscr{E}_{m}^{i}-\xi_{m}^{i;F}\right)^{2}(A_{m}^{i})^{2}\right]\\
&\;+\frac{1}{N^{2}}\:\:\sum_{\mathclap{\substack{m,\bar{m}\in \mathscr{I}^{c}\\ m\neq\bar{m}}}}\mathbb{E}\left[A_{\bar{m}}^{i}A_{m}^{i}\mathbb{E}\left[\left(\mathscr{E}_{m}^{i}-\xi_{m}^{i;F}\right)\mid\mathbf{Z}\right]\mathbb{E}\left[\left(\mathscr{E}_{\bar{m}}^{i}-\xi_{\bar{m}}^{i;F}\right)\mid\mathbf{Z}\right]\right]\\
=&\frac{1}{N^{2}}\:\:\sum_{\mathclap{\substack{m\in \mathscr{I}^{c}}}}\mathbb{E}\left[\left(\mathscr{E}_{m}^{i}-\xi_{m}^{i;F}\right)^{2}(A_{m}^{i})^{2}\right].
\end{aligned}$
}
%}
\end{equation*}
The last equality in the above derivation follows from the fact that, conditioning on {\small $\mathbf{Z}$}, {\small $\xi_{m,u}^{i;F}$} are i.i.d. of {\small $\xi_{m}^{i;F}$} for any {\small $u\in\{1,2,\cdots,R\}$}. Indeed, we have $\mathbb{E}\left[\xi_{m,u}^{i;F}\mid\mathbf{Z}\right]=\mathbb{E}\left[\xi_{u}^{i;F}\mid\mathbf{Z}\right]$ for any $m$ and $u\in\{1,2,\cdots,R\}$. Consequently, we have
\begin{equation*}
%{\small
\resizebox{0.49\textwidth}{!}{$\begin{aligned}
&\mathbb{E}\left[\left(\mathscr{E}_{m}^{i}-\xi_{m}^{i;F}\right)\mid\mathbf{Z}\right]=\mathbb{E}\left[\left(\frac{1}{R}\overset{R}{\underset{u=1}{\sum}} \xi_{m,u}^{i;F}-\xi_{m}^{i;F}\right)\mid\mathbf{Z}\right]\\
=&\frac{1}{R}\overset{R}{\underset{u=1}{\sum}} \mathbb{E}\left[\xi_{m,u}^{i;F}\mid\mathbf{Z}\right]-\mathbb{E}\left[\xi_{m}^{i;F}\mid\mathbf{Z}\right]=\frac{1}{R}\overset{R}{\underset{u=1}{\sum}} \mathbb{E}\left[\xi_{m}^{i;F}\mid\mathbf{Z}\right]-\mathbb{E}\left[\xi_{m}^{i;F}\mid\mathbf{Z}\right]\\
=& \mathbb{E}\left[\xi_{m}^{i;F}\mid\mathbf{Z}\right]-\mathbb{E}\left[\xi_{m}^{i;F}\mid\mathbf{Z}\right]=0.
\end{aligned}
$}
%}
\end{equation*}
%\begin{equation*}
%{\small
%\begin{aligned}
%&\mathbb{E}\left[\left(\mathscr{E}_{m}^{i}-\xi_{m}^{i;F}\right)\mid\mathbf{Z}\right]=\mathbb{E}\left[\left(\frac{1}{R}\overset{R}{\underset{u=1}{\sum}} \xi_{m,u}^{i;F}-\xi_{m}^{i;F}\right)\mid\mathbf{Z}\right]\\
%=&\frac{1}{R}\overset{R}{\underset{u=1}{\sum}} \mathbb{E}\left[\xi_{m,u}^{i;F}\mid\mathbf{Z}\right]-\mathbb{E}\left[\xi_{m}^{i;F}\mid\mathbf{Z}\right]=\frac{1}{R}\overset{R}{\underset{u=1}{\sum}} \mathbb{E}\left[\xi_{m}^{i;F}\mid\mathbf{Z}\right]-\mathbb{E}\left[\xi_{m}^{i;F}\mid\mathbf{Z}\right]\\
%=& \mathbb{E}\left[\xi_{m}^{i;F}\mid\mathbf{Z}\right]-\mathbb{E}\left[\xi_{m}^{i;F}\mid\mathbf{Z}\right]=0.
%\end{aligned}
%}
%\end{equation*}
In addition, we simplify the quantity {\small $\mathbb{E}\left[\left(\mathscr{E}_{m}^{i}-\xi_{m}^{i;F}\right)^{2}\right]$}. Note that {\small $\mathscr{E}_{m}^{i}-\xi_{m}^{i;F}=\frac{1}{R}\underset{u=1}{\overset{R}{\sum}}[\xi_{m,u}^{i;F}-\xi_{m}^{i;F}]$}. We therefore have
\begin{equation*}
{\small
\begin{aligned}
&\mathbb{E}\left[\left(\overset{R}{\underset{u=1}{\sum}}\left[\xi_{m,u}^{i;F}-\xi_{m}^{i;F}\right]\right)^{2}\right]\\
=&\overset{R}{\underset{u,\bar{u}=1}{\sum}}\mathbb{E}\left[\left(\xi_{m,u}^{i;F}-\xi_{m}^{i;F}\right)\left(\xi_{m,\bar{u}}^{i;F}-\xi_{m}^{i;F}\right)\right]\\
=&\overset{R}{\underset{u,\bar{u}=1}{\sum}}\left\{\mathbb{E}\left[\xi_{m,u}^{i;F}\xi_{m,\bar{u}}^{i;F}\right]-\mathbb{E}\left[\xi_{m,u}^{i;F}\xi_{m}^{i;F}\right]\right.\\
&\quad\quad\quad\left.-\mathbb{E}\left[\xi_{m}^{i;F}\xi_{m,\bar{u}}^{i;F}\right]+\mathbb{E}\left[\xi_{m}^{i;F}\xi_{m}^{i;F}\right]\right\}\\
=&\overset{R}{\underset{u=1}{\sum}}\mathbb{E}\left[(\xi_{m,u}^{i;F})^{2}\right]-2R\overset{R}{\underset{u=1}{\sum}}\mathbb{E}\left[\xi_{m,u}^{i;F}\xi_{m}^{i;F}\right]+R^{2}\mathbb{E}\left[(\xi_{m}^{i;F})^{2}\right]\\
&\;+\overset{R}{\underset{\substack{u,\bar{u}=1 \\ u\neq\bar{u}}}{\sum}}\mathbb{E}\left[\xi_{m,u}^{i;F}\xi_{m,\bar{u}}^{i;F}\right]=\left[R^{2}+R\right]\mathbb{E}\left[\left(\xi_{m}^{i;F}\right)^{2}\right].
\end{aligned}
}
\end{equation*}
We justify the last equality. The last equality follows from the fact that, conditioning on {\small $\mathbf{Z}$}, {\small $\xi_{m,u}^{i;F}$} are i.i.d. of {\small $\xi_{m}^{i;F}$} and {\small $\xi_{m,u}^{i;F}$} are i.i.d. of {\small $\xi_{m,\bar{u}}^{i;F}$} for any {\small $u,\;\bar{u}\in\{1,2,\cdots,R\}$}. Indeed, under the given fact, we have
\begin{equation*}
%{\small
\resizebox{0.5\textwidth}{!}{$
\begin{aligned}
&\mathbb{E}\left[\xi_{m,u}^{i;F}\xi_{m}^{i;F}\right]=\mathbb{E}\left[\mathbb{E}\left[\xi_{m,u}^{i;F}\xi_{m}^{i;F}\mid \mathbf{Z}\right]\right]=\mathbb{E}\left[\mathbb{E}\left[\xi_{m,u}^{i;F}\mid \mathbf{Z}\right]\mathbb{E}\left[\xi_{m}^{i;F}\mid \mathbf{Z}\right]\right]\\
=&\mathbb{E}\left[\mathbb{E}\left[\mathbb{E}\left[\xi_{m,u}^{i;F}\mid D, \mathbf{Z}\right]\mid\mathbf{Z}\right]\mathbb{E}\left[\mathbb{E}\left[\xi_{m}^{i;F}\mid D,\mathbf{Z}\right]\mid \mathbf{Z}\right]\right]=0
\end{aligned}
$}
%}
\end{equation*}
and
\begin{equation*}
%{\small
\resizebox{0.5\textwidth}{!}{$
\begin{aligned}
&\mathbb{E}\left[\xi_{m,u}^{i;F}\xi_{m,\bar{u}}^{i;F}\right]=\mathbb{E}\left[\mathbb{E}\left[\xi_{m,u}^{i;F}\xi_{m,\bar{u}}^{i;F}\mid \mathbf{Z}\right]\right]=\mathbb{E}\left[\mathbb{E}\left[\xi_{m,u}^{i;F}\mid \mathbf{Z}\right]\mathbb{E}\left[\xi_{m,\bar{u}}^{i;F}\mid \mathbf{Z}\right]\right]\\
=&\mathbb{E}\left[\mathbb{E}\left[\mathbb{E}\left[\xi_{m,u}^{i;F}\mid D,\mathbf{Z}\right]\mid \mathbf{Z}\right]\mathbb{E}\left[\mathbb{E}\left[\xi_{m,\bar{u}}^{i;F}\mid D,\mathbf{Z}\right]\mid \mathbf{Z}\right]\right]=0.
\end{aligned}
$}
%}
\end{equation*}
%\begin{equation*}
%{\small
%\begin{aligned}
%&\mathbb{E}\left[\xi_{m,u}^{i;F}\xi_{m}^{i;F}\right]=\mathbb{E}\left[\mathbb{E}\left[\xi_{m,u}^{i;F}\xi_{m}^{i;F}\mid D,\mathbf{Z}\right]\right]\\
%=&\mathbb{E}\left[\mathbb{E}\left[\xi_{m,u}^{i;F}D, \mid\mathbf{Z}\right]\mathbb{E}\left[\xi_{m}^{i;F}\mid D,\mathbf{Z}\right]\right]=0
%\end{aligned}
%}
%\end{equation*}
%and
%\begin{equation*}
%{\small
%\begin{aligned}
%&\mathbb{E}\left[\xi_{m,u}^{i;F}\xi_{m,\bar{u}}^{i;F}\right]=\mathbb{E}\left[\mathbb{E}\left[\xi_{m,u}^{i;F}\xi_{m,\bar{u}}^{i;F}\mid D,\mathbf{Z}\right]\right]\\
%=&\mathbb{E}\left[\mathbb{E}\left[\xi_{m,u}^{i;F}\mid D,\mathbf{Z}\right]\mathbb{E}\left[\xi_{m,\bar{u}}^{i;F}\mid D,\mathbf{Z}\right]\right]=0.
%\end{aligned}
%}
%\end{equation*}
Consequently, we have
\begin{equation*}
{\small
\begin{aligned}
\mathbb{E}\left[\left(\mathscr{E}_{m}^{i}-\xi_{m}^{i;F}\right)^{2}\right]=\left(1+\frac{1}{R}\right)\mathbb{E}\left[\left(\xi_{m}^{i;F}\right)^{2}\right].
\end{aligned}
}
\end{equation*}
Thus, we have
\begin{equation}
{\small
\begin{aligned}\label{eqt:summation R bound}
\mathbb{P}\left\{\left|\kappa_{N}^{i;F}-\kappa_{R,N}^{i;F}\right|\geq\epsilon\right\}&\leq \frac{\frac{1}{N^{2}}\underset{m\in \mathscr{I}^{c}}{\overset{}{\sum}}\left(1+\frac{1}{R}\right)\mathbb{E}\left[\left(\xi_{m}^{i;F}\right)^{2}\right]}{\epsilon^{2}}\\
&\leq \frac{\left(1+\frac{1}{R}\right)\mathbb{E}\left[\left(\xi^{i;F}\right)^{2}\right]}{N\epsilon^{2}}.
\end{aligned}
}
\end{equation}
We notice that no matter we set $R\rightarrow \infty$ followed by $N\rightarrow \infty$ or vice versa, or we fix $R$ but let $N\rightarrow \infty$, we see that {\small $\mathbb{P}\left\{\left|\kappa_{N}^{i;F}-\kappa_{R,N}^{i;F}\right|\geq\epsilon\right\} \rightarrow 0$}.
%where $\sigma^{i}:=\mathbb{E}\left[\left(\xi_{m}^{i;F}\right)^{2}\right]$.
%We notice that no matter we set $R\rightarrow \infty$ followed by $N\rightarrow \infty$ or vice versa, or we fix $R$ but letting $N\rightarrow \infty$, we see that {\small $\mathbb{P}\left\{\left|\kappa_{N}^{i;F}-\kappa_{N}^{i;CF}\right|\geq\epsilon\right\}$} converges to $0$ from \eqref{eqt:statistical difference chebyshev}.
\end{proof}
Now, we are ready to investigate if the estimator $\hat{\theta}_{N}^{i}$ is a consistent estimator of $\theta^{i}$. Our goal is to show that $\mathbb{P}_{\hat{\rho}}\left\{\left|\hat{\theta}_{N}^{i}-\theta^{i}\right| \geq \epsilon\right\}\overset{p}{\rightarrow} 0$. Before presenting the proof, we notice that $\left(\hat{\xi}^{i;F}\overset{d}{=}\hat{\xi}^{i;CF}\right)\mid\mathbf{Z}$  when $\hat{g}^{i}$ and $g^{i}$ satisfie Assumption \ref{RCL assumption}. This is because that $\hat{\xi}^{i;F}=Y^{i;F}-\hat{g}^{i}(\mathbf{Z})=\xi^{i;F} + g^{i}(\mathbf{Z})-\hat{g}^{i}(\mathbf{Z})$, $\hat{\xi}^{i;CF}=Y^{i;CF}-\hat{g}^{i}(\mathbf{Z})=\xi^{i;CF} + g^{i}(\mathbf{Z})-\hat{g}^{i}(\mathbf{Z})$, {\small $(\xi^{i;F} \perp \!\!\! \perp D) \mid \mathbf{Z}$}, {\small $(\xi^{i;CF} \perp \!\!\! \perp D) \mid \mathbf{Z}$}, and $\left(\xi^{i;F}\overset{d}{=}\xi^{i;CF}\right)\mid\mathbf{Z}$.
\begin{proof}
$\forall\epsilon>0$, we have
\begin{equation*}
{\small
\begin{aligned}
&\mathbb{P}_{\hat{\rho}}\left\{\left|\hat{\theta}_{N}^{i}-\theta^{i}\right| \geq \epsilon\right\}\\
=&\mathbb{P}_{\hat{\rho}}\left\{\left|\hat{\theta}_{N}^{i}-\hat{\bar{\theta}}_{N}^{i}+\hat{\bar{\theta}}_{N}^{i}-\theta^{i}\right| \geq \epsilon\right\}\\
\leq&\mathbb{P}_{\hat{\rho}}\left\{\left|\hat{\theta}_{N}^{i}-\hat{\bar{\theta}}_{N}^{i}\right| \geq \frac{\epsilon}{2}\right\}+\mathbb{P}_{\hat{\rho}}\left\{\left|\hat{\bar{\theta}}_{N}^{i}-\theta^{i}\right| \geq \frac{\epsilon}{2}\right\}.
\end{aligned}
}
\end{equation*}
Since {\small $\left(\xi^{i;F}\overset{d}{=}\xi^{i;CF}\right)\mid\mathbf{Z}$} and {\small $\left(\hat{\xi}^{i;F}\overset{d}{=}\hat{\xi}^{i;CF}\right)\mid\mathbf{Z}$}, we have {\small $\hat{\bar{\theta}}_{N}^{i}\overset{d}{=}\hat{\tilde{\theta}}_{N}^{i}$} by Lemma \ref{lemma:simple lemma2}. Moreover, we know that {\small $\mathbb{P}_{\hat{\rho}}\left\{\left|\hat{\tilde{\theta}}_{N}^{i}-\theta^{i}\right| \geq \frac{\epsilon}{2}\right\}\overset{p}{\rightarrow}0$} under the assumptions given in \cite{mackey2018orthogonal}. Together with the fact that {\small $\hat{\bar{\theta}}_{N}^{i}\overset{d}{=}\hat{\tilde{\theta}}_{N}^{i}$}, we have {\small $\mathbb{P}_{\hat{\rho}}\left\{\left|\hat{\bar{\theta}}_{N}^{i}-\theta^{i}\right| \geq \frac{\epsilon}{2}\right\}\overset{p}{\rightarrow}0$} by Lemma \ref{lemma:simple lemma}. We turn to consider the quantity {\small $\mathbb{P}_{\hat{\rho}}\left\{\left|\hat{\theta}_{N}^{i}-\hat{\bar{\theta}}_{N}^{i}\right| \geq \frac{\epsilon}{2}\right\}$}, and we aim to show that {\small $\mathbb{P}_{\hat{\rho}}\left\{\left|\hat{\theta}_{N}^{i}-\hat{\bar{\theta}}_{N}^{i}\right| \geq \frac{\epsilon}{2}\right\}\overset{p}{\rightarrow} 0$}. Notice that
\begin{equation}
{\small
\begin{aligned}\label{eqt:consistent proof estimator decomp}
&\mathbb{P}_{\hat{\rho}}\left\{\left|\hat{\theta}_{N}^{i}-\hat{\bar{\theta}}_{N}^{i}\right| \geq \frac{\epsilon}{2}\right\}=\mathbb{P}_{\hat{\rho}}\left\{\left|\hat{\kappa}_{R,N}^{i;F}-\hat{\kappa}_{N}^{i;F}\right| \geq \frac{\epsilon}{2}\right\}\\
\leq&\underbrace{\mathbb{P}_{\hat{\rho}}\left\{\left|\hat{\kappa}_{R,N}^{i;F}-\kappa_{R,N}^{i;F}\right| \geq \frac{\epsilon}{8}\right\}}_{(a)}+\underbrace{\mathbb{P}_{\hat{\rho}}\left\{\left|\kappa_{R,N}^{i;F}-\kappa_{N}^{i;F}\right| \geq \frac{\epsilon}{8}\right\}}_{(b)}\\
&\;+\underbrace{\mathbb{P}_{\hat{\rho}}\left\{\left|\kappa_{N}^{i;F}-\kappa_{N}^{i;CF}\right| \geq \frac{\epsilon}{8}\right\}}_{(c)}+\underbrace{\mathbb{P}_{\hat{\rho}}\left\{\left|\kappa_{N}^{i;CF}-\hat{\kappa}_{N}^{i;F}\right| \geq \frac{\epsilon}{8}\right\}}_{(d)}.
\end{aligned}
}
\end{equation}
Note that {\small $\left|\kappa_{R,N}^{i;F}-\kappa_{N}^{i;F}\right|$} and {\small $\left|\kappa_{N}^{i;F}-\kappa_{N}^{i;CF}\right|$} do not incorporate any terms related to the estimated function {\small $\hat{\rho}$}. From Proposition \ref{lemma:difference from the statistical standpoint} and Proposition \ref{lemma:unbiasedness and consistency}, we conclude that (\ref{eqt:consistent proof estimator decomp}b) and (\ref{eqt:consistent proof estimator decomp}c) converge to {\small $0$} in probability respectively. 
It remains to show the convergence of (\ref{eqt:consistent proof estimator decomp}a) and (\ref{eqt:consistent proof estimator decomp}d).
%It remains to show the convergence of {\small $\mathbb{P}_{\hat{\rho}}\left\{\left|\hat{\kappa}_{R,N}^{i;F}-\kappa_{R,N}^{i;F}\right| \geq \frac{\epsilon}{8}\right\}$} and {\small $\mathbb{P}_{\hat{\rho}}\left\{\left|\kappa_{N}^{i;CF}-\hat{\kappa}_{N}^{i;F}\right| \geq \frac{\epsilon}{8}\right\}$}.
Consider (\ref{eqt:consistent proof estimator decomp}d) first. Since
\begin{equation*}
{\small
\begin{aligned}
&\left|\kappa_{N}^{i;CF}-\hat{\kappa}_{N}^{i;F}\right|=\left|\frac{1}{N}\underset{m\in \mathscr{I}^{c}}{\sum}\xi_{m}^{i;CF}A_{m}^{i}-\frac{1}{N}\underset{m\in \mathscr{I}^{c}}{\sum}\hat{\xi}_{m}^{i;F}\hat{A}_{m}^{i}\right|\\
\leq&\underbrace{\left|\frac{1}{N}\underset{m\in \mathscr{I}^{c}}{\sum}(\xi_{m}^{i;CF}A_{m}^{i}-\xi_{m}^{i;F}A_{m}^{i})\right|}_{\Gamma_{1}}\\
&\;+\underbrace{\left|\frac{1}{N}\underset{m\in \mathscr{I}^{c}}{\sum}(\xi_{m}^{i;F}A_{m}^{i}-\hat{\xi}_{m}^{i;F}A_{m}^{i})\right|}_{\Gamma_{2}}+\underbrace{\left|\frac{1}{N}\underset{m\in \mathscr{I}^{c}}{\sum}\hat{\xi}_{m}^{i;F}(A_{m}^{i}-\hat{A}_{m}^{i})\right|}_{\Gamma_{3}},
\end{aligned}
}
\end{equation*}
(\ref{eqt:consistent proof estimator decomp}d) is bounded above by
\begin{equation}
{\small
\begin{aligned}
&\underbrace{\mathbb{P}_{\hat{\rho}}\left\{\Gamma_{1} \geq \frac{\epsilon}{24}\right\}}_{(a)}+\underbrace{\mathbb{P}_{\hat{\rho}}\left\{\Gamma_{2} \geq \frac{\epsilon}{24}\right\}}_{(b)}+\underbrace{\mathbb{P}_{\hat{\rho}}\left\{\Gamma_{3} \geq \frac{\epsilon}{24}\right\}}_{(c)}.\label{eqt:consistency check equation}
\end{aligned}
}\\
\end{equation}
(\ref{eqt:consistency check equation}a) converges to $0$ in probability due to Proposition \ref{lemma:difference from the statistical standpoint}. We study the quantities (\ref{eqt:consistency check equation}b) and (\ref{eqt:consistency check equation}c). 
	
(\ref{eqt:consistency check equation}b) can be further bounded. If {\small $N^{c}$} is the size of {\small $\mathscr{I}^{c}$}, then we have
%(\ref{eqt:consistency check equation}b) can be further bounded. If {\small $N^{c}$} is the size of {\small $\mathfrak{D}_{i}^{c}\cap\mathscr{I}$}, then we have
\begin{equation*}
{\small
\begin{aligned}
\Gamma_{2}=&\left|\frac{1}{N}\underset{m\in \mathscr{I}^{c}}{\sum}(\xi_{m}^{i;F}A_{m}^{i}-\hat{\xi}_{m}^{i;F}A_{m}^{i})\right|\\
%=&\left|\frac{1}{N}\underset{m\in \mathscr{I}^{c}}{\sum}\xi_{m}^{i;F}A_{m}^{i}-\frac{N^{c}}{N}\mathbb{E}_{\hat{\rho}}\left[\xi^{i;F}A^{i}\right]+\frac{N^{c}}{N}\mathbb{E}_{\hat{\rho}}\left[\xi^{i;F}A^{i}\right]-\frac{N^{c}}{N}\mathbb{E}_{\hat{\rho}}\left[\hat{\xi}^{i;F}A^{i}\right]+\frac{N^{c}}{N}\mathbb{E}_{\hat{\rho}}\left[\hat{\xi}^{i;F}A^{i}\right]-\frac{1}{N}\underset{m\in \mathscr{I}^{c}}{\sum}\hat{\xi}_{m}^{i;F}A_{m}^{i}\right|\\
\leq&\underbrace{\left|\frac{1}{N}\underset{m\in \mathscr{I}^{c}}{\sum}\xi_{m}^{i;F}A_{m}^{i}-\frac{N^{c}}{N}\mathbb{E}_{\hat{\rho}}\left[\xi^{i;F}A^{i}\right]\right|}_{\Gamma_{2;1}}\\
&\;+\underbrace{\left|\frac{N^{c}}{N}\mathbb{E}_{\hat{\rho}}\left[\xi^{i;F}A^{i}\right]-\frac{N^{c}}{N}\mathbb{E}_{\hat{\rho}}\left[\hat{\xi}^{i;F}A^{i}\right]\right|}_{\Gamma_{2;2}}\\
&\;+\underbrace{\left|\frac{N^{c}}{N}\mathbb{E}_{\hat{\rho}}\left[\hat{\xi}^{i;F}A^{i}\right]-\frac{1}{N}\underset{m\in \mathscr{I}^{c}}{\sum}\hat{\xi}_{m}^{i;F}A_{m}^{i}\right|}_{\Gamma_{2;3}}.
\end{aligned}
}
\end{equation*}
We see that (\ref{eqt:consistency check equation}b) can be further bounded by
\begin{equation}
{\small
\begin{aligned}
&\underbrace{\mathbb{P}_{\hat{\rho}}\left\{\Gamma_{2;1} \geq \frac{\epsilon}{72}\right\}}_{(a)}+\underbrace{\mathbb{P}_{\hat{\rho}}\left\{\Gamma_{2;2} \geq \frac{\epsilon}{72}\right\}}_{(b)}+\underbrace{\mathbb{P}_{\hat{\rho}}\left\{\Gamma_{2;3} \geq \frac{\epsilon}{72}\right\}}_{(c)}.\label{eqt:term 1 check}
\end{aligned}
}
\end{equation}
We investigate if (\ref{eqt:term 1 check}a), (\ref{eqt:term 1 check}b), and (\ref{eqt:term 1 check}c) converge to $0$ in probability. We consider (\ref{eqt:term 1 check}a) first. Recall the assumptions that {\small $(\xi^{i;F} \perp \!\!\! \perp D) \mid \mathbf{Z}$}, {\small $\left(\xi^{i;F}\overset{d}{=}\xi^{i;CF}\right) \mid \mathbf{Z}$} and {\small $(\xi^{i;CF} \perp \!\!\! \perp D) \mid \mathbf{Z}$}, we have {\small $\frac{1}{N}\underset{m\in \mathscr{I}^{c}}{\sum}\xi_{m}^{i;F}A_{m}^{i}\overset{d}{=}\frac{1}{N}\underset{m\in \mathscr{I}^{c}}{\sum}\xi_{m}^{i;CF}A_{m}^{i}$} by Lemma \ref{lemma:simple lemma2}.	Since {\small $\Gamma_{2;1}=\frac{N^{c}}{N}\left|\frac{1}{N^{c}}\underset{m\in \mathscr{I}^{c}}{\sum}\xi_{m}^{i;F}A_{m}^{i}-\mathbb{E}_{\hat{\rho}}\left[\xi^{i;F}A^{i}\right]\right|$}, we have
\begin{equation*}
{\small
\begin{aligned}
&\mathbb{P}_{\hat{\rho}}\left\{\Gamma_{2;1}\geq\frac{\epsilon}{72}\right\}\\
%=&\mathbb{P}_{\hat{\rho}}\left\{\frac{N^{c}}{N}\left|\frac{1}{N^{c}}\underset{m\in \mathscr{I}^{c}}{\sum}\xi_{m}^{i;F}A_{m}^{i}-\mathbb{E}_{\hat{\rho}}\left[\xi^{i;F}A^{i}\right]\right|\geq\frac{\epsilon}{72}\right\}\\
=&\mathbb{P}_{\hat{\rho}}\left\{\left|\frac{1}{N^{c}}\underset{m\in \mathscr{I}^{c}}{\sum}\xi_{m}^{i;F}A_{m}^{i}-\mathbb{E}_{\hat{\rho}}\left[\xi^{i;F}A^{i}\right]\right|\geq\frac{\epsilon}{72}\cdot\frac{N}{N^{c}}\right\}\\
\leq& \mathbb{P}_{\hat{\rho}}\left\{\left|\frac{1}{N^{c}}\underset{m\in \mathscr{I}^{c}}{\sum}\xi_{m}^{i;F}A_{m}^{i}-\mathbb{E}_{\hat{\rho}}\left[\xi^{i;F}A^{i}\right]\right|\geq\frac{\epsilon}{72}\right\}\\
\leq&\frac{\mathbb{E}_{\hat{\rho}}\left[\left|\frac{1}{N^{c}}\underset{m\in \mathscr{I}^{c}}{\sum}\xi_{m}^{i;F}A_{m}^{i}-\mathbb{E}_{\hat{\rho}}\left[\xi^{i;F}A^{i}\right]\right|^{2}\right]}{\left(\frac{\epsilon}{72}\right)^{2}}.
\end{aligned}
}
\end{equation*}
Consider {\small $\mathbb{E}_{\hat{\rho}}\left[\left|\frac{1}{N^{c}}\underset{m\in \mathscr{I}^{c}}{\sum}\xi_{m}^{i;F}A_{m}^{i}-\mathbb{E}_{\hat{\rho}}\left[\xi^{i;F}A^{i}\right]\right|^{2}\right]$}. 

Note that it equals
\begin{equation*}
%{\small
\resizebox{0.5\textwidth}{!}{$
\begin{aligned}
&\frac{1}{(N^{c})^{2}}\underset{m\in \mathscr{I}^{c}}{\sum}\mathbb{E}_{\hat{\rho}}\left[\left|\xi_{m}^{i;F}A_{m}^{i}-\mathbb{E}_{\hat{\rho}}\left[\xi^{i;F}A^{i}\right]\right|^{2}\right]\\
%=&\frac{1}{\left(N^{c}\right)^{2}}\underset{m\in \mathscr{I}^{c}}{\sum}\mathbb{E}_{\hat{\rho}}\left[\left|\xi_{m}^{i;F}A_{m}^{i}-\mathbb{E}_{\hat{\rho}}\left[\xi^{i;F}A^{i}\right]\right|^{2}\right]\\
&\;+\frac{1}{\left(N^{c}\right)^{2}}\underset{\substack{m,\bar{m}\in \mathscr{I}^{c}\\m\neq\bar{m}}}{\sum}\mathbb{E}_{\hat{\rho}}\left[(\xi_{m}^{i;F}A_{m}^{i}-\mathbb{E}_{\hat{\rho}}[\xi^{i;F}A^{i}])(\xi_{\bar{m}}^{i;F}A_{\bar{m}}^{i}-\mathbb{E}_{\hat{\rho}}[\xi^{i;F}A^{i}])\right]\\
=&\frac{1}{\left(N^{c}\right)^{2}}\underset{m\in \mathscr{I}^{c}}{\sum}\mathbb{E}_{\hat{\rho}}\left[\left(A_{m}^{i}\right)^{2}\mathbb{E}_{\hat{\rho}}\left[\left(\xi_{m}^{i;F}\right)^{2}\mid D,\mathbf{Z}\right]\right]\\
&\;+\frac{1}{\left(N^{c}\right)^{2}}\underset{\substack{m,\bar{m}\in \mathscr{I}^{c}\\m\neq\bar{m}}}{\sum}\mathbb{E}_{\hat{\rho}}\left[A_{m}^{i}A_{\bar{m}}^{i}\mathbb{E}_{\hat{\rho}}\left[\xi_{m}^{i;F}\mid D,\mathbf{Z}\right]\mathbb{E}_{\hat{\rho}}\left[\xi_{\bar{m}}^{i;F}\mid D,\mathbf{Z}\right]\right]\\
=&\frac{1}{\left(N^{c}\right)^{2}}\underset{m\in \mathscr{I}^{c}}{\sum}\mathbb{E}_{\hat{\rho}}\left[\left(A_{m}^{i}\right)^{2}\mathbb{E}_{\hat{\rho}}\left[\left(\xi_{m}^{i;F}\right)^{2}\mid\mathbf{Z}\right]\right]\\
=&\frac{1}{N^{c}}\mathbb{E}_{\hat{\rho}}\left[\left(A^{i}\right)^{2}\mathbb{E}_{\hat{\rho}}\left[\left(\xi^{i;F}\right)^{2}\mid\mathbf{Z}\right]\right].
\end{aligned}
$}
%}
\end{equation*}

Since {\small $A^{i}$} and {\small $\xi^{i;F}$} do not include the estimated nuisance parameters, {\small $\mathbb{E}_{\hat{\rho}}\left[\left(A^{i}\right)^{2}\mathbb{E}_{\hat{\rho}}\left[\left(\xi^{i;F}\right)^{2}\mid\mathbf{Z}\right]\right]$} is a constant. Moreover, note that {\small $N^{c}\rightarrow \infty$} when {\small $N\rightarrow \infty$}, we have
\begin{equation*}
{\small
\begin{aligned}
\mathbb{P}_{\hat{\rho}}\left\{\Gamma_{2;1}\geq\frac{\epsilon}{72}\right\}\leq\frac{72^2\;\mathbb{E}_{\hat{\rho}}\left[\left(A^{i}\right)^{2}\mathbb{E}_{\hat{\rho}}\left[\left(\xi^{i;F}\right)^{2}\mid\mathbf{Z}\right]\right]}{\epsilon^{2}N^{c}}\overset{p}{\rightarrow}0.
\end{aligned}
}
\end{equation*}
Now, we consider (\ref{eqt:term 1 check}b). Indeed, we have
\begin{equation*}
{\small
\begin{aligned}
&\mathbb{P}_{\hat{\rho}}\left\{\Gamma_{2;2}\geq\frac{\epsilon}{72}\right\}\\
=&\mathbb{P}_{\hat{\rho}}\left\{\left|\mathbb{E}_{\hat{\rho}}\left[\xi^{i;F}A^{i}\right]-\mathbb{E}_{\hat{\rho}}\left[\hat{\xi}^{i;F}A^{i}\right]\right|\geq\frac{\epsilon}{72}\cdot\frac{N}{N^{c}}\right\}\\
\leq&\mathbb{P}_{\hat{\rho}}\left\{\left|\mathbb{E}_{\hat{\rho}}\left[(\xi^{i;F}-\hat{\xi}^{i;F})A^{i}\right]\right|\geq\frac{\epsilon}{72}\right\}\\
\leq&\frac{72^2\left\{\mathbb{E}_{\hat{\rho}}\left[(\xi^{i;F}-\hat{\xi}^{i;F})A^{i}\right]\right\}^{2}}{\epsilon^{2}}\\
\leq&\frac{72^2\;\left\{\mathbb{E}_{\hat{\rho}}\left[(\xi^{i;F}-\hat{\xi}^{i;F})^{4q}\right]\right\}^{\frac{1}{2q}}\left\{\mathbb{E}_{\hat{\rho}}\left[(A^{i})^{\frac{4q}{4q-1}}\right]\right\}^{2-\frac{1}{2q}}}{\epsilon^{2}}\overset{p}{\rightarrow}0.
\end{aligned}
}
\end{equation*}
Here, the last inequality follows from the H\"{o}lders inequality, while the convergence holds $\forall q \in \{1,2,\dots,k\}$ according to  Assumption 1.5 of \cite{mackey2018orthogonal}. Finally, we consider (\ref{eqt:term 1 check}c). We can rewrite {\small $\Gamma_{2;3}$} as
\begin{equation*}
{\small
\begin{aligned}
\Gamma_{2;3}=\frac{N^{c}}{N}\left|\mathbb{E}_{\hat{\rho}}\left[\hat{\xi}^{i;F}A^{i}\right]-\frac{1}{N^{c}}\underset{m\in \mathscr{I}^{c}}{\sum}\hat{\xi}_{m}^{i;F}A_{m}^{i}\right|.
\end{aligned}
}
\end{equation*}
Now, we have
\begin{equation*}
{\small
\begin{aligned}
&\mathbb{P}_{\hat{\rho}}\left\{\Gamma_{2;3}\geq\frac{\epsilon}{72}\right\}\\
\leq&\mathbb{P}_{\hat{\rho}}\left\{\left|\mathbb{E}_{\hat{\rho}}\left[\hat{\xi}^{i;F}A^{i}\right]-\frac{1}{N^{c}}\underset{m\in \mathscr{I}^{c}}{\sum}\hat{\xi}_{m}^{i;F}A_{m}^{i}\right|\geq\frac{\epsilon}{72}\right\}\\
\leq&\frac{72^2\;\mathbb{E}_{\hat{\rho}}\left[\left\{\underset{m\in \mathscr{I}^{c}}{\sum}\left(\mathbb{E}_{\hat{\rho}}\left[\hat{\xi}^{i;F}A^{i}\right]-\hat{\xi}_{m}^{i;F}A_{m}^{i}\right)\right\}^{2}\right]}{\epsilon^{2}\left(N^{c}\right)^{2}}\\
=&\frac{72^2\;\underset{m\in \mathscr{I}^{c}}{\sum}\;\mathbb{E}_{\hat{\rho}}\left[\left(\mathbb{E}_{\hat{\rho}}\left[\hat{\xi}^{i;F}A^{i}\right]-\hat{\xi}_{m}^{i;F}A_{m}^{i}\right)^{2}\right]}{\epsilon^{2}\left(N^{c}\right)^{2}}\\
&\;+\frac{72^2\;\underset{\substack{m,\bar{m}\in \mathscr{I}^{c}\\m\neq\bar{m}}}{\sum}\mathbb{E}_{\hat{\rho}}[(\hat{\xi}_{m}^{i;F}-\xi_{m}^{i;F})A_{m}^{i}]\mathbb{E}_{\hat{\rho}}[(\hat{\xi}_{\bar{m}}^{i;F}-\xi_{\bar{m}}^{i;F})A_{\bar{m}}^{i}]}{\epsilon^{2}\left(N^{c}\right)^{2}}\\
&\;-2\frac{72^2\;\underset{\substack{m,\bar{m}\in \mathscr{I}^{c}\\m<\bar{m}}}{\sum}\mathbb{E}_{\hat{\rho}}[(\hat{\xi}_{m}^{i;F}-\xi_{m}^{i;F})A_{m}^{i}]\mathbb{E}_{\hat{\rho}}[(\hat{\xi}^{i;F}-\xi^{i;F})A^{i}]}{\epsilon^{2}\left(N^{c}\right)^{2}}\\
&\;+\frac{72^2\;\underset{\substack{m,\bar{m}\in \mathscr{I}^{c}\\m\neq\bar{m}}}{\sum}\left\{\mathbb{E}_{\hat{\rho}}\left[\left(\hat{\xi}^{i;F}-\xi^{i;F}\right)A^{i}\right]\right\}^{2}}{\epsilon^{2}\left(N^{c}\right)^{2}}.
\end{aligned}
}
\end{equation*}
Using Assumption 1.5 of \cite{mackey2018orthogonal}, we can conclude that {\small $\mathbb{P}_{\hat{\rho}}\left\{\Gamma_{2;3}\geq\frac{\epsilon}{72}\right\}\overset{p}{\rightarrow}0$.} Next, we come to bound (\ref{eqt:consistency check equation}c). Since {\small $N^{c}$} is the size of {\small $\mathscr{I}^{c}$} and
%Using Assumption 1.5 of \cite{mackey2018orthogonal}, we can conclude that {\small $\mathbb{P}_{\hat{\rho}}\left\{\Gamma_{2;3}\geq\frac{\epsilon}{72}\right\}\overset{p}{\rightarrow}0$.} Next, we come to bound (\ref{eqt:consistency check equation}c). Again, we denote {\small $N^{c}$} as the size of {\small $\mathfrak{D}_{i}^{c}\cap\mathscr{I}$}. Since
\begin{equation*}
{\small
\begin{aligned}
&\Gamma_{3}=\left|\frac{1}{N}\underset{m\in \mathscr{I}^{c}}{\sum}\hat{\xi}_{m}^{i;F}A_{m}^{i}-\frac{1}{N}\underset{m\in \mathscr{I}^{c}}{\sum}\hat{\xi}_{m}^{i;F}\hat{A}_{m}^{i}\right|\\
%=&\left|\frac{1}{N}\underset{m\in \mathscr{I}^{c}}{\sum}\hat{\xi}_{m}^{i;F}A_{m}^{i}-\frac{N^{c}}{N}\mathbb{E}_{\hat{\rho}}\left[\hat{\xi}^{i;F}A^{i}\right]+\frac{N^{c}}{N}\mathbb{E}_{\hat{\rho}}\left[\hat{\xi}^{i;F}A^{i}\right]-\frac{N^{c}}{N}\mathbb{E}_{\hat{\rho}}\left[\hat{\xi}^{i;F}\hat{A}^{i}\right]+\frac{N^{c}}{N}\mathbb{E}_{\hat{\rho}}\left[\hat{\xi}^{i;F}\hat{A}^{i}\right]-\frac{1}{N}\underset{m\in \mathscr{I}^{c}}{\sum}\hat{\xi}_{m}^{i;F}\hat{A}_{m}^{i}\right|\\
\leq&\left|\frac{1}{N}\underset{m\in \mathscr{I}^{c}}{\sum}\hat{\xi}_{m}^{i;F}A_{m}^{i}-\frac{N^{c}}{N}\mathbb{E}_{\hat{\rho}}\left[\hat{\xi}^{i;F}A^{i}\right]\right|\\
&\;+\left|\frac{N^{c}}{N}\mathbb{E}_{\hat{\rho}}\left[\hat{\xi}^{i;F}A^{i}\right]-\frac{N^{c}}{N}\mathbb{E}_{\hat{\rho}}\left[\hat{\xi}^{i;F}\hat{A}^{i}\right]\right|\\
&\;+\left|\frac{N^{c}}{N}\mathbb{E}_{\hat{\rho}}\left[\hat{\xi}^{i;F}\hat{A}^{i}\right]-\frac{1}{N}\underset{m\in \mathscr{I}^{c}}{\sum}\hat{\xi}_{m}^{i;F}\hat{A}_{m}^{i}\right|\\
=&\frac{N^{c}}{N}\underbrace{\left|\frac{1}{N^{c}}\underset{m\in \mathscr{I}^{c}}{\sum}\hat{\xi}_{m}^{i;F}A_{m}^{i}-\mathbb{E}_{\hat{\rho}}\left[\hat{\xi}^{i;F}A^{i}\right]\right|}_{\Gamma_{3;1}}\\
&\;+\frac{N^{c}}{N}\underbrace{\left|\mathbb{E}_{\hat{\rho}}\left[\hat{\xi}^{i;F}A^{i}\right]-\mathbb{E}_{\hat{\rho}}\left[\hat{\xi}^{i;F}\hat{A}^{i}\right]\right|}_{\Gamma_{3;2}}\\
&\;+\frac{N^{c}}{N}\underbrace{\left|\mathbb{E}_{\hat{\rho}}\left[\hat{\xi}^{i;F}\hat{A}^{i}\right]-\frac{1}{N^{c}}\underset{m\in \mathscr{I}^{c}}{\sum}\hat{\xi}_{m}^{i;F}\hat{A}_{m}^{i}\right|}_{\Gamma_{3;3}},
\end{aligned}
}
\end{equation*}
we see that (\ref{eqt:consistency check equation}c) can be further bounded by
\begin{equation}
{\small
\begin{aligned}
&\underbrace{\mathbb{P}_{\hat{\rho}}\left\{\Gamma_{3;1} \geq \frac{\epsilon}{72}\right\}}_{(a)}+\underbrace{\mathbb{P}_{\hat{\rho}}\left\{\Gamma_{3;2} \geq \frac{\epsilon}{72}\right\}}_{(b)}+\underbrace{\mathbb{P}_{\hat{\rho}}\left\{\Gamma_{3;3} \geq \frac{\epsilon}{72}\right\}}_{(c)}.\label{eqt:term 2 check}
\end{aligned}
}
\end{equation}
Similarly, we can prove that (\ref{eqt:term 2 check}a) and (\ref{eqt:term 2 check}c) converge to $0$ in probability when $N\rightarrow \infty$ using the arguments in proving that (\ref{eqt:term 1 check}a) and (\ref{eqt:term 1 check}c) converge to $0$. As a result, the quantity (\ref{eqt:consistent proof estimator decomp}d) converges to $0$ in probability when $N\rightarrow \infty$.
	
Lastly, we turn to consider the quantity (\ref{eqt:consistent proof estimator decomp}a). In fact, we have
\begin{subequations}
{\small
\begin{align}
&\mathbb{P}_{\hat{\rho}}\left\{\left|\hat{\kappa}_{R,N}^{i;F}-\kappa_{R,N}^{i;F}\right|\geq\frac{\epsilon}{8}\right\}\nonumber\\
\leq&\mathbb{P}_{\hat{\rho}}\left\{\left|\frac{1}{N}\underset{m\in \mathscr{I}^{c}}{\overset{}{\sum}}\frac{1}{R}\underset{u=1}{\overset{R}{\sum}}\left(\hat{\xi}_{m,u}^{i;F}\right)(\hat{A}_{m}^{i}-A_{m}^{i})\right|\geq\frac{\epsilon}{16}\right\}\label{eqt:Chebyshev Inequality quantity 2 term 1}\\
&\;+\mathbb{P}_{\hat{\rho}}\left\{\left|\frac{1}{N}\underset{m\in \mathscr{I}^{c}}{\overset{}{\sum}}A_{m}^{i}\frac{1}{R}\underset{u=1}{\overset{R}{\sum}}\left[(\hat{\xi}_{m,u}^{i;F}-\xi_{m,u}^{i;F})\right]\right|\geq\frac{\epsilon}{16}\right\}.\label{eqt:Chebyshev Inequality quantity 2 term 2}
\end{align}
}\noindent
\end{subequations}
We can argue that \eqref{eqt:Chebyshev Inequality quantity 2 term 1} converges to $0$ in probability as $N\rightarrow \infty$ using similar arguments when we prove that (\ref{eqt:consistency check equation}b) converges to $0$ in probability. Simultaneously, we can argue \eqref{eqt:Chebyshev Inequality quantity 2 term 2} converges to $0$ in probability as $N\rightarrow \infty$ using similar arguments when we prove that (\ref{eqt:consistency check equation}c) converges to $0$ in probability. Consequently, we have $\hat{\kappa}_{R,N}^{i;F}-\kappa_{R,N}^{i;F}$ converges to $0$ in probability.

The proof is completed.
\end{proof}

\bibliographystyle{IEEEtran}
\bibliography{ijcnn}

%\begin{thebibliography}{00}
%\bibitem{b1} G. Eason, B. Noble, and I. N. Sneddon, ``On certain integrals of Lipschitz-Hankel type involving products of Bessel functions,'' Phil. Trans. Roy. Soc. London, vol. A247, pp. 529--551, April 1955.
%\bibitem{b2} J. Clerk Maxwell, A Treatise on Electricity and Magnetism, 3rd ed., vol. 2. Oxford: Clarendon, 1892, pp.68--73.
%\bibitem{b3} I. S. Jacobs and C. P. Bean, ``Fine particles, thin films and exchange anisotropy,'' in Magnetism, vol. III, G. T. Rado and H. Suhl, Eds. New York: Academic, 1963, pp. 271--350.
%\bibitem{b4} K. Elissa, ``Title of paper if known,'' unpublished.
%\bibitem{b5} R. Nicole, ``Title of paper with only first word capitalized,'' J. Name Stand. Abbrev., in press.
%\bibitem{b6} Y. Yorozu, M. Hirano, K. Oka, and Y. Tagawa, ``Electron spectroscopy studies on magneto-optical media and plastic substrate interface,'' IEEE Transl. J. Magn. Japan, vol. 2, pp. 740--741, August 1987 [Digests 9th Annual Conf. Magnetics Japan, p. 301, 1982].
%\bibitem{b7} M. Young, The Technical Writer's Handbook. Mill Valley, CA: University Science, 1989.
%\end{thebibliography}
%\vspace{12pt}
%\color{red}
%IEEE conference templates contain guidance text for composing and formatting conference papers. Please ensure that all template text is removed from your conference paper prior to submission to the conference. Failure to remove the template text from your paper may result in your paper not being published.

\end{document}